\definecolor{sand}{rgb}{0.76, 0.7, 0.5}
\definecolor{taupegray}{rgb}{0.55, 0.52, 0.54}
\newcommand{\mathsym}[1]{{}}
\newcommand{\todo}[1]{\vspace{5 mm}\par \noindent
\marginpar{\qquad\textsc{ToDo}} 
\framebox{\begin{minipage}[c]{0.98 \textwidth}
\tt #1 \end{minipage}}\vspace{5 mm}\par}
\newtheorem{theorem}{Theorem}[section]
\newtheorem{definition}[theorem]{Definition}
\newtheorem{lemma}[theorem]{Lemma}
\newtheorem{remark}[theorem]{Remark}
\newtheorem{proposition}[theorem]{Proposition}
\begin{document}
\title{Discrete Dirac structures and discrete Lagrange--Dirac dynamical systems in mechanics}
\author{\hspace{-1cm}
\begin{tabular}{cc}
 Linyu Peng & Hiroaki Yoshimura
\\[2mm] \normalsize   Faculty of Science and Technology & \normalsize   School of Fundamental Science and Engineering
\\ \normalsize   Keio University & \normalsize  Waseda University 
\\  \normalsize  3-14-1 Hiyoshi, Kohoku, Yokohama, Japan\quad & \quad \normalsize  3-4-1 Okubo, Shinjuku, Tokyo, Japan
\\ \normalsize  l.peng@mech.keio.ac.jp  & \normalsize  yoshimura@waseda.jp\\
\end{tabular}\\\\
}

\if0
\author{Linyu Peng$^{1}$ and  Hiroaki Yoshimura$^{2}$}
\addtocounter{footnote}{1}
\footnotetext{Department of Mechanical Engineering, Faculty of Science and Technology, Keio University, Hiyoshi 3-14-1, Kohoku, Yokohama 223-8522, Japan.
\texttt{l.peng@mech.keio.ac.jp}}
\addtocounter{footnote}{1}
\footnotetext{Waseda University, Department of Applied Mechanics and Aerospace Engineering, Department of Mathematics and Applied Mathematics, School of Science and Engineering, Okubo, Shinjuku, Tokyo 169-8555, Japan.
\texttt{yoshimura@waseda.jp}}
\fi

\maketitle

\date{}
\abstract{In this paper, we propose the concept of $(\pm)$-discrete Dirac structures over a manifold, where we define $(\pm)$-discrete two-forms on the manifold and incorporate discrete constraints using $(\pm)$-finite difference maps. Specifically, we develop $(\pm)$-discrete induced Dirac structures as discrete analogues of the induced Dirac structure on the cotangent bundle over a configuration manifold, as described by Yoshimura and Marsden \cite{YoMa2006a}. We demonstrate that $(\pm)$-discrete Lagrange--Dirac systems can be naturally formulated in conjunction with the $(\pm)$-induced Dirac structure on the cotangent bundle. Furthermore, we show that the resulting equations of motion are equivalent to the $(\pm)$-discrete Lagrange--d'Alembert equations proposed in \cite{CoMa2001, McPe2006}. We also clarify the variational structures of the discrete Lagrange--Dirac dynamical systems within the framework of the  $(\pm)$-discrete Lagrange--d'Alembert--Pontryagin principle. Finally, we validate the proposed discrete Lagrange--Dirac systems with some illustrative examples of nonholonomic systems through numerical tests.
}

\tableofcontents

\section{Introduction}

\paragraph{Variational integrators and nonholonomic integrators.}
It is well known that discrete Hamilton's variational principle is one of the structure-preserving numerical integrators, providing a long-time numerically stable scheme for conservative Lagrangian systems. This principle is based on the fact that the discrete Euler--Lagrange equations preserve a discrete symplectic structure, called the discrete Lagrangian two-form, along the discrete Lagrangian map, which is the discrete analogue of the flow of the Lagrangian vector field in the continuous setting.

On the other hand, when nonholonomic constraints are present, as in nonholonomic mechanical systems, the associated equations of motion can be derived using the Lagrange--d'Alembert principle. In the discrete setting, discrete Hamilton's principle is replaced by the discrete Lagrange--d'Alembert principle, from which the discrete Lagrange--d'Alembert equations can be obtained, as discussed in \cite{CoMa2001}. These algorithms, which generalize variational integrators for unconstrained Lagrangian systems, exhibit geometric properties similar to those of continuous nonholonomic systems. From a slightly different perspective, \cite{McPe2006} considered nonholonomic systems that admit reversing symmetries and developed integrators for such systems that preserve an analogous reversing symmetry. In these cases, the numerical integrator, referred to as a "nonholonomic integrator," no longer preserves the discrete symplectic structure. In fact, it is not yet fully understood what structure, if any, is preserved by such a nonholonomic integrator. Nevertheless, it remains a long-term numerically stable scheme for conservative nonholonomic mechanical systems.

\paragraph{Dirac structures in mechanics.} The notion of Dirac structures generalizes presymplectic and almost Poisson structures and was developed by \cite{Do1987, CoWe1988, Co1990a, Co1990b}. Dirac structures can directly incorporate constraint distributions into the framework of Hamiltonian systems. Initially, they were applied to constrained Hamiltonian systems, sometimes referred to as implicit Hamiltonian systems \cite{Co1990a, VaMa1995, BlCr1997, CeVaBa2003, vdSc2006}. Later, \cite{YoMa2006a} explored Dirac dynamical system on the Lagrangian side, focusing on Lagrangian systems with nonholonomic constraints. Specifically, they examined bundle structures over a configuration manifold $Q$, including Tulczyjew's triple $TT^{\ast}Q, T^{\ast}TQ, T^{\ast}T^{\ast}Q$, and introduced the Dirac differential $\mathbf{d}_{D}L: TQ \to T^{\ast}T^{\ast}Q$ for a given Lagrangian on $TQ$, possibly degenerate. This led to the definition of Lagrange--Dirac dynamical systems, which serves as the Lagrangian version of Dirac dynamical systems. The work included some illustrative examples of nonholonomic mechanical systems and electric circuits. Over the past years, various physical systems---such as continuum systems, and nonequilibrium thermodynamic systems as well as nonholonomic mechanical systems and electric circuits---have been formulated within the context of Dirac dynamical systems \cite{YoMa2006a, GBYo2015, GBYo2018a, GBYo2020}.

\paragraph{Variational structures of Dirac dynamical systems.} 
The variational structure of Lagrange--Dirac dynamical systems was further clarified by \cite{YoMa2006b} in the context of the Lagrange--d'Alembert--Pontryagin principle. As previously mentioned, for nonholonomic Lagrangian systems, the equations of motion can be derived from the Lagrange--d'Alembert principle. Specifically, this involves taking variations of the action integral for the Lagrangian, where one must impose variational constraints by choosing variations of curves in the configuration manifold so that the variations lie within the distribution and also satisfy the nonholonomic constraints on the motion of the system. The Lagrange--d'Alembert--Pontryagin principle, an extension of the Lagrange--d'Alembert principle to the Pontryagin bundle $TQ \oplus T^{\ast}Q$, recovers the equations of motion for Lagrange--Dirac dynamical systems. Additionally, the Hamiltonian analogue of Lagrange--Dirac dynamical systems was developed through the induced Dirac structures \cite{YoMa2006b}.

\paragraph{Discretization of Dirac structures.} 
Several approaches to discretizing Dirac structures have been proposed, such as those by \cite{LeOh2011} and \cite{CFTZ2022}. The former approach considered the discretization of induced Dirac structures on vector spaces together with their associated Lagrange--Dirac systems. This work aimed to develop the discrete analogue of the induced Dirac structures and the associated Lagrange--Dirac dynamical systems proposed in \cite{YoMa2006a}. By comparing with the continuous Tulczyjew's triple (see, e.g., \cite{Tu1977}), \cite{LeOh2011} defined a discrete triple, which was then used to introduce their notion of discrete Dirac structures. However, \cite{CFTZ2022}  later noted  that the discrete structures proposed by \cite{LeOh2011} do not constitute true Dirac structures. In contrast, \cite{CFTZ2022} proposed an alternative approach, constructing discrete Dirac structures on a discrete Pontryagin bundle over a configuration manifold, rather than on the cotangent bundle. This alternative approach aims to develop a general Dirac system instead of constructing the implicit Lagrangian and Hamiltonian systems. 

\paragraph{Our goals and the organization of this Paper.} 
The main goal of this paper is to propose an alternative discrete version of the induced Dirac structure on the cotangent bundle over a configuration manifold, enabling the construction of a discrete version of the Lagrange--Dirac dynamical system, consistent with the approach outlined in \cite{YoMa2006a}. We begin by presenting a discretization of continuous Dirac structures on a manifold $M$, defined by a two-form $\Theta_M$ and a distribution $\Delta_M$. This is achieved by introducing finite forward difference and backward difference maps. Next, we extend the concept of continuous Dirac structures to the induced Dirac structure on the cotangent bundle $M=T^\ast Q$. In this context, we introduce $(\pm)$-discrete symplectic forms and $(\pm)$-discrete constraint spaces, which are discrete analogues of the canonical symplectic forms and distributions on $Q$. 

To construct the discrete Lagrange--Dirac systems (or the discrete implicit Lagrangian systems) within the framework of the proposed discrete Dirac structures, we consider the bundle structures between $T^{\ast}T^{\ast}Q$, $T^{\ast}Q \times T^{\ast}Q$ and $T^{\ast}(Q \times Q)$. These structures serve as the discrete analogues of Tulczyjew's triple between $T^{\ast}T^{\ast}Q$, $TT^{\ast}Q$ and $T^{\ast}TQ$ in the continuous setting. We define the discrete flat maps $(\Omega^{d\pm}_{T^\ast Q})^\flat: T^\ast Q \times T^\ast Q \to T^{\ast}T^{\ast}Q$, which are skew-symmetric bundle maps over $T^{\ast}Q$ naturally induced by the discrete canonical symplectic structures $\Omega^{d\pm}_{T^\ast Q}$. We also introduce the discrete analogues of the Dirac differential of the Lagrangian, defined as the maps $d^{\pm}_DL: Q\times Q \to T^{\ast}T^{\ast}Q$. Furthermore, we present the $(\pm)$-associated discrete Lagrange--d'Alembert--Pontryagin principles, which yield the $(\pm)$-discrete equations of motion for the discrete Lagrange--Dirac dynamical systems. Finally, we show that the resulting $(\pm)$-discrete system equations can recover the discrete Lagrange--d'Alembert equations given in \cite{CoMa2001,McPe2006}, clarifying that the resulting nonholonomic integrators preserve the discrete induced Dirac structures. We then validate the proposed approach with numerical tests on two illustrative examples of nonholonomic systems. 
\medskip

This paper is organized as follows:
\begin{itemize}
\item \textrm{\S 2 and \S3}: We provide a brief review of variational formulations in mechanics, covering both continuous and discrete settings. 
\item \S4: We describe the continuous setting of Dirac structures and the associated Lagrange--Dirac systems, in which a Dirac structure is defined by a given distribution on a configuration manifold and the canonical symplectic structure on the cotangent bundle.
\item \S 5: We introduce the concept of $(\pm)$-discrete Dirac structures on a manifold. This includes defining discrete two-forms and  discrete constraint spaces over the manifold in conjunction with $(\pm)$-finite difference maps. In particular, we illustrate the discrete theory by presenting the discrete induced Dirac structures on the cotangent bundle of a configuration manifold.
\item \S6: We develop the notion of $(\pm)$-discrete Lagrange--Dirac dynamical systems within the framework of the $(\pm)$-discrete induced Dirac structures on the cotangent bundle. These formulations are also derived from the $(\pm)$-discrete Lagrange--d'Alembert--Pontryagin principles. We show that the resultant $(\pm)$-discrete Lagrange--d'Alembert--Pontryagin equations are equivalent to the system equations obtained from the $(\pm)$-discrete Lagrange--Dirac dynamical systems. 
\item \S7: We demonstrate that the discrete  Lagrange--d'Alembert equations developed by \cite{CoMa2001,McPe2006} can be recovered from the $(\pm)$-discrete Lagrange--d'Alembert--Pontryagin equations. This implies that the resulting $(\pm)$-nonholonomic integrators preserve the $(\pm)$-induced discrete Dirac structures.
\item \S8: We show the validity of our discrete theory by illustrative examples of a vertical rolling disk on a plane and a classical Heisenberg system through numerical tests.
\end{itemize}

\section{Variational formulation in Lagrangian mechanics}
In this section, we shall make a brief review on the variational formulations in Lagrangian mechanics, namely, the variational principle of Hamilton for unconstrained mechanical systems and the Lagrange--d'Alembert principle for nonholonomic mechanical systems. For further details, for instance, see \cite{MaRa1999,Bl2003,GoPoSa2000,MadeLeDadeDi1996}.

\subsection{Hamilton's variational principle}
\paragraph{The path space and variations of curves.}
Consider a mechanical system with a Lagrangian $L: TQ \to \mathbb{R}$, where $TQ$ is the tangent bundle of an $n$-dimensional  configuration manifold $Q$ with local coordinates $q^i, \, i=1,...,n$ for $q \in Q$. Consider a path space
\begin{equation}\label{PathSpace}
\mathcal{C} (Q) =\{ q:I=[0,T] \to Q \mid q \  \textrm{is a $C^{2}$ curve on $Q$ such that}\  q(0)=q_1\;\textrm{and}\;\ q(T)=q_2 \},
\end{equation}
where $I=[0,T] \subset \mathbb{R}^+$ is the space of time.
\medskip

A point $q$ in the manifold $\mathcal{C}(Q)$ is a curve on $Q$, namely, $q=q(t)$. The deformation of $q=q(t)\in \mathcal{C}(Q)$ is given by $q(t,\epsilon)=q_{\epsilon}(t)$ such that $q_{0}(t)=q(t,0)=q(t)$. Then, the variation of the curve $q(t)$ is defined by
\[
\delta{q}(t)= \frac{d}{d \epsilon}\biggr \arrowvert_{\epsilon =0} q_{\epsilon}(t),  
\]
which is the tangent vector to a curve $q(t)$. Let $\tau_{Q}: TQ \to Q; (q,\delta{q}) \mapsto q$ be the canonical projection and we get $\tau_Q \circ \delta{q}=q$. The restrictions $q_{\epsilon}(0)=q_1$ and $q_{\epsilon}(T)=q_2$ lead to $\delta{q}(0)=0$ and $\delta{q}(T)=0$ respectively.

\begin{figure}[htbp]
\begin{center}
\includegraphics[scale=.85, clip]{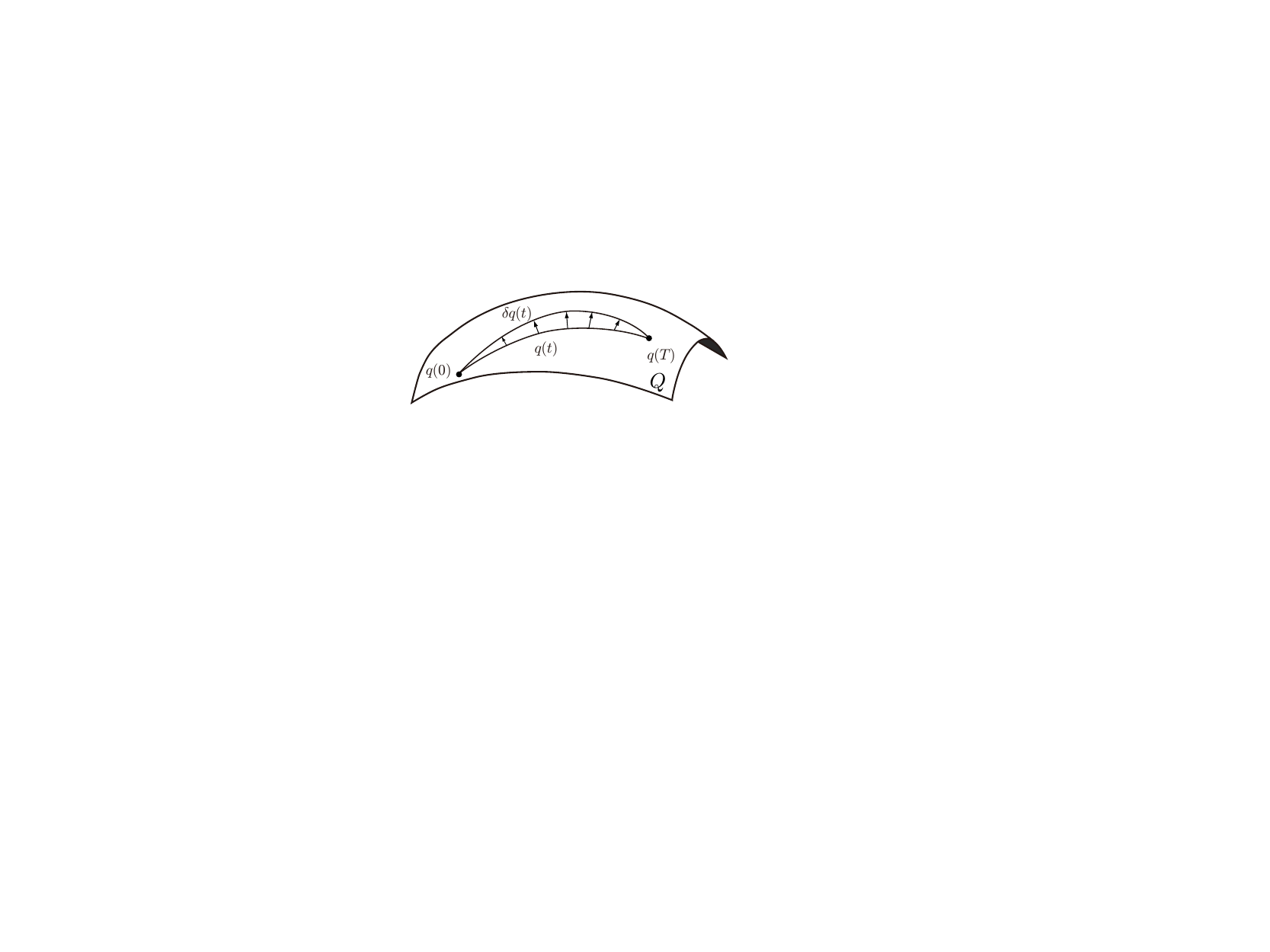}
\caption{Variations $\delta q(t)$ of a curve $q(t)$.}
\end{center}
\end{figure}

\paragraph{Hamilton's variational principle.}
Define the action functional $\mathfrak{S}: \mathcal{C} (Q)  \to \mathbb{R}$ by
\[
\mathfrak{S}(q)= \int ^{T}_{0} L \left( q(t), \dot{q}(t) \right) dt,
\]
where $\dot{q}(t)$ denotes the time derivative of $q(t)$.
If a curve $q=q(t) \in \mathcal{C} (Q)$ is a critical point of $\mathfrak{S}: \mathcal{C} (Q) \to \mathbb{R}$, namely,
$
\delta \mathfrak{S}(q)=0,
$
with the fixed endpoint conditions $\delta{q}(0)=\delta{q}(T)=0$,  the direct computation using local coordinates $q^i, \, i=1,...,n$ for $q \in Q$ yields
\begin{equation*}
\begin{split}
\delta \mathfrak{S}(q)&=\mathbf{d}\mathfrak{S}(q) \cdot \delta{q}= \frac{d}{d \epsilon}\biggr \arrowvert_{\epsilon =0} \mathfrak{S}(q_{\epsilon}(t) )\\
&= \int^T_0  \left(\frac{\partial L}{\partial q^i}-\frac{d}{dt}\frac{\partial L}{\partial \dot{q}^i}\right)\delta{q}^idt +  \frac{\partial L}{\partial \dot{q}^i} \;\delta{q}^i \bigg|_0^T\\
&= \int^T_0 \left(\frac{\partial L}{\partial q^i}-\frac{d}{dt}\frac{\partial L}{\partial \dot{q}^i}\right)\delta{q}^idt =0,
\end{split}
\end{equation*}
for all $\delta{q}^{i}$. In the above, the Einstein summation convention is employed; that is, a repeated index implies summation over that index. We shall use this convention throughout the paper unless stated otherwise. Thus we get the {\it Euler--Lagrange equations}:
\begin{equation}\label{EulerLagEqn}
\frac{d}{dt} \left( \frac{\partial L}{\partial \dot{q}^i} \right)- \frac{\partial L}{\partial q^i}=0,\quad i=1,...,n.
\end{equation}

\paragraph{The second-order vector fields.} 
Suppose the Lagrangian $L: TQ \to \mathbb{R}$ is {\it hyperregular}, namely, for every point $\dot{q} \in T_qQ$,
\begin{equation}\label{JacobLag}
\mathrm{det}\left[\frac{\partial^2{L}}{\partial{\dot{q}^i}\partial{\dot{q}^j}}\right] \neq 0.
\end{equation}

From the Euler--Lagrange equations \eqref{EulerLagEqn}, we get
\[
\ddot{q}^{j}=\left( \frac{\partial^{2} L}{\partial \dot{q}^i\partial \dot{q}^j}\right)^{-1}\left( \frac{\partial L}{\partial q^i} - \frac{\partial^{2} L}{\partial \dot{q}^i\partial q^k}\dot{q}^k\right),\quad j=1,...,n.
\]
In fact the above equations ensure that there exists a second-order vector field, called the Lagrangian vector field, denoted by $X_L: TQ \to \ddot{Q} \subset TTQ$, 
in which $\ddot{Q}$ is the {\it second-order submanifold} defined by
\[
\ddot{Q} := \left\{w \in TTQ \biggr{\arrowvert} T\tau_{Q}(w)=\tau_{TQ}(w) \right\},
\]
where $T\tau_{Q}: TTQ \to TQ; (q,\dot{q},\delta{q},\delta{\dot{q}}) \mapsto (q,\delta{q})$ and $\tau_{TQ}: TTQ \to TQ; (q,\dot{q},\delta{q},\delta{\dot{q}}) \mapsto (q,\dot{q})$. Hence, $T\tau_{Q}(w)=\tau_{TQ}(w)$ yields $\delta{q}=\dot{q}$ and therefore an element $w$ in the second-order submanifold $\ddot{Q}$ has the components $(q,\dot{q}, \ddot{q})$.
\paragraph{The Legendre transform.} Associated with $L$, recall from \cite{MaRa1999} that the Legendre transform $\mathbb{F}L: TQ \to T^{\ast}Q$ is a map, called the fiber derivative of $L$ at $v \in T_{q}Q$ along $w\in T_{q}Q$ that is defined by
\[
\mathbb{F} L(v) \cdot w = \left. \frac{d}{d \epsilon}\right \arrowvert_{\epsilon=0} L(v + \epsilon \,w).
\]
It is given in local coordinates by
\[
\mathbb{F} L(q^i,v^i) = \left(q^i, p_i=\frac{\partial L}{\partial v^i} \right).
\]
When $L$ is hyperregular, the Legendre transform $\mathbb{F}L: TQ \to T^{\ast}Q$ is globally diffeomorphic.
\paragraph{Energy conservation.}
Let $E_L: TQ \to \mathbb{R}$ be a Lagrangian energy defined by $E_L(q,\dot{q})=\mathbb{F}L(q,\dot{q}) \cdot \dot{q}-L(q,\dot{q})$ for $(q,\dot{q}) \in TQ$.
Along the solution curve $q(t) \in Q$ of the Euler--Lagrange equations \eqref{EulerLagEqn}, the energy $E_{L}$ is conserved as
\[
\begin{split}
\frac{dE_L}{dt}(q,\dot{q})&=\left< \frac{d}{dt} \left( \frac{\partial L}{\partial \dot{q}} \right)- \frac{\partial L}{\partial q}, \dot{q} \right>
=0.
\end{split}
\]
\paragraph{Canonical forms on the cotangent bundle.} 
On the other hand, the cotangent bundle $T^{\ast}Q$ of $Q$ is naturally equipped with the {\it canonical one-form} $\Theta_{T^{\ast}Q}\in \Lambda^1(T^\ast Q)$, which is defined by, for each $p_{q} \in T^{\ast}Q$,
\begin{equation}\label{CanOneForm_Cot}
\Theta_{T^{\ast}Q}(p_{q}) \cdot w_{p_{q}} = \langle p_{q}, \, T \pi_{Q} \cdot w_{p_{q} } \rangle, \;\; \textrm{for any $w_{p_{q} } \in T_{p_{q}}(T^{\ast}Q)$},
\end{equation}
where $\pi_{Q}: T^{\ast}Q \to Q; p_q=(q,p) \mapsto q$ is the cotangent bundle projection, and $\Theta_{T^{\ast}Q}$ is denoted in local coordinates by
$
\Theta_{T^{\ast}Q}=p_i\,dq^i.
$
Taking the exterior derivative yields the canonical two-form, namely, the {\it canonical symplectic structure} as
$
\Omega_{T^{\ast}Q}= -\mathbf{d}\Theta_{T^{\ast}Q} \in \Lambda^2(T^\ast Q)
$
with the coordinate expression
$
\Omega_{T^{\ast}Q}=dq^i \wedge dp_i.
$

\paragraph{Lagrangian forms on the tangent bundle.} 
The {\it Lagrangian one-form} and the {\it Lagrangian two-form (induced symplectic structure)} on $TQ$ can be introduced by using the Legendre transform as 
\begin{equation}\label{LagrangianForms}
\Theta_L = (\mathbb{F} L)^{\ast} \Theta_{T^{\ast}Q} \quad \text{and} \quad \Omega_L = (\mathbb{F} L)^{\ast} \Omega_{T^{\ast}Q}.
\end{equation}
Since the exterior derivative and the pull-back commute, it follows
$
\Omega_L=-\mathbf{d}\Theta_L.
$
The coordinate expressions of $\Theta_L$ and $\Omega_L$ are given by
\[
\Theta_L (q,v)= \frac{\partial L}{\partial v^{i}}dq^i \qquad \text{and}\qquad 
\Omega_L(q,v) = \frac{\partial^2 L}{\partial v^{i} \partial q^j}dq^i \wedge dq^j
+ \frac{\partial^2 L}{\partial v^{i} \partial  v^{j}}dq^i \wedge d v^j.
\]

\paragraph{Preservation of the Lagrangian two-form.}
Recall from \cite{MaRa1999} that the Euler--Lagrange equations are equivalently expressed by 
$\mathbf{i}_{X_{L}}\Omega_{L}=\mathbf{d}E_{L}$ and hence the Lie derivative $\pounds_{X_{L}}\Omega_{L}=\mathbf{i}_{X_{L}}\mathbf{d}\Omega_{L}+\mathbf{d}(\mathbf{i}_{X_{K}}\Omega_{L})=\mathbf{d}(\mathbf{i}_{X_{K}}\Omega_{L})=\mathbf{d}\mathbf{d}E_{L}=0$, where $\mathbf{i}$ denotes the interior product, and $\mathbf{d}\Omega_{L}=0$ since $\Omega_{L}$ is closed. Then, letting $\varphi_t: TQ \to TQ$ be the {\it flow} associated with the Lagrangian vector field $X_L$, we get the {\it preservation of the Lagrangian symplectic structure} along the {\it Lagrangian flow map:}
\[
\varphi^{\ast}_t \; \Omega_L = \Omega_L.
\]
As to the details, see \cite{MaRa1999}.
\if0

\paragraph{Euler--Lagrange operator.} From 

Now we define the {\it second-order submanifold}
\[
\ddot{Q} := \left\{w \in TTQ \biggr{\arrowvert} T\tau_{Q}(w)=\tau_{TQ}(w) \right\},
\]
where $\tau_{Q}: TQ \to Q; (q,\dot{q}) \to q$, $T\tau_{Q}: TTQ \to TQ; (q,\dot{q},\delta{q},\delta{\dot{q}}) \to (q,\delta{q})$ and $\tau_{TQ}: TTQ \to TQ; (q,\dot{q},\delta{q},\delta{\dot{q}}) \to (q,\dot{q})$. Hence, $T\tau_{Q}(w)=\tau_{TQ}(w)$ yields $\delta{q}=\dot{q}$ and therefore an element $w$ in the second-order submanifold $\ddot{Q}$ has the component $(q,\dot{q}, \ddot{q})$.
\medskip

Now we introduce the {\it Euler--Lagrange operator} $D_{EL}:\ddot{Q} \to T^{\ast}Q$ by
\begin{equation*}
D_{EL}(\ddot{q})= \sum_{i=1}^{n}\left(\frac{\partial L}{\partial q^i}(q,\dot{q})- \frac{d}{dt}\frac{\partial L}{\partial \dot{q}^i}(q,\dot{q})\right)dq^{i}.
\end{equation*}
Hamilton's principle in Theorem \ref{ThmHamPrinciple} can be restated by using the Euler--Lagrange operator as follows:
The variation of the action functional is given by
\begin{equation}\label{HamPrin_int}
\delta \mathfrak{S}(q)=\mathbf{d} \mathfrak{S}(q)\cdot \delta{q}=\int^T_0 \left<D_{EL}L \left(\ddot{q}\right), \delta q \right>dt+\Theta_L \left(q,\dot{q}\right)\cdot \hat{\delta{q}}\biggr{\arrowvert}^T_0=0,
\end{equation}
where 
\begin{equation*}
\begin{split}
\Theta_L \left(\dot{q}\right)=\left(\left(q,\dot{q}\right),\left(\frac{\partial L}{\partial \dot{q}}  ,0 \right)\right),\quad
\hat{\delta q}=\frac{d}{d\epsilon}\biggr{\arrowvert}_{\epsilon =0}\frac{d}{dt}q_{\epsilon}
=\left( \left(q, \dot{q} \right),  \left(\delta q, \delta \dot{q} \right)\right).
\end{split}
\end{equation*}
\medskip

From equation \eqref{HamPrin_int}, if a curve $q=q(t) \in \mathcal{C} (Q)$ is a critical point of $\mathfrak{S}: \mathcal{C} (Q) \to \mathbb{R}$, i.e.,
$
\mathbf{d} \mathfrak{S}(q)=0
$
with the fixed endpoint conditions, then it satisfies the {\it Euler--Lagrange equations}:
\[
D_{EL}L \left(\ddot{q}\right)=0.
\]
\fi
\if0

In fact, one can easily check by straightforward computations that the variation of the action functional is
\begin{equation*}
\begin{split}
\delta \mathfrak{S}(q)&=\delta \int^{T}_{0} L(q(t),\dot{q}(t))dt\\
&= \int^T_0 \left(\frac{\partial L}{\partial q^i}-\frac{d}{dt}\frac{\partial L}{\partial \dot{q}^i}\right)\delta{q}^idt +  \frac{\partial L}{\partial \dot{q}^i} \;\delta{q}^i \bigg|_0^T\\
&=\int^T_0 \left<D_{EL}L \left(\ddot{q}\right), \delta q \right>dt+\Theta_L \left(q,\dot{q}\right)\cdot \hat{\delta{q}}\biggr{\arrowvert}^T_0,
\end{split}
\end{equation*}
where 
\begin{equation*}
\begin{split}
\Theta_L \left(\dot{q}\right)=\left(\left(q,\dot{q}\right),\left(\frac{\partial L}{\partial \dot{q}}  ,0 \right)\right),\quad
\hat{\delta q}=\frac{d}{d\epsilon}\biggr{\arrowvert}_{\epsilon =0}\frac{d}{dt}q_{\epsilon}
=\left( \left(q, \dot{q} \right),  \left(\delta q, \delta \dot{q} \right)\right).
\end{split}
\end{equation*}
By the fixed endpoint conditions, the critical condition of $\mathfrak{S}(q)$ leads to the desired result.
\fi

\if0
\paragraph{Lagrangian vector fields and Lagrangian systems.}

If $q(t)\in Q$ is the solution curve of the Euler--Lagrange equations in \eqref{EulerLagEqn}, then there exists a second-order vector field $X_{L}: TQ \to \ddot{Q}; (q,\dot{q}) \mapsto (q,\dot{q},\ddot{q})$ such that
\[
D_{EL}L \circ X_{L}(\ddot{q})=0,
\]
where the vector field $X_{L}$ is called a \textit{Lagrangian vector field}, given in coordinates by
\[
\ddot{q}^{j}=\left( \frac{\partial^{2} L}{\partial \dot{q}^i\partial \dot{q}^j}\right)^{-1}\left( \frac{\partial L}{\partial q^i} - \frac{\partial^{2} L}{\partial \dot{q}^i\partial q^j}\dot{q}^j\right).
\]
The system $(Q,X_{L}, L)$ is called a {\it Lagrangian system}.

\paragraph{Symplecticity of the Lagrangian flow.}
Let $\mathcal{C}_L(Q) \subset \mathcal{C}(Q)$ be the {\it space of solution curves of Euler--Lagrange equations}. Identify a solution curve with the initial condition as
\[
q(t)=\varphi_t(v_q)\in Q,\;t \in [0,T] \in \mathcal{C}_L(Q) \Longleftrightarrow v_q=(q(0),\dot{q}(0)) \in TQ
\]
and define the {\it restricted action functional} by 
\[
\hat{\mathfrak{S}}(v_{q}):=\mathfrak{S}(q)=\int_0^TL(\varphi_t(v_q))dt,
\]
where $\varphi_t: TQ \to TQ$ is the {\it Lagrangian flow map} associated with $X_L$.
From the critical condition of the restricted action functional, it follows
\begin{equation*}
\begin{split}
\mathbf{d}\hat{\mathfrak{S}}(v_q)\cdot w_{v_q}&=\Theta_L(\varphi_T(v_q))\cdot ({\varphi_{T}}_{\ast}(w_{v_{q}}))-\Theta_L(v_q)\cdot w_{v_q}\\[3mm]
&=\varphi_T^{\ast}\Theta_L(v_q)\cdot w_{v_{q}}-\Theta_L(v_q)\cdot w_{v_q}.
\end{split}
\end{equation*}
Therefore, we get the {\it preservation of the Lagrangian symplectic structure} along the {\it Lagrangian flow map:}
\[
\varphi^{\ast}_T \; \Omega_L = \Omega_L.
\]
\fi

\subsection{The Lagrange--d'Alembert principle}\label{sect:LDAP}
\paragraph{Nonholonomic constraints.} Next we consider a mechanical system on an $n$-dimensional configuration manifold $Q$ with Lagrangian $L: TQ \to \mathbb{R}$, in which 
there exists a nonintegrable constraint distribution $\Delta_Q \subset TQ$ on $Q$.  In this paper, we assume that every distribution is {\it regular}, namely, it has constant rank at each point $q \in Q$ and is smooth unless otherwise stated. 

Choosing local coordinates $ q ^i,\;i=1,...,n $ for $q\in Q$ so that the configuration manifold $Q$ is locally represented by an open subset $U$ of $\mathbb{R}^n$. 
Now we suppose that the distribution $\Delta_Q \subset TQ$ is given by, for each $q \in Q$,
\begin{equation}\label{ConstraintDistribution}
\Delta_Q(q)=\left\{ \dot{q} \in T_{q}Q\,\mid \left< \omega^{r}(q),\dot{q}\right>=0,\,r=1,...,m <n\right\},
\end{equation}
where $\omega^{r}=\omega_{i}^{r}(q)dq^{i},\; r=1,...,m<n$ are some given $m$ independent one-forms on $Q$. We note that, in general, $\omega^{r}$ are not {\it completely integrable} in the sense of Frobenius; in other words, the constraints are nonholonomic.

\paragraph{The Lagrange--d'Alembert principle.}
For the case in which the nonholonomic constraints exist, Hamilton's principle is modified into the following variational formulation, called the Lagrange--d'Alembert principle. 
 A curve $q(t),\;t\in [0,T]$ is a critical point of the action integral $\mathfrak{S}: \mathcal{C} (Q)  \to \mathbb{R}$, namely,
\begin{equation*}  
\begin{split}
\delta\mathfrak{S}(q)= \delta \int_{0}^{T} L(q(t),\dot{q}(t))dt=0,
\end{split}
\end{equation*}
subject to the kinematic constraint $\dot{q}(t) \in \Delta_Q(q(t))$ as well as to the variational constraint
$\delta{q}(t) \in \Delta_Q(q(t))$, together with the fixed endpoint conditions. 

Then, the Lagrange--d'Alembert principle reads
\begin{equation*}  
\int^T_0  \left<\frac{\partial L}{\partial q}-\frac{d}{dt}\frac{\partial L}{\partial \dot{q}}, \,\delta{q} \right>dt=0,
\end{equation*}
for the chosen variations $\delta{q}(t) \in \Delta_Q(q(t))$. Then, the curve $q(t),\;t\in [0,T]$ satisfies  the {\it intrinsic Lagrange--d'Alembert equations:}
\begin{equation}\label{LagDAlembertEqn}
\frac{d}{dt} \left( \frac{\partial L}{\partial \dot{q}} \right)- \frac{\partial L}{\partial q} \in \Delta_Q^{\circ}(q), \quad \dot{q} \in \Delta_Q(q),
\end{equation}
where the annihilator $\Delta_Q^{\circ}\subset T^{\ast}Q$ of the distribution $\Delta_Q \subset TQ$ is defined by, for each $q \in Q$,
\[
\Delta_Q^{\circ}(q)=\left\{ \alpha \in T_{q}^{\ast}Q \mid \left< \alpha, v \right>=0, \;\textrm{for all}\;v \in \Delta_{Q}(q) \right\}.
\]
For an element $\alpha=\alpha_i dq^i$ in $\Delta_Q^{\circ}(q)\subset T_q^{\ast}Q$,  we have the local coordinate expression $\alpha_i=\mu_r \omega_i^r(q)$ by introducing Lagrange multipliers $\mu_r,\, r=1,...,m$. Therefore, we get the local coordinate expressions for the Lagrange--d'Alembert equations in \eqref{LagDAlembertEqn} as
\begin{equation}  \label{eq: local_LDA}
 \frac{d}{dt} \left( \frac{\partial L}{\partial \dot{q}^{i}} \right)- \frac{\partial L}{\partial q^{i}} =\mu_{r}\omega^{r}_{i}(q), \;\; \omega_{i}^{r}(q)\dot{q}^{i}=0,\quad i=1,...,n;\; r=1,...,m <n.
\end{equation}
For the unconstrained case in which $\Delta_{Q}=TQ$, the equations \eqref{eq: local_LDA} recover the Euler--Lagrange equations \eqref{EulerLagEqn}.

\paragraph{Energy conservation.}
Along the solution curve $q(t) \in Q$ of the Lagrange--d'Alembert equations in \eqref{LagDAlembertEqn}, the energy $E_{L}$ is conserved as
\[
\begin{split}
\frac{dE_L}{dt}(q,\dot{q})&=\left< \frac{d}{dt} \left( \frac{\partial L}{\partial \dot{q}} \right)- \frac{\partial L}{\partial q}, \dot{q} \right>
=0,
\end{split}
\]
since $\dot{q} \in \Delta_{Q}(q)$.

\paragraph{Failure of the preservation of symplecticity.}
The symplectic property of the  Lagrangian two-form {\it does not} hold anymore in the case that there exists nonholonomic constraints $\Delta_{Q} \subset TQ$. 
In fact, the Lagrange--d'Alembert equations are intrinsically expressed by $\mathbf{i}_{X_{L}}\Omega_{L}-\mathbf{d}E_{L} =\boldsymbol{\beta}$ together with ${X_{L}}_{\mid \Delta_{Q}} \in \Delta_{TQ}$ and $\boldsymbol{\beta} \in \Delta_{TQ}^{\circ}$, where $\Delta_{TQ}=(T\tau_{Q})^{-1}(\Delta_{Q})$ denotes the lifted distribution by using $\tau_{Q}: TQ \to Q$ and $\Delta_{TQ}^{\circ}$ its annihilator. Then it follows
\begin{equation}\label{LDA_Str}
\pounds_{X_{L}}\Omega_{L}=\mathbf{d}\boldsymbol{\beta},
\end{equation}
which implies that the Lagrangian symplectic two-form does not preserve; see also \cite{CoMa2001}.

\if0

\paragraph{The Lagrange--d'Alembert vector field.} Let us assume that $L$ is {\it hyperregular}, i.e., the Legendre transform $\mathbb{F}L: TQ \to T^{\ast}Q$ is a global diffeomorphism. 


\begin{definition}\rm
For a given distribution $\Delta_{Q}$ on $Q$, define the distribution $\mathcal{C}$ on $TQ$ by 
\[
\mathcal{C}=(T\tau_{Q})^{-1}(\Delta_{Q}).
\]
In the above, $\mathcal{C}$ is the space of vectors that are annihilated by $\boldsymbol{\beta}$ as
\[
\mathcal{C}=\left\{ u \in T_{\Delta_{Q}}TQ \mid \left< \boldsymbol{\beta}, u \right>=\left< \tau_Q^{\ast} \beta, u \right>=0\right\},
\]
where $\boldsymbol{\beta}=\tau_Q^{\ast} \beta$ is the horizontal one-form on $TQ$ that is defined by, for $v_{q} \in \Delta_{Q} \subset TQ$ and for $w_{v_{q}} \in T_{v_{q}}TQ$,
\[
\boldsymbol{\beta}(v_{q})\cdot w_{v_{q}}= \left<\beta(\tau_Q(v_{q})), T\tau_Q(w_{v_{q}})\right>,
\]
where $\beta \in \Delta_{Q}^{\circ}$. It goes without saying that $\boldsymbol{\beta}=\tau_Q^{\ast} \beta \in \mathcal{C}^{\circ}$, where $\mathcal{C}^{\circ}=\tau_Q^{\ast}(\Delta_{Q}^{\circ})$ is the annihilator of $\mathcal{C}$. 
\end{definition}
\begin{definition}\rm
Associated with the Lagrangian two-form $\Omega_{L}$, we define the symplectic orthogonal subbundle $\mathcal{V}$  of $TTQ$ by, for each $v_{q} \in TQ$,
\[
\mathcal{V}_{v_{q}}=\{w \in T_{v_{q}}TQ \mid \mathbf{i}_{w}\Omega_{L}(v_{q}) \in \mathcal{C}^{\circ}(v_{q})\}.
\]
The annihilator $\mathcal{C}^{\circ} \subset T^{\ast}TQ$ is locally given by
\[
\mathcal{C}^{\circ}=\{ (q,v, \beta,0) \in T^{\ast}TQ \mid \beta \in \Delta_{Q}^{\circ}(q) \}.
\]
Therefore, the symplectic orthogonal subbundle $\mathcal{V}$ is given by
\[
\mathcal{V}=\Omega_{L}^{-1}(\mathcal{C}^{\circ})=\Omega_{L}^{-1}(\tau_Q^{\ast}(\Delta_{Q}^{\circ})).
\]
\end{definition}
\paragraph{The vector fields of constrained dynamics.} 
Recall the given Lagrangian is hyperregular and we assume  that the constraints are normal (\cite{MadeLeDadeDi1996,Marle1998}), namely,
\[
T\Delta_{Q} \cap \mathcal{V}=\{0\}.
\]
Then, we can split the tangent bundle of $TQ$ restricted to $\Delta_{Q}$ as
\[
T_{\Delta_{Q}}TQ=T\Delta_{Q} \oplus \mathcal{V}.
\]
Then, the Lagrange--d'Alembert equations in \eqref{eq: local_LDA} for the constrained system may be described by, for $v_{q} \in \Delta_{Q}(q)$, 
%
\begin{subequations}  \label{eq: LDA}
    \begin{empheq}[left = {\empheqlbrace \,}]{alignat = 2}
        & D_{EL}L \circ X_{L}(v_{q})+\beta(\tau_{Q}(v_{q}))=0,\label{LDAequation} \\[2mm]
        & v_{q} \in \Delta_{Q}(q).  \label{KinConst}
    \end{empheq}
\end{subequations}
The vector field $X_{L}: TQ \to T^{(2)}Q$ restricted to the constraint submanifold $\Delta_{Q} \subset TQ$ can be uniquely split into the sum
\[
X_{L \mid \Delta_{Q}}=X_{\Delta_{Q}}+X_{V},
\]
where $X_{\Delta_{Q}}$ is the constrained vector field that is tangent to $\Delta_{Q}$ and $X_{V}$ is the constraint force vector field that is tangent to 
$\mathcal{V}$. In fact, we get
\begin{equation}
\left\{ \,
    \begin{aligned}
    & \ddot{q} = X_{\Delta_{Q}}(q,\dot{q}), \\[2mm]
    &  \Omega_{L}^{-1}(\tau_Q^{\ast}\beta(q,\dot{q}))=-X_{V}(q,\dot{q}).
    \end{aligned}
\right.
\end{equation}

\subsection{Structure preserving property}
\paragraph{The variation of the action functional.}
We shall consider the variation of the action functional by removing the fixed endpoint condition $\delta q(0)=\delta q(T)=0$ but keeping the time interval fixed $t \in [0,T]$. 

Recall that the variation of the action functional is given by
\begin{align}\label{FundEq1}
\mathbf{d}\mathfrak{S}(q)\cdot \delta q=\int^T_0 \left<D_{EL}L(\ddot{q}), \delta q \right>dt+\Theta_L \left(\dot{q}\right)\cdot \hat{\delta{q}}\biggr{\arrowvert}^T_0,
\end{align}
where we chose variations such that $\delta{q} \in \Delta_{Q}(q)$ and with the constraint $\ddot{q} \in \ddot{Q}_{\Delta_{Q}}$.
\begin{framed}
\begin{proposition}\rm\label{ConvOmegaL}
Let $\varphi_t: TQ \to TQ$ be the flow associated with $X$. Along the solution curve $t\mapsto \varphi_t(v_q)$ of the Lagrange--d'Alembert equation, there exists a structure preserving property such that 
\begin{equation}\label{LDA_Str}
\pounds_{X_{L}}\Omega_{L}(v_{q})=-\mathbf{d}\boldsymbol{\beta}(v_{q}),
\end{equation}
where $v_q=(q(0),\dot{q}(0)) \in TQ$ is the initial condition and $\Omega_{L}:=-\mathbf{d}\Theta_{L}$ is the Lagrangian two-form.
In the above, $\pounds_{X}$ denotes the Lie derivative  along the vector
field $X$. 
\end{proposition}
\end{framed}
\begin{proof}
Restrict the action integral 
\begin{equation*}
\mathfrak{S}(q)= \int ^{t}_{0} L \left( q(s), \dot{q}(s) \right) ds
\end{equation*}
to a submanifold of solution curves $\mathcal{C}_L(Q) \subset \mathcal{C}(Q)$. Then we identify 
the solution curve in $\mathcal{C}_L(Q)$, i.e., 
\[
s \mapsto \varphi_s(v_q), \quad s \in [0,t]
\]
with the initial condition $v_q=(q(0),\dot{q}(0)) \in TQ$
\begin{equation*}
\hat{\mathfrak{S}}(v_{q})=\mathfrak{S}(q),\quad q \in \mathcal{C}_L(Q) \;\;\text{and}\;\; (q(0),\dot{q}(0))=v_{q}
\end{equation*}
and we define the restricted action integral $\hat{\mathfrak{S}}(v_{q}): TQ \to \mathbb{R}$ by
\begin{align}\label{action}
\hat{\mathfrak{S}}(v_{q})=\int_0^t L(\varphi_s(v_q))ds
\end{align}
where we note $\hat{\mathfrak{S}}$ is a function on $TQ$. Then, it follows from \eqref{FundEq1} that
\begin{equation}\label{dHatS}
\begin{split}
\mathbf{d}\hat{\mathfrak{S}}(v_q)\cdot w_{v_q}&
=\int^t_0 \left<D_{EL}L\left(\ddot{q}\right), \delta q \right>ds+\Theta_L \left(\dot{q}\right)\cdot \hat{\delta{q}}\biggr{\arrowvert}^t_0\\[2mm]
&=-\int_0^t \boldsymbol{\beta}(\varphi_s(v_q))\cdot ({\varphi_{s}}_{\ast}(w_{v_{q}}))ds
+
\Theta_L(\varphi_t(v_q))\cdot ({\varphi_{t}}_{\ast}(w_{v_{q}}))-\Theta_L(v_q)\cdot w_{v_q}\\[2mm]
&=-\int_0^t \varphi_s^{\ast}\boldsymbol{\beta}(v_q)\cdot w_{v_{q}}ds
+\varphi_t^{\ast}\Theta_L(v_q)\cdot w_{v_{q}}-\Theta_L(v_q)\cdot w_{v_q}
\end{split}
\end{equation}
where $w_{v_{q}}=(q(0),\dot{q}(0),\delta{q}(0),\delta\dot{q}(0)) \in T_{v_{q}}TQ$. 
From \eqref{dHatS}, we get
\begin{align}\label{dHatS_2}
\mathbf{d}\hat{\mathfrak{S}}=\varphi_t^{\ast}\Theta_L-\Theta_L - \int_0^t \varphi_s^{\ast}\boldsymbol{\beta} ds
\end{align}
Taking the exterior derivative of \eqref{dHatS_2},
\begin{align*}
0=\mathbf{d}\mathbf{d}\hat{\mathfrak{S}}=\mathbf{d}(\varphi^*_t\Theta_L-\Theta_L-\int_0^t \varphi_s^{\ast}\beta ds)=-\varphi^*_t\Omega_L+\Omega_L - \int_0^t \varphi_s^{\ast}\mathbf{d}\boldsymbol{\beta} ds
\end{align*}
and
\begin{equation}\label{Non_Symp_ExtForceLagSys}
\varphi^{\ast}_t \; \Omega_L = \Omega_L - \int_0^t \varphi_s^{\ast}\mathbf{d}\boldsymbol{\beta} ds.
\end{equation}
By taking the derivative with respect to $t$ at $t=0$, it follows that, for $v_{q} \in TQ$,
\[
\frac{d}{dt}\biggr{\arrowvert}_{t=0}\left( \varphi_{t}^{\ast}\Omega_{L}\right)(v_{q})=-\mathbf{d}\boldsymbol{\beta}(v_{q}).
\]
Since
\[
\pounds_{X_{L}}\Omega_{L}(v_{q})=\frac{d}{dt}\biggr{\arrowvert}_{t=0}\left( \varphi_{t}^{\ast}\Omega_{L}\right)(v_{q}),
\]
we get $\pounds_{X_{L}}\Omega_{L}(v_{q})=-\mathbf{d}\boldsymbol{\beta}(v_{q})$. 
\end{proof}
\begin{remark}\rm
In the above, we have derived the structure preserving property of \eqref{LDA_Str} in the variational formulation. Of course, if the constraint is holonomic, namely, $\mathbf{d}\boldsymbol{\beta}=0$, we can recover the symplectic structure preserving property of the Lagrangian two-form. Note that the structure preserving property was also derived in \cite{CoMa2001} in a symplectic way. 

\end{remark}

\fi

\section{Variational discretization of Lagrangian systems}

The discretization of Hamilton's principle for Lagrangian systems was first shown by \cite{MoVe1991} and later the detailed applications to Lagrangian mechanics were extensively studied in \cite{MaWe2001}. Here, we  review briefly the variational discretization of Lagrangian systems, focusing on both unconstrained and nonholonomically constrained systems.

\subsection{Discrete Hamilton's principle}\label{Subsec:DisHamPrin}
First, we start with the geometric setting for discrete Hamilton's principle in order to formulate discrete Lagrangian systems by following \cite{MaWe2001}.
\paragraph{Geometric and variational setting for discrete mechanics.}
Let $Q$ be a configuration manifold and let $(q^1,...,q^n)$ be local coordinates for each point $q \in Q$. Consider a continuous curve 
$q: I \to Q$, where $I=\{t \in \mathbb{R} \mid 0 \le t \le T\}$ is the space of time interval, and then we define the increasing sequence of discrete points $I_{d}=\{t_{k}\}_{k=0}^{N}$ associated with $I$ by
\begin{equation*}
I_{d}:=\{ t_{k}=kh \in \mathbb{R} \mid k=0,...,N \in \mathbb{N} \},
\end{equation*}
where $h=t_{k+1}-t_{k} \in \mathbb{R}^{+}$ is a constant time step. 
Then, we introduce the \textit{discrete path space} by
\[
\mathcal{C}_d(Q)=\left\{q_{d}: \{t_{k}\}_{k=0}^{N} \to Q \right\}, 
\]
which is the discrete analogue of the continuous path space given in equation \eqref{PathSpace}.
Since the discrete path space $\mathcal{C}_d(Q)$ is isomorphic to $Q \times \cdots \times Q$ ($N +1$ copies), it
can be given by a smooth product manifold structure and hence we identify a \textit{discrete path} $q_d \in \mathcal{C}_d(Q)$ with its image 
\begin{equation*}
q_{d}=\{ q_k \}_{k=0}^N,
\end{equation*}
where $q_k:=q_d(t_k) \in Q$.

\begin{figure}[htbp]
\begin{center}
\includegraphics[scale=.7, clip]{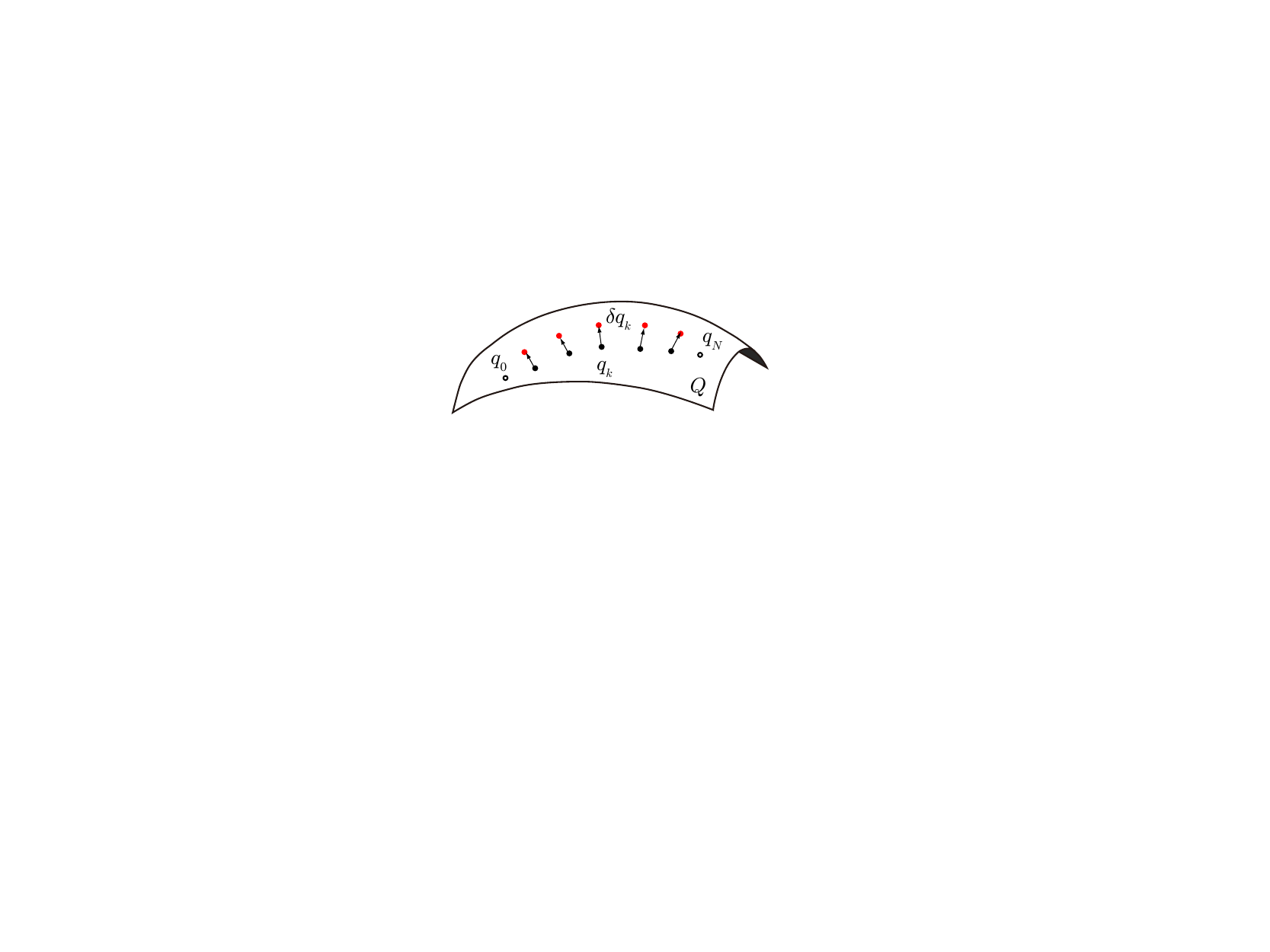}
\caption{Discrete sequences.}
\end{center}
\end{figure}

\paragraph{Finite difference maps.}
Consider a Lagrangian $L=L(q,\dot{q})$, which is a function on the tangent bundle $TQ$ of the configuration manifold $Q$, where the fiber component of $(q,\dot{q})$ is the time derivative of $q=q(t)$ at each time $t$. Therefore, in the discrete setting, the time derivative $\dot{q}(t)$ at $t=t_k$ can be approximated, for instance, by employing the finite difference between $q_k=q_d(t_k)$ and $q_{k+1}=q_d(t_{k+1})$ as 
\[
\dot{q}(t_k) \approx \frac{q_{k+1}-q_{k}}{h}.
\]
Here we introduce the \textit{forward finite difference map} defined on a neighborhood of the diagonal of $Q \times Q$ by
\begin{equation*}
\begin{split}
\Psi_{Q}: Q \times Q \to TQ; \quad 
(q_{k},q_{k+1}) \mapsto \Psi_{Q}(q_{k},q_{k+1})=\left(q_k, \frac{q_{k+1}-q_{k}}{h}\right) \in T_{q_{k}}Q,
\end{split}
\end{equation*}
where each pair $(q_{k} , q_{k+1}), k \in \{0, . . . , N-1\}$ denotes an evaluation of the curve $q(t)$ at $t=kh$ and $t=(k + 1)h$; hence $\Psi_{Q}(q_{k},q_{k+1})$ becomes an approximation of $\dot{q}(t_k)$.

Note that there is another type of  approximation of $\dot{q}(t_k)$ due to the {\it backward finite difference map}, namely, $\Psi_{Q}(q_{k-1},q_{k})=\left(q_{k}, \frac{q_{k}-q_{k-1}}{h}\right) \in T_{q_{k}}Q$. Strictly speaking, we need to further specify the base point to get the approximations. In the above cases, we choose the base point $q_k$ for the finite difference maps. In \S \ref{Sect:DiscDiracStr_LagSys}, we will return to the issues on the finite difference maps in the construction of discrete Dirac structures.

\paragraph{Discrete Lagrangians.}
We can introduce a discrete Lagrangian $L_d: Q \times Q \rightarrow \mathbb{R} $, which is an approximation of the time integral of the continuous Lagrangian $L: TQ \to \mathbb{R}$ between two consecutive configurations $q_k$ and $q_{k+1}$ as follows.

\begin{definition}\rm For a Lagrangian $L$ on $TQ$, the discrete Lagrangian $L_d$ on $Q \times Q$ is defined by
\begin{equation}\label{L_d}
L_d( q_k , q_{k+1}):=hL\circ \Psi_Q(q_{k},q_{k+1})\approx
\int_{t _k }^{t_{k+1}}L( q(t), \dot q(t)) dt.
\end{equation} 
Associated with the action functional $\mathfrak{S}:\mathcal{C} (Q)  \to \mathbb{R}$, define the {\it discrete action functional} $\mathfrak{S}_{d}: \mathcal{C}_d(Q) \to \mathbb{R}$ by
\[
\mathfrak{S}_{d}(q_{d})=\sum_{k=0}^{N-1}L_d(q _k , q_{k+1}).
\]
Since the discrete path $q_{d}\in \mathcal{C}_{d}(Q)$ is identified with its image $q_{d}=\{q_{k}\}_{k=0}^{N}$ and $\mathcal{C}_{d}$ is isomorphic to $Q\times \dots \times Q$ ($N+1$ copies), it is a smooth product manifold structure and hence $\mathfrak{S}_{d}$ inherits the smoothness of the discrete Lagrangian $L_{d}$ given in \eqref{L_d}.

\end{definition}

Corresponding to $TTQ$, the discrete object is the space $(Q \times Q) \times (Q \times Q)$. Let us introduce the projection operator $\pi$ and the translation operator $\sigma$ to be
\begin{equation*}
\begin{split}
&\pi: ((q_{0},q_{1}),(q_{0}^{\prime},q_{1}^{\prime})) \mapsto (q_{0},q_{1}),\\
&\sigma: ((q_{0},q_{1}),(q_{0}^{\prime},q_{1}^{\prime})) \mapsto (q_{0}^{\prime},q_{1}^{\prime}).
\end{split}
\end{equation*}
Let  $\pi_{i}: Q \times Q \to Q$ ($i=1,2$) be the usual projections of the first and second factors onto $Q$.
Then, the discrete second-order submanifold of $(Q \times Q) \times (Q \times Q)$ is defined by
 \[
 \ddot{Q}_{d}=\{w_{d} \in (Q \times Q) \times (Q \times Q) \mid \pi_{1} \circ \sigma(w_{d})=\pi_{2} \circ \pi(w_{d})\},
 \]
 which follows that for $w_{d}=((q_{0},q_{1}),(q_{0}^{\prime},q_{1}^{\prime})) \in \ddot{Q}_{d}$, we get the condition $q_{1}=q_{0}^{\prime}$.

\paragraph{Discrete Euler--Lagrange map.}
For a $C^k$ discrete Lagrangian $L_d$ $(k \ge 1)$, there exists a unique $C^{k-1}$ mapping $D_{DEL}L_{d}: \ddot{Q} \to T^{\ast}Q$ and unique $C^{k-1}$ one-forms $\Theta^{\pm}_{L_{d}}$ on $Q \times Q$ such that, for all variations $\delta q_{d} \in T_{q_{d}}\mathcal{C}(Q)$ of $q_{d}$, one has
\begin{equation}\label{dS_d}
\begin{split}
\delta \mathfrak{S}_{d}(q_{d})&=\mathbf{d}\mathfrak{S}_{d}(q_{d}) \cdot \delta{q}_{d}\\
&= \sum_{k=1}^{N-1} D_{DEL}L_{d}((q_{k-1},q_{k}),(q_{k},q_{k+1}))\cdot \delta{q}_{k}\\
&\qquad +\Theta^{+}_{L_{d}}(q_{N-1},q_{N})\cdot 
(\delta q_{N-1},\delta q_{N})-\Theta^{-}_{L_{d}}(q_{0},q_{1})\cdot (\delta q_{0},\delta q_{1}),
\end{split}
\end{equation}
where the map $D_{DEL}L_{d}: \ddot{Q} \to T^{\ast}Q$, called the {\it discrete Euler--Lagrange map}, is given by
\[
\begin{split}
D_{DEL}L_{d}((q_{k-1},q_{k}),(q_{k},q_{k+1}))=\left(D_2L_d(q_{k-1}, q_k)+D_1L_d(q_k, q_{k+1})\right)dq_{k}
\end{split}
\]
and the discrete Lagrangian one-forms are denoted by
\[
\begin{split}
&\Theta^{+}_{L_{d}}(q_{k},q_{k+1})=D_2L_d(q_{k}, q_{k+1})dq_{k+1},\\
&\Theta^{-}_{L_{d}}(q_{k},q_{k+1})=-D_1L_d(q_{k}, q_{k+1})dq_{k}.
\end{split}
\]
\paragraph{Discrete Hamilton's principle.}
Now, the discrete analogue of Hamilton's principle is given as follows.
\medskip

If a discrete path $q_{d}\in \mathcal{C}_{d}(Q)$ is critical for the discrete action functional, i.e.,
\[
\delta \mathfrak{S}_{d}(q_{d})=\delta \sum_{k=0}^{N-1}L_d(q _k , q_{k+1})=0,
\]
for all variations $\delta q_{d} \in T_{q_{d}}\mathcal{C}(Q)$ of $q_{d}$ with the fixed endpoints $\delta{q}_{0}=\delta{q}_{N}=0$, then the discrete path  satisfies the discrete Euler--Lagrange equations
\begin{equation}\label{DisEulLagEqn}
D_{DEL}L_{d}((q_{k-1},q_{k}),(q_{k},q_{k+1}))=0,\quad    k=1, \ldots,N-1,
\end{equation} 
namely,
\begin{equation}\label{DiscreteEulerLagEq} 
\begin{split}
D_2L_d(q_{k-1}, q_k)=-D_1L_d(q_k, q_{k+1}),
\;\;
\textrm{or},
\;\; \frac{\partial L_{d}}{\partial q_{k}}(q_{k-1}, q_k)+\frac{\partial L_{d}}{\partial q_{k}}(q_k, q_{k+1})=0,
\end{split}
\end{equation} 
for all $k=1,...,N-1$, where $D _i $ denotes the partial derivative with respect to the $i^{th}$ variable. 
Regarding the proof, refer to \cite{MaWe2001}. Note that the discrete Euler--Lagrange equations \eqref{DiscreteEulerLagEq} are the condition that the three consecutive configuration variables $q_{k-1}, q_k , q_{k+1}$ have to satisfy and hence it provides an integration scheme such that $q_{k+1}$ can be explicitly solved, under appropriate conditions, in terms of the two previous configuration variables $q_{k-1}$ and $q_k$.

\paragraph{Discrete Lagrangian evolution operator and mappings.}
Corresponding to the vector field $X: TQ \to TTQ$ in the continuous setting, we consider a {\it discrete evolution operator} $X_{d}: Q \times Q \to (Q \times Q) \times (Q \times Q): (q_{0},q_{1}) \mapsto ((q_{0},q_{1}),(q_{0}^{\prime},q_{1}^{\prime}))$ satisfying $\pi\circ X_{d}=\mathrm{Id}_{Q\times Q}$. Furthermore, the corresponding discrete object to the flow $\varphi_{t}: TQ \to TQ$ is a discrete map $\varphi_{d}: Q \times Q \to Q \times Q$, which is given by $\varphi_d:=\sigma \circ X_d;\;(q_{0},q_{1}) \mapsto (q_{0}^{\prime},q_{1}^{\prime})$. Now for the special case of a regular discrete Lagrangian system, we can define the {\it discrete Lagrangian evolution operator} 
\begin{equation*}
\begin{split} 
X_{L_{d}}: Q \times Q \to  \ddot{Q}_{d};\quad  
(q_{0},q_{1}) \mapsto (q_{0},q_{1},q_{1},q_{2}),
\end{split}
\end{equation*} 
which is a second-order discrete evolution operator that satisfies 
\[
D_{DEL}L_{d} \circ X_{L_{d}}=0,
\]
and the {\it discrete Lagrangian map} $\varphi_{L_{d}}: Q \times Q \to Q \times Q$ is defined as
\[
\varphi_{L_{d}}:= \sigma \circ X_{L_{d}};\; (q_{0},q_{1}) \mapsto (q_{1},q_{2}).
\]

\paragraph{Discrete Legendre transforms.}
One can define the \textit{discrete Legendre transforms} or the discrete fiber derivatives $\mathbb{F}^{\pm}L_d: Q \times Q \rightarrow T^*Q$ by
\begin{equation*}
\begin{split} 
&\left<\mathbb{F} ^+L_d( q _0 , q _{1} ), \delta q_{1}\right> = \left<D_2L_d( q _0 , q _{1} ), \delta q_{1}\right>, \\
&\left<\mathbb{F} ^-L_d( q _0 , q _{1} ), \delta q_{0} \right> = \left<-D_1L_d( q _0 , q _{1} ), \delta q_0\right>,
\end{split}
\end{equation*} 
each of which can be respectively denoted by
\begin{equation}\label{eq:disLeg}
\begin{split} 
&\mathbb{F} ^+L_d:  (q _0 , q _{1}) \mapsto (q_{{1}}, p_{{1}})= (q_{1}, D_2L_d( q _0 , q _{1} )) \in T^*_{q _{1} }Q,\\
&\mathbb{F} ^-L_d: (q _0 , q _{1}) \mapsto (q_{0}, p_{0})= (q_0, -D_1L_d( q _0 , q _{1} )) \in T^*_{q _0}Q.
\end{split}
\end{equation} 
\if0
\begin{equation*}
\begin{split} 
&\left<\mathbb{F} ^+L_d( q _k , q _{k+1} ), \delta q_{1}\right> = \left<D_2L_d( q _k , q _{k+1} ), \delta q_{k+1}\right>, \\
&\left<\mathbb{F} ^-L_d( q _k , q _{k+1} ), \delta q_{0} \right> = \left<-D_1L_d( q _k , q _{k+1} ), \delta q_k\right>.
\end{split}
\end{equation*} 
which can be denoted by
\begin{equation}
\begin{split} 
&\mathbb{F} ^+L_d:  (q _k , q _{k+1}) \mapsto (q_{{k+1}}, p_{{k+1}})= (q_{k+1}, D_2L_d( q _k , q _{k+1} )) \in T^*_{q _{k+1} }Q,\\
&\mathbb{F} ^-L_d: (q _k , q _{k+1}) \mapsto (q_{k}, p_{k})= (q_k, -D_1L_d( q _k , q _{k+1} )) \in T^*_{q _k }Q.
\end{split}
\end{equation} 
\fi
Note that
\[
\Theta^{\pm}_{L_{d}}=(\mathbb{F} ^{\pm}L_d)^{\ast}\Theta_L.
\]
\if0
\[
\begin{split}
&\Theta^{+}_{L_{d}}(q_{k},q_{k+1})=\mathbb{F} ^+L_d(q _k , q _{k+1}),\\
&\Theta^{-}_{L_{d}}(q_{k},q_{k+1})=\mathbb{F} ^-L_d(q _k , q _{k+1}).
\end{split}
\]
\fi
In particular, the discrete Lagrangian $L_d$ is called \textit{regular} when the discrete Legendre transforms are local diffeomorphisms. This turns out to be equivalent to the invertibility of the matrix $D_1D_2L_d( q _0 , q _{1} )$ for all $ q _0, q _{1} $, which exactly corresponds to the regularity condition of the Lagrangian given in \eqref{JacobLag} in the continuous setting. If the discrete fiber derivatives $\mathbb{F}^{\pm}L_d$ are global isomorphisms, then $L_{d}$ is called {\it hyperregular}.

Under the regularity hypothesis, the integration scheme \eqref{DiscreteEulerLagEq} becomes
\[
\mathbb{F} ^+L_d(q_{k-1}, q_k)=\mathbb{F} ^-L_d(q_k, q_{k+1}),
\]
which yields the well-defined discrete Lagrangian map $\varphi_{L_d} : Q\times Q \to Q \times Q$ by
\begin{equation*} 
\begin{split}
\varphi_{L_{d}}=(\mathbb{F}^{-}L_{d})^{-1} \circ \mathbb{F}^{+}L_{d}:
(q_{k-1},q_k) \mapsto(q_k, q_{k+1}).
\end{split}
\end{equation*} 
Further, we develop 
\begin{equation*}
\varphi_{L_{d}}^{k}:=\underbrace{\varphi_{L_{d}} \circ \cdots \circ \varphi_{L_{d}}}_{\textrm{$k$-copies}},\quad k=1,...,N-1,
\end{equation*}
where $\varphi_{L_{d}}^{1}=\varphi_{L_{d}}$. Corresponding to the continuous path $t \mapsto \varphi_t(v_q), \; t \in [0,T]$, we can define the discrete path by $k \mapsto \varphi_{L_{d}}^{k}(q_{0},q_{1}),\;k\in \{1,...,N-1\}$.

\paragraph{Structure preserving property.} 
Denote the set of solution paths of the discrete Euler--Lagrange equations in \eqref{DiscreteEulerLagEq} by 
\[
\mathcal{C}_{L_{d}}(Q) \subset \mathcal{C}_{d}(Q).
\]
Letting $(q_{0},q_{1}):=(q_{d}(t_{0}),q_{d}(t_{1})) \in Q \times Q$ be an initial condition, a solution path $q_{d} \in \mathcal{C}_{L_{d}}(Q) $ can be uniquely determined by the discrete Lagrange map $\varphi_{L_{d}}: Q \times Q \to Q \times Q$. Then, we identify the space of solution paths $\mathcal{C}_{L_{d}}(Q)$ with the space $Q \times Q$, and restrict the discrete action functional $\mathfrak{S}_{d}(q_{d})$ to $v_{d}=(q_{0},q_{1}) \in Q \times Q$ to get
\begin{equation*}
\hat{\mathfrak{S}}_{d}(v_{d})=\mathfrak{S}_{d}(q_{d}).
\end{equation*}
From equation \eqref{dS_d}, it follows that for any $w_{v_{d}} \in T_{v_{d}}(Q \times Q)$,
\begin{equation*}
\begin{split}
\mathbf{d}\hat{\mathfrak{S}}_{d}(v_{d}) \cdot w_{v_{d}}&=\Theta^{+}_{L_{d}}(\varphi_{L_{d}}^{N-1}(v_{d}))(\varphi_{L_{d}}^{N-1})_{\ast}(w_{v_{d}}))-\Theta^{-}_{L_{d}}(v_{d})(w_{v_{d}})\\[3mm]
&=(\varphi_{L_{d}}^{N-1})^{\ast}\Theta^{+}_{L_{d}}(v_{d})(w_{v_{d}})-\Theta^{-}_{L_{d}}(v_{d})(w_{v_{d}}),
\end{split}
\end{equation*}
where the first term on the right-hand side of \eqref{dS_d} vanishes because of \eqref{DisEulLagEqn}. Thus, we get
\begin{equation}\label{dHatSd}
\mathbf{d}\hat{\mathfrak{S}}_{d}(v_{d})=(\varphi_{L_{d}}^{N-1})^{\ast}\Theta^{+}_{L_{d}}(v_{d})-\Theta^{-}_{L_{d}}(v_{d}).
\end{equation}
Taking the exterior derivative of \eqref{dHatSd} leads to, in view of $\mathbf{d}^{2}\hat{\mathfrak{S}}_{d}=0$, the structure preserving property of the discrete Lagrangian two-form:
\begin{equation}\label{SymplecticProperty_DisLagSys}
(\varphi_{L_{d}}^{N-1})^{\ast}\Omega_{L_{d}}=\Omega_{L_{d}}
\end{equation}
or, simply,
\begin{equation*}
\varphi_{L_d} ^\ast \Omega _{L_d}= \Omega _{L_d}.
\end{equation*} 
In the above, the Lagrangian two-form $\Omega_{L_{d}}$ is defined with respect to either $\mathbb{F}  ^+L_d$ or $ \mathbb{F}  L_d ^-$ as
\[
\Omega _{L_d}=\mathbf{d}\Theta^{+}_{L_{d}}=\mathbf{d}\Theta^{-}_{L_{d}}
\]
with the coordinate expression
\[
\Omega _{L_d}(q_{0},q_{1})=\frac{\partial^{2}L_{d} }{\partial q^{i}_{0} \partial q^{j}_{1}} dq^{i}_{0}\wedge dq^{j}_{1}.
\]
This is  the discrete analogue of the Lagrangian two-form in the continuous setting given in \eqref{LagrangianForms} and hence $\Omega_{L_{d}}$ is the discrete symplectic structure on $Q \times Q$ that is induced by the discrete Legendre transforms $\Theta^{\pm}_{L_{d}}$ as $ \Omega _{L_d}:= (\mathbb{F} ^\pm L_d) ^\ast \Omega_{T^{\ast}Q}=(\mathbb{F} ^\pm L_d) ^\ast (-\mathbf{d}\Theta_{T^{\ast}Q})$.

\paragraph{Momentum matching.}
Associated with the discrete Legendre transform, let us introduce the notations for the momentum at the two endpoints of each interval $[t_k, t_{k+1}]$ as
\begin{equation*}
\begin{split}
&p^{+}_{k,k+1}=p^{+}(q_{k},q_{k+1})=\mathbb{F}^{+}L_{d}(q_{k},q_{k+1}),\\
&p^{-}_{k,k+1}=p^{-}(q_{k},q_{k+1})=\mathbb{F}^{-}L_{d}(q_{k},q_{k+1}).
\end{split}
\end{equation*}
From the discrete Euler--Lagrange equations \eqref{DiscreteEulerLagEq}, one observes
\begin{equation*}
\mathbb{F}^{+}L_{d}(q_{k-1},q_{k})=\mathbb{F}^{-}L_{d}(q_{k},q_{k+1}), \;\; \textrm{or},\;\; p^{+}_{k-1,k}=p^{-}_{k,k+1},
\end{equation*} 
which implies that the discrete Euler--Lagrange equations simply enforce the condition that the momentum at time $k$ must be the same when evaluated over the lower interval $[k-1, k]$ and the upper interval $[k, k+1]$. Then, along a solution curve, there is a unique momentum at each time $k$, which is denoted by
\[
p_{k}\equiv p^{+}_{k-1,k}=p^{-}_{k,k+1}.
\]
Thus, the discrete trajectory $\{q_{k}\}_{k=0}^{N}$ in $Q$ can be lifted as a trajectory $\left\{(q_{k}, q_{k+1})\right\}_{k=0}^{N-1}$ in $Q \times Q$ or, equivalently, as a trajectory $\left\{(q_{k}, p_{k})\right\}_{k=0}^{N}$ in $T^{\ast}Q$.

\paragraph{Discrete Hamiltonian maps.}
Associated with the discrete Lagrange map $\varphi_{L_{d}}: Q \times Q \to Q \times Q$, 
we can develop the discrete Hamiltonian map $\tilde{\varphi}_{L_{d}}: T^{\ast}Q \to T^{\ast}Q$ by
\begin{equation*}
\begin{split}
\tilde{\varphi}_{L_{d}}=\mathbb{F}^{+}L_{d}\circ (\mathbb{F}^{-}L_{d})^{-1}:
(q_{0},p_{0}) \mapsto (q_{1},p_{1}), 
\end{split}
\end{equation*}
where
\begin{equation*}
\begin{split}
p_{0}=-D_1L_d( q _{0} , q _{1} ),\quad p_{1}=D_2L_d( q _{0} , q _{1} ).
\end{split}
\end{equation*}
Note that the discrete Hamiltonian maps are equivalent to the following definitions: 
\begin{equation*}
\begin{split}
\tilde{\varphi}_{L_{d}}&=\mathbb{F}^{+}L_{d} \circ \varphi_{L_{d}} \circ (\mathbb{F}^{+}L_{d})^{-1},\\[2mm]
\tilde{\varphi}_{L_{d}}&=\mathbb{F}^{-}L_{d} \circ \varphi_{L_{d}} \circ (\mathbb{F}^{-}L_{d})^{-1},\\[2mm]
\tilde{\varphi}_{L_{d}}&=\mathbb{F}^{+}L_{d}  \circ (\mathbb{F}^{-}L_{d})^{-1}.
\end{split}
\end{equation*}
Then, we get the following commutative diagram.
\[
\xymatrix@R=60pt{
&(q_{0},q_{1})\ar[ld]_{\mathbb{F}^{-}L_{d}}\ar[rr]^-{\varphi_{L_{d}}}\ar[dr]^{\mathbb{F}^{+}L_{d}} & &(q_{1},q_{2})\ar[dl]_{\mathbb{F}^{-}L_{d}}\ar[dr]^{\mathbb{F}^{+}L_{d}}\\
(q_{0},p_{0})\ar[rr]^-{\tilde{\varphi}_{L_{d}}} & & (q_{1},p_{1})\ar[rr]^-{\tilde{\varphi}_{L_{d}}}&&(q_{2},p_{2})
}
\]

\subsection{Nonholonomic integrators} 
Here let us make a short review on the variational discretization of Lagrangian systems with nonholonomic constraints, in which we introduce the discrete Lagrange--d'Alembert principle that is a discrete analogue of the Lagrange--d'Alembert principle shown in \S\ref{sect:LDAP}. 
\medskip

\paragraph{The discrete Lagrange--d'Alembert principle.}
Following \cite{CoMa2001,McPe2006}, the constraint distribution $\Delta_Q \subset TQ$ that is given in \eqref{ConstraintDistribution} can be discretized as a discrete constraint space $\Delta_Q^{d} \subset Q \times Q$ such that
\begin{equation}\label{DiscConstDist}
\begin{split}
\Delta_{Q}^{d}:=\left\{ (q_{0},q_{1}) \in Q \times Q \mid  \omega^{r}_d(q_0,q_1)=0,\,r=1,...,m <n \right\},
\end{split}
\end{equation}
where $ \omega^{r}_d: Q \times Q \to \mathbb{R},\; r=1,...,m$ are functions which span the annihilator of $\Delta_Q^{d}$ and are defined by $\omega^{r}_d(q_0,q_1)=\left<\omega^{r}(q_0), \Psi_Q(q_0,q_1)\right>$. Notice that the base point $q_0$ is chosen from $(q_0,q_1) \in Q \times Q$ in \cite{CoMa2001,McPe2006}, so that $\omega^{r}(q)$ is evaluated at $q_0$. Later, we will also discuss the case in which the base point $q_1$ is chosen from $(q_0,q_1) \in Q \times Q$ so that $\omega^{r}(q)$ is evaluated at $q_1$ in \S \ref{Sect:DiscDiracStr_LagSys}.
\medskip

Recall that $q_{d}=\{ q_0, q_1, ....,q_{N} \}$ is the discrete sequences on $Q$ and also that for the discrete Lagrangian $L_d: Q\times Q \to \mathbb{R}$, the stationarity condition of the discrete action sum $\mathfrak{S}_d$ is given by equation \eqref{dS_d}. For the case in which there exist the discrete constraints \eqref{DiscConstDist} that are derived from the nonholonomic constraint \eqref{ConstraintDistribution}, the discrete Lagrange--d'Alembert principle is given by
\begin{equation}\label{dS_d_LDAP}
\begin{split}
\delta \mathfrak{S}_{d}(q_{d})=\sum_{k=1}^{N-1} \left(D_2L_d(q_{k-1}, q_k)+D_1L_d(q_k, q_{k+1})\right)\delta q_{k}=0,
\end{split}
\end{equation}
for the chosen variation $\delta{q}_k \in \Delta_Q({q}_k)$ with the fixed boundary conditions $\delta{q}_0=\delta{q}_N=0$, together with the constraints $\omega^{r}_d(q_k,q_{k+1})=0$.

Thus we get the {\it discrete Lagrange--d'Alembert equations} as,
\begin{equation}  \label{eq: local_discrete_LDA}
\left\{
\begin{aligned}
&D_2L_d(q_{k-1}, q_k) +D_1L_d(q_k, q_{k+1})\in \Delta^\circ_Q({q}_k),\quad k=1,\ldots, N-1, \\[2mm]
&(q_k,q_{k+1}) \in \Delta_Q^d,\quad k=0,\ldots, N-1,
\end{aligned}
\right.
\end{equation}
which are described in coordinates by
\begin{equation}  \label{eq: local_discrete_LDA_mu}
\left\{
\begin{aligned}
&\frac{\partial L_{d}}{\partial q_{k}}(q_{k-1}, q_k)+\frac{\partial L_{d}}{\partial q_{k}}(q_k, q_{k+1})=\mu_{r}\omega^{r}_{i}(q_k), \quad i=1,...,n, \\[1mm]
& \omega^{r}_d(q_k,q_{k+1})=0,\quad r=1,...,m <n.
\end{aligned}
\right.
\end{equation}
\paragraph{Discrete flow map.} Under appropriate regularity assumption, by the implicit function theorem, if the following Jacobian 
matrix
\begin{equation}\label{regularity_criteria} 
\left[
\begin{array}{ll}
D_2 D_1 L_d(q_k,q_{k+1}) & \omega^{r}(q^{k}) \\
D_2\omega^{r}(q_k,q_{k+1}) & 0 
\end{array}
\right]
\end{equation} 
is invertible for each $(q_k,q_{k+1})$ in a neighborhood of the diagonal of $Q\times Q$, then it is guaranteed that there exists the second-order discrete flow map
$$
\varphi_{L_{d}}:(q_{k-1},q_{k})\mapsto(q_k,q_{k+1}),
$$
so that $q_{k+1}$ satisfies the discrete Lagrange--d'Alembert equations in \eqref{eq: local_discrete_LDA_mu} provided that $(q_{k-1},q_{k}) \in \Delta_Q^d$.

\paragraph{Structural property of the discrete nonholonomic systems.} 
In the case of nonholonomic mechanical systems, we show that the preservation of the discrete Lagrangian two-form is broken. 
Let $\tau_{Q_1}: Q \times Q \to Q; (q_{0},q_{1}) \mapsto q_{0}$ be the natural projection onto the first slot. 

It follows from \eqref{dS_d_LDAP} that the variation \eqref{dS_d} with nonholonomic constraints reads
 that, for an initial condition $v_d=(q_0,q_1)$ and for any $w_{v_{d}}=(q_0, q_1, \delta{q}_0,\delta{q}_1)$,
\begin{equation*}
\begin{split}
\mathbf{d}\hat{\mathfrak{S}}_{d}(v_{d}) \cdot w_{v_{d}}&=
\beta(\tau_{Q_1}(v_{d})) \cdot T\tau_{Q_1}(w_{v_{d}})+
\Theta^{+}_{L_{d}}(\varphi_{L_{d}}^{N-1}(v_{d}))(\varphi_{L_{d}}^{N-1})_{\ast}(w_{v_{d}}))-\Theta^{-}_{L_{d}}(v_{d})(w_{v_{d}})\\[3mm]
&=\tau_{Q_1}^{\ast}\beta(v_{d})\cdot w_{v_{d}}+
(\varphi_{L_{d}}^{N-1})^{\ast}\Theta^{+}_{L_{d}}(v_{d})(w_{v_{d}})-\Theta^{-}_{L_{d}}(v_{d})(w_{v_{d}}),
\end{split}
\end{equation*}
where $\beta=\mu_{r}\omega^{r}_{i}(q)$ is the one-form on $Q$ that implies constraint forces. For the nonholonomic mechanical systems, note that the first term on the right-hand side of \eqref{dS_d} is replaced by the constraint force because of \eqref{eq: local_discrete_LDA} or \eqref{eq: local_discrete_LDA_mu}. Therefore, we get
\begin{equation*}
\mathbf{d}\hat{\mathfrak{S}}_{d}(v_{d})=\tau_{Q_1}^{\ast}\beta(v_{d})+(\varphi_{L_{d}}^{N-1})^{\ast}\Theta^{+}_{L_{d}}(v_{d})-\Theta^{-}_{L_{d}}(v_{d}).
\end{equation*}
Setting 
$\boldsymbol{\beta}_{d}=\tau_{Q_1}^{\ast}\beta$, it follows
\[
\varphi^{\ast}_{L_{d}}\Omega_{L_{d}}(v_{d})=\Omega_{L_{d}}(v_{d})+\mathbf{d}\boldsymbol{\beta}_{d}(v_{d}).
\]
Since $v_d$ is arbitrary, we get the structural property of the discrete Lagrange--d'Alembert equations as
\begin{equation*}
\varphi_{L_{d}}^{\ast}\Omega_{L_d}=\Omega_{L_d}+\mathbf{d}\boldsymbol{\beta}_{d}.
\end{equation*}
The above structural relation is the discrete analogue of the structure-preserving property in the continuous setting that is given in \eqref{LDA_Str} and it follows that 
the structure preserving property of the discrete Lagrangian two-form is broken for the case of nonholonomic mechanical systems.

\section{Dirac structures in Lagrangian mechanics}
In the previous section, we examined the variational discretization of a Lagrangian system with nonholonomic constraints, where the structure-preserving property is broken in the discrete setting, similar to the continuous case. To clarify the underlying structure of such nonholonomic Lagrangian systems, we will provide a brief review of Dirac structures and their associated Lagrange--Dirac systems (or implicit Lagrangian systems). 

\subsection{Review on continuous Dirac structures in mechanics}
We begin by describing the concept of Dirac structures on manifolds, focusing on their application to induced Dirac structures on cotangent bundles. These structures play a crucial role in the construction of Dirac dynamical systems in Lagrangian mechanics. For further details, see \cite{CoWe1988, Co1990a, YoMa2006a}.

\paragraph{Dirac structures on vector spaces.} Before going into details on Dirac structures on a manifold, we start with the definition of a linear Dirac structure on a vector space $V$. Let $V$ be an $n$-dimensional vector space, and $V^{\ast}$ its dual vector space. We consider a symmetric paring
$\langle \! \langle\cdot,\cdot \rangle \!  \rangle$ on $V \oplus V^{\ast}$ that is defined by, for $(v,\alpha), (\bar{v},\bar{\alpha}) \in V \oplus V^{\ast}$,
\begin{equation*}
\langle \! \langle\, (v,\alpha),
(\bar{v},\bar{\alpha}) \,\rangle \!  \rangle
=\langle \alpha, \bar{v} \rangle
+\langle \bar{\alpha}, v \rangle,
\end{equation*}
where $\langle\cdot \, , \cdot\rangle$ denotes the dual paring between $V^{\ast}$ and $V$.
\medskip

A {\it linear Dirac structure} on $V$ is defined as a subspace $D \subset V \oplus
V^{\ast}$ such that
$D=D^{\perp}$, where $D^{\perp}$ is the orthogonal subspace relative to the pairing
$\langle \! \langle \cdot,\cdot \rangle \!  \rangle$, given by
\begin{equation*}
D^{\perp}:=\left\{(v,\alpha)\in V\oplus V^*\mid \langle \bar{\alpha}, v\rangle+\langle \alpha, \bar{v}\rangle=0\text{ holds for any }(\bar{v},\bar{\alpha})\in D\right\}.
\end{equation*}

\paragraph{Dirac structures on manifolds.} 
Let $M$ be an $n$-dimensional smooth manifold and we denote by $TM$ and $T^*M$ the tangent bundle and cotangent bundle over $M$ respectively. 
Then an (almost) Dirac structure on $M$ is defined as a subbundle $D_{M} \subset TM \oplus T^{\ast}M$ such that the subspace $D_{M}(x)\subset T_{x}M\times T^*_{x}M$ is a Dirac structure in the sense of vector spaces at each point $x\in M$. An example of such a Dirac structure on a manifold $M$ is constructed using a two-form and a distribution on $M$, as shown below.
\medskip

Denote by $\Omega_{M}$ a two-form on $M$ and by $\Delta_{M}$ a distribution on $M$. A Dirac structure $D_{M}$ on $M$ is given by, for each  $ x \in M$, 
\begin{equation}\label{DiracManifold}
\begin{split}
D_{M}(x)=\{ (v_{x}, \alpha_{x}) \in T_{x}M \times T^{\ast}_{x}M
  \; \mid \; & v_{x} \in \Delta_{M}(x), \; \mbox{and} \\ 
  & \alpha_{x}(w_{x})=\Omega_{\Delta_{M}}(x)(v_{x},w_{x}) \; \;
\mbox{for all} \; \; w_{x} \in \Delta_{M}(x) \},
\end{split}
\end{equation}
where $\Omega_{\Delta_{M}}$ is the restriction of $\Omega_{M}$ to $\Delta_{M}$. 
\medskip

In this paper, we sometimes use the following representation which is equivalent with the Dirac structure in \eqref{DiracManifold}:
\begin{equation*}
\begin{aligned}
D_M(x):=&\left\{(X_x,\alpha_x)\in T_xM\oplus T^*_xM\mid X_x\in\Delta_M(x)\text{ and } \alpha_x-\mathbf{i}_{X_x}\Omega_{\Delta_M}(x)\in \Delta_M^{\circ}(x)\right\}.
\end{aligned}
\end{equation*}
Furthermore, $D_{M}$ may be equivalently written as 
\begin{equation}\label{ContinuousDiracStructure}
\begin{split}
D_M(x):=&\left\{(X_x,\alpha_x)\in T_xM\oplus T^*_xM\mid X_x\in\Delta_M(x)\text{ and } \alpha_x-\Omega^{\flat}_{M}(x)\cdot X_{x}\in \Delta_M^{\circ}(x)\right\},
\end{split}
\end{equation}
where $\Omega_{M}^{\flat}: TM \to T^{\ast}M$ is the bundle map defined by, for each $x \in M$, 
\[
\Omega^{\flat}_{M}(x)(X_{x})(Y_{x})=\mathbf{i}_{X_x}\Omega_{\Delta_M}(x)(Y_{x})=\Omega_{M}(x)(X_{x},Y_{x}), \;\;\textrm{for all}\;\; X, Y \in \mathfrak{X}(M).
\]
\paragraph{Integrability.}
The Dirac structure simultaneously generalizes both two-forms and Poisson structures. Its integrability condition requires either the closedness of the two-form or the satisfaction of Jacobi's identity by the Poisson tensor. In particular, if  the condition
\begin{equation*}\label{ClosedCond}
\langle \pounds_{X_1} \alpha_2, X_3 \rangle
+\langle \pounds_{X_2} \alpha_3, X_1 \rangle+\langle \pounds_{X_3}
\alpha_1, X_2 \rangle=0
\end{equation*}
is satisfied for all pairs of vector fields and one-forms $(X_1, \alpha_1)$,
$(X_2,\alpha_2)$, $(X_3,\alpha_3)$ that take values in $D_{M}$, then $D_{M}$ is said to be an {\it integrable} Dirac structure.

In mechanics, we often encounter mechanical systems with nonholonomic constraints, where the Dirac structure fails to satisfy the integrability condition, sometimes termed as an `almost' Dirac structure. In this paper, we will refer to a Dirac structure even in such cases, unless otherwise stated.

\paragraph{The induced Dirac structure on the cotangent bundle.} 
We have already shown the construction of the Dirac structure as in \eqref{DiracManifold} or \eqref{ContinuousDiracStructure}.
Here we consider the standard example of Dirac structures in mechanics, i.e., an induced Dirac structure on the cotangent bundle of a configuration manifold by applying the construction. 

Let $Q$ be an $n$-dimensional configuration manifold.  We consider a Dirac structure on the cotangent bundle $T^{\ast}Q$ that is induced from a given distribution $\Delta_{Q}$ on $Q$.
The distribution $\Delta_{T^{\ast}Q}$ on $T^{\ast}Q$ is defined by lifting $\Delta_{Q}$ as
\begin{equation*}
\Delta_{T^{\ast}Q}
:=( T\pi_{Q})^{-1} \, (\Delta_{Q}) \subset T(T^{\ast}Q),
\end{equation*}
where $T\pi_{Q}:TT^{\ast}Q \to TQ$  is the tangent map of $\pi_{Q}:T^{\ast}Q \to Q$.
\medskip

We can define an {\it induced Dirac structure} $D_{\Delta_{Q}}$ on $T^{\ast}Q$ by the subbundle of $T  T^{\ast}Q \oplus T ^{\ast} T^{\ast}Q$, whose fiber is given by, for each $z \in
T^{\ast}Q$, 
\begin{align}\label{IndDiracStru}
D_{\Delta_{Q}}(z)
& =\left\{ (X_{z}, \alpha_{z}) \in T_{z}T^{\ast}Q \times
T^{\ast}_{z}T^{\ast}Q \mid \; \right. \nonumber\\
&\left.\hspace{2cm}
 X_{z} \in
\Delta_{T^{\ast}Q}(z),  \; \mbox{and}\;
\alpha_{z}- \Omega_{T^{\ast}Q}^{\flat}(z) \cdot X_{z} \in \Delta_{T^{\ast}Q}^{\circ}\right\},
\end{align}
where $\Omega^{\flat}_{T^{\ast}Q}: TT^{\ast}Q \to T^{\ast}T^{\ast}Q$ is the bundle map associated with the canonical symplectic form $\Omega_{T^{\ast}Q}$.
Note that the induced Dirac structure is integrable if and only if the constraint distribution $ \Delta _Q$ is holonomic.

\paragraph{Local expressions for the induced Dirac structure.}
We consider a local expression of the induced Dirac structure, and we choose local coordinates $q
^1,...,q^n$ for $q \in Q$ so that $Q$ is locally denoted by an open set $U \subset
\mathbb{R}^n$. The distribution $\Delta_Q$ defines a constraint subspace $\Delta_Q(q)$ of $T_qQ$ at each point $q \in Q$ and 
if the dimension of the constraint space is $n-m$, then we can choose a basis $e _{m+1}(q), e _{m+2}(q),\ldots, e _n (q)$ of
$\Delta(q)$.

Since the cotangent bundle projection $\pi_Q: T^{\ast}Q\rightarrow Q $ is denoted  
as $z=(q,p) \mapsto q$, its tangent map is locally given by
$T\pi_Q : TT^{\ast}Q \rightarrow TQ; (q, p, \dot{q}, \dot{p}) \mapsto (q, \dot{q})$.
Then the induced distribution is locally given by
\begin{equation*}
\Delta_{T^{\ast}Q}(z)=\{(q,p,\dot{q},\dot{p})\mid  \dot{q}\in\Delta_Q(q)\}
\end{equation*}
and its annihilator is locally denoted by
\begin{equation*}
\Delta_{T^{\ast}Q}^{\circ}(z)=\{(q,p,\eta,\xi)\mid \eta\in \Delta^{\circ}_Q(q) \text{ and } \xi=0\}.
\end{equation*}
The flat bundle map $\Omega^{\flat}_{T^{\ast}Q}(z): T_{z}T^{\ast}Q \to T_{z}^{\ast}T^{\ast}Q$ is locally written as $(q,p,\dot{q}, \dot{p}) \mapsto (q,p,-\dot{p},\dot{q})$, and hence the condition $\alpha_{z}- \Omega_{T^{\ast}Q}^{\flat}(z) \cdot X_{z} \in \Delta_{T^{\ast}Q}^{\circ}$ in equation \eqref{IndDiracStru} reads
\begin{equation*}
\eta+\dot{p}\in\Delta^{\circ}_Q(q) \text{ and } \xi-\dot{q}=0.
\end{equation*}
Thus we get the local expression of the induced Dirac structure as
\begin{equation}\label{LocIndDiracStr}
D_{\Delta_Q}(z)=\{((q,p,\dot{q},\dot{p}),(q,p,\eta,\xi))\mid\dot{q}\in\Delta_Q(q),\; \xi=\dot{q} \;\text{ and } \eta+\dot{p}\in\Delta^{\circ}_Q(q)\}.
\end{equation}

\subsection{Lagrange--Dirac dynamical systems}\label{Sect:LagDiracDySys}
Here, we consider a Lagrangian mechanical system with nonholonomic constraints within the framework of the induced Dirac structure on $T^{\ast}Q$.  This system is referred to as a Lagrange--Dirac dynamical system or an implicit Lagrangian system.

\paragraph{Diffeomorphisms between iterated tangent and cotangent bundles.} To consider the Dirac structure in mechanics, we employ the canonical diffeomorphisms between $T^{\ast}TQ$, $TT^{\ast}Q$ and $T^{\ast}T^{\ast}Q$. 
\medskip

In a local trivialization, $Q$ is represented by an open set $U$ in a linear space $W$, so that
$TT^{\ast}Q$ is represented by $(U \times W^{\ast}) \times (W \times W^{\ast})$, $T^{\ast}TQ$ is locally given by $(U \times W) \times (W^{\ast} \times W^{\ast})$, and $T^{\ast}T^{\ast}Q$ is locally given by $(U \times W^{\ast}) \times (W^{\ast} \times W)$. In this local trivialization, let us denote by $(q,p)$ the local coordinates of $T^{\ast}Q$ and also by $(q,p,\delta{q},\delta{p})$ the corresponding
coordinates of $TT^{\ast}Q$, and $(q, \delta q, \delta p, p)$ are the local coordinates of $T^{\ast}TQ$, while $(q, p, -\delta p, \delta q)$ are the local coordinates of $T^{\ast}T^{\ast}Q$.

\begin{figure}[htbp]
  \begin{center}
    \includegraphics[scale=0.6]{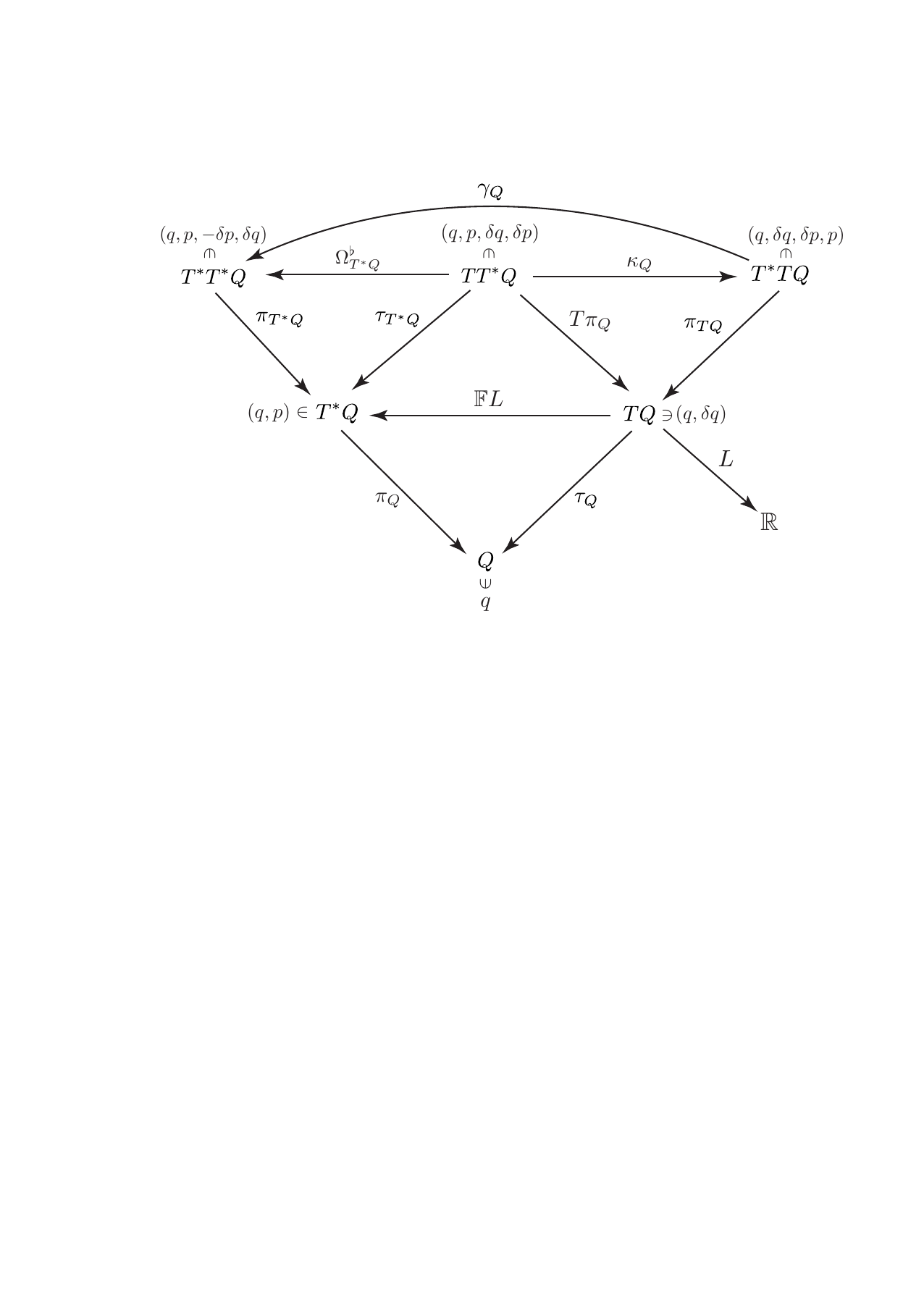}
  \end{center}
  \caption{The bundle picture of the iterated tangent and cotangent bundles.}
  \label{fig:BundPic}
\end{figure}

As in Fig. \ref{fig:BundPic}, there exist three canonical diffeomorphisms: 
\begin{equation}\label{eq:candiff}
\begin{split}
&\Omega_{T^{\ast}Q}^{\flat}: TT^{\ast}Q \to T^{\ast}T^{\ast}Q; \hspace{1.6cm}\, (q,p,\delta{q},\delta{p}) \mapsto (q,p,-\delta{p}, \delta{q}), \\
&\kappa_{Q}: TT^{\ast}Q \to T^{\ast}TQ;\hspace{2.2cm} (q, p, \delta q, \delta p) \mapsto (q, \delta q, \delta p, p),\\
&\gamma_{Q}:=\Omega^{\flat} \circ \kappa_{Q}^{-1}: T^{\ast}TQ \to T^{\ast}T^{\ast}Q;\; \quad (q,\delta{q},\delta{p},p) \mapsto (q,p,-\delta{p}, \delta{q}).
\end{split}
\end{equation}
In the above, the first diffeomorphism $\Omega^{\flat}_{T^{\ast}Q}:TT^{\ast}Q \to T^{\ast}T^{\ast}Q$ is obviously the bundle map associated with the canonical symplectic structure $\Omega_{T^{\ast}Q}$. The second diffeomorphism  $\kappa_{Q}: TT^{\ast}Q \to T^{\ast}TQ$ was originally introduced by \cite{Tu1977} in the context of the generalized Legendre transform, and $\kappa_{Q}$ is the unique map that intertwines two sets of maps given
as follows.
\medskip

The first commutation condition that is used to define $\kappa_Q$ is given by the set of maps $T{\pi_Q}: TT^{\ast}Q \to TQ$ and $\pi_{TQ}: T^{\ast}TQ \to TQ$ as
\begin{equation*}
\pi_{TQ} \circ \kappa_Q = T \pi_Q.
\end{equation*}
The second commutation condition is given by the set of maps $\tau_{T^{\ast}Q}: TT^{\ast}Q \to T^{\ast}Q$, and $\varrho: T^{\ast}TQ \to T^{\ast}Q$ as
\begin{equation*}
\varrho \circ \kappa_Q=\tau_{T^{\ast}Q}.
\end{equation*}
In the above, the map $\varrho: T^{\ast}TQ \to T^{\ast}Q$ is defined by, for $\alpha_{v_q} \in T^{\ast}_{v_q}TQ$ and $u_q \in T_qQ$,
\[
\langle \varrho(\alpha_{v_q}), u_q \rangle=\langle \alpha_{v_q}, 
\textrm{ver}(u_q,v_q) \rangle,
\]
where
\[
\mathrm{ver}(u_q,v_q)=\frac{d}{dt}\bigg|_{t=0}(v_q+tu_q) \in T_{v_q}TQ
\]
is the vertical lift of $u_q$ along $v_q$.
\medskip

In the local trivialization, two sets of maps are readily checked to be given by
\[
\begin{split}
T\pi_Q(q,p,\delta q,\delta p)&=(q,\delta q),\\
\pi_{TQ}(q,\delta q, \delta p, p)&=(q, \delta q),\\
\tau_{T^{\ast}Q}(q,p,\delta q,\delta p)&=(q,p),\\
\varrho(q,\delta q,\delta p,p)&=(q,p),
\end{split}
\]
from which it is straightforward to check that the commutation conditions are 
satisfied and it is clear that this uniquely characterizes the map 
$\kappa_Q$.

\paragraph{The Lagrange--Dirac dynamical systems.} Let $L:TQ \to \mathbb{R}$ be a Lagrangian, possibly degenerate. The differential $ \mathbf{d} L:TQ \rightarrow T^{\ast}TQ$ of $L$ is a one-form on $TQ$, which is locally given by
\[
\mathbf{d} L(q,v)= \left( q,v, \frac{\partial L}{\partial q}, \frac{\partial L}{\partial v}\right) .  
\]
Using the  diffeomorphism $ \gamma_{Q}:T^{\ast}TQ \rightarrow T^{\ast}T^{\ast}Q$, we introduce the {\it Dirac differential} of $L$ as 
\[
\mathbf{d}_{D} L:= \gamma_{Q} \circ \mathbf{d} L : TQ \to T^{\ast}T^{\ast}Q,
\]
which is locally given by
\begin{equation}\label{DiracDiff}
\mathbf{d}_{D} L(q,v)= \left(q,\frac{\partial L}{\partial v}, - \frac{\partial L}{\partial q},  v \right).
\end{equation}

\begin{remark}\rm
The Dirac differential form \eqref{DiracDiff} is closed in the sense that 
\begin{equation*}
\begin{aligned}
\mathbf{d}\circ\mathbf{d}_{D}L(q,v)&=\mathbf{d}\left(-\frac{\partial L}{\partial q^{i}}\mathbf{d}q^{i}+v^{i}\mathbf{d}\left(\frac{\partial L}{\partial v^{i}}\right)\right)\\
&=-\frac{\partial^2 L}{\partial v^{j}\partial q^{i}}\mathbf{d}v^{j}\wedge \mathbf{d}q^{i}-\frac{\partial^2 L}{\partial q^{j}\partial q^{i}}\mathbf{d}q^{j}\wedge \mathbf{d}q^{i}+ \mathbf{d}v^{i}\wedge \left(\frac{\partial^{2} L}{\partial q^{j}\partial v^{i}}\mathbf{d}q^{j}+\frac{\partial^{2} L}{\partial v^{j}\partial v^{i}}\mathbf{d}v^{j}\right)\\
&=0.
\end{aligned}
\end{equation*}
\end{remark}

\begin{definition}\rm\label{ContLagDiracDynSys}
Let $ \Delta _Q \subset TQ$ be a distribution on $Q$ and consider the induced Dirac structure $D_{\Delta_{Q}}$ on $T^{\ast}Q$. The {\it Lagrange--Dirac dynamical system} is given by 
\begin{equation}\label{LDirac_system} 
\left( (q(t),p(t),\dot{q}(t),\dot{p}(t)), \mathbf{d}_{D} L(q(t),v(t))\right)  \in D_{ \Delta _Q }(q(t),p(t)).
\end{equation} 
Any curve $(q(t),v(t),p(t)) \in TQ \oplus T^{\ast}Q,\; t\in [0,T]$ satisfying \eqref{LDirac_system} is called a {\it solution curve} of the Lagrange--Dirac dynamical system.
It immediately follows from \eqref{LocIndDiracStr}  that such a solution curve satisfies the equations of motion for the Lagrange--Dirac dynamical system
\begin{equation}\label{LDAPeqn}
p =\frac{\partial L}{\partial v }, \quad \dot{q} =v \in \Delta_Q(q), \quad  \dot{p} - \frac{\partial L}{\partial q}
\in \Delta_Q^{\circ} (q).
\end{equation}
\end{definition}
Notice that the equations of motion in \eqref{LDAPeqn} include the Lagrange--d'Alembert equations $ \dot{p}-{\partial L}/{\partial q} \in \Delta_{Q}^{\circ}(q)$, the Legendre transformation  $p={\partial L}/{\partial v}$ and the second-order condition $\dot{q}=v \in \Delta_{Q}(q)$. For the unconstrained case $\Delta_{Q}=TQ$, we recover the {\it implicit Euler--Lagrange equations}.

\section{Discrete Dirac structures in Lagrangian mechanics}\label{Sect:DiscDiracStr_LagSys}
So far, we have described the continuous setting of the induced Dirac structure and its associated Lagrange--Dirac dynamical systems. In this section, we will explore the discretization of the induced Dirac structure and its associated discrete Lagrange--Dirac systems.

\subsection{Discrete Dirac structures on manifolds}\label{Sec:DiscDiracStr}
First, we begin with the discretization of a continuous Dirac structure on a manifold, introducing $(\pm)$-finite difference maps, $(\pm)$-discrete dual parings, $(\pm)$-discrete two-forms, and $(\pm)$-discrete constraint spaces to construct $(\pm)$-discrete Dirac structures.

\paragraph{Discrete symplectic structures.} Let $M$ be an $n$-dimensional smooth manifold equipped with a two-form $\Omega_M$ and let us first consider the discretization of the two-form $\Omega_M$ by introducing $(\pm)$-finite difference maps.
Let $x^1,...,x^n$ be local coordinates for each point $x \in M$. Let $I=\{t \in \mathbb{R} \mid 0 \le t \le T\}$ be the space of time interval and recall $I_{d}=\{t_{k}\}_{k=0}^{N}$ is the associated increasing sequence of discrete points  with $I$, which is given by
$I_{d}:=\{ t_{k}=kh \in \mathbb{R} \mid k=0,...,N \in \mathbb{N} \}$ where $h=t_{k+1}-t_{k} \in \mathbb{R}^{+}$ is a constant time step. 
Associated with the continuous path space $\mathcal{C}(M)$,
recall the \textit{discrete path space} is given by
$
\mathcal{C}_d(M)=\left\{x_{d}: \{t_{k}\}_{k=0}^{N} \to M \right\}, 
$
which is isomorphic to $M \times \cdots \times M$ ($N +1$ copies). Then, corresponding to a continuous curve  $x: I \to M$ together with its image $x(t) \in M$, we identify a \textit{discrete path} $x_d \in \mathcal{C}_d(M)$ with its image 
$
x_{d}=\{ x_k \}_{k=0}^N,
$
where $x_k:=x_d(t_k)$.
\begin{definition}\rm \label{Def:DisMap}
Consider $(\pm)$-\textit{finite difference  maps} $\Psi_{M_{i}}^{\pm}: M \times M \to TM,\;i=1,2$ that are given by
\begin{equation*}
\begin{split}
&\hat{x}^{+}_{1}=\Psi^{+}_{M_{2}}(x_{0},x_{1}) =\left(x_1, \frac{x_{2}-x_{1}}{h}\right) \in T_{x_{1}}M,\\
&\hat{x}^{-}_{0}=\Psi^{-}_{M_{1}}(x_{0},x_{1}) =\left(x_0, \frac{x_{0}-x_{-1}}{h}\right) \in T_{x_{0}}M,
\end{split}
\end{equation*}
where  $\hat{x}^{+}_{1} \in T_{x_{1}}M$ and $\hat{x}^{-}_{0} \in T_{x_{0}}M$  denote the forward difference and backward difference approximations of the velocities  $\dot{x}_{1}=\dot{x}(t_{k+1})$ and $\dot{x}_{0}=\dot{x}(t_{k})$, respectively. 
\end{definition}

\begin{remark}\rm
The notations $\Psi_{M_{i}}^{\pm}$ ($i=1,2$) are adopted in this paper, whose subindex indicates the base point of the $i$-th manifold of $M\times M$, and $(+)$ (or $(-)$) means a forward difference approximation  (or backward difference approximation) at the specified base point. The  $(\pm)$-finite difference maps $\Psi_{M_{i}}^{\pm}$ ($i=1,2$) must be properly chosen so that they cannot violate the property of the discrete Dirac structures as shown below. In fact, for the case of $M=T^{\ast}Q$, there exist some twisted structures as in Def. \ref{def:DiscMap_Cot}; see also Remark \ref{rem:DiscMap_Cot} in \S\ref{Sec:DisDiracStuCotBun}.

\end{remark}

\paragraph{Discrete dual parings.} Let us introduce discrete analogues of the dual paring $\left<\cdot,\cdot\right>: T^{\ast}M \times TM \to \mathbb{R}$, i.e., $(\pm)$-discrete dual paring $\left<\cdot,\cdot\right>_{d\pm}: T^{\ast}M \times (M \times M) \to \mathbb{R}$ as follows.
\begin{definition}\label{def:pairing}
\rm
Define the $(+)$-{\it discrete dual paring} $\left<\cdot,\cdot\right>_{d+}$ between $\alpha_{x_1}=(x_{1},\alpha) \in T^{\ast}_{x_{1}}M$ and $(x_{0},x_{1}) \in M\times M$ as
\begin{equation*}
\begin{split}
\left\langle (x_{1},\alpha), (x_{0},x_{1}) \right\rangle_{d+}&=\left\langle \alpha_{x_1},\hat{x}^{+}_{1}\right\rangle=\left\langle \alpha_{x_1},\Psi_{M_{2}}^{+}(x_{0},x_{1})\right\rangle,
\end{split}
\end{equation*}
where $\hat{x}^{+}_{1}=\Psi^{+}_{M_{2}}(x_{0},x_{1})\in T_{x_{1}}M$. Similarly, define the $(-)$-{\it discrete dual paring} $\left<\cdot,\cdot\right>_{d-}$ between $\alpha_{x_0}=(x_{0},\alpha) \in T^{\ast}_{x_{0}}M$ and $(x_{0},x_{1}) \in M\times M$  as
\begin{equation*}
\begin{split}
\left\langle (x_{0},\alpha), (x_{0},x_{1}) \right\rangle_{d-}&=\left\langle \alpha_{x_0},\hat{x}^{-}_0\right\rangle=\left\langle \alpha_{x_0},\Psi_{M_{1}}^{-}(x_0,x_1)\right\rangle,
\end{split}
\end{equation*}
where $\hat{x}^{-}_0=\Psi^{-}_{M_{1}}({x}_{0},x_1)\in T_{x_{0}}M$.
\end{definition}
\begin{definition}\label{DiscTwoForm_M}\rm$\;$ Associated with the two-form $\Omega_M$ on $M$, define the {\it $(\pm)$-discrete two-form} ${\Omega}^{d\pm}_M$ on  $M$ by, for each $(x_0,x_1) \in M \times M$,
\begin{equation*}
\begin{split}
{\Omega}^{d+}_M((x_{0},x_{1}), v)&:=\Omega_M(x_1)(\hat{x}^{+}_{1}, v)=\Omega_M(x_1)(\Psi^{+}_{M_{2}}(x_{0},x_1), v), \quad \forall v \in T_{x_{1}}M,\\[1.5mm]
{\Omega}^{d-}_M((x_0,x_1), v)&:=\Omega_M(x_0)(\hat{x}^{-}_1, v)=\Omega_M(x_0)(\Psi^{-}_{M_{1}}(x_0,x_1), v), \quad \forall v \in T_{x_{0}}M.
\end{split}
\end{equation*}
\end{definition}
\begin{definition}\rm
We can define the $(\pm)$-\textit{discrete flat maps} $({\Omega}_M^{d\pm})^{\flat}: M \times M \to T^{\ast}M$ such that
\begin{equation*}
\begin{split}\hspace{1cm}
(\Omega^{d+}_M)^\flat(x_{0},x_{1})(v)&=\mathbf{i}_{(x_{0},x_{1})}\Omega^{d+}_M(v)\\
&=\mathbf{i}_{\hat{x}^{+}_{1}}\Omega_M(x_{1})(v)=\Omega_M(x_1)(\hat{x}^{+}_{1},v)\\
&=\mathbf{i}_{ \Psi^{+}_{2}(x_{0},x_{1})}\Omega_M(x_{1})(v)=\Omega_M(x_1)(\Psi^{+}_{M_{2}}(x_{0},x_{1}),v),\;\; \forall {v} \in T_{x_{1}}M,\\
(\Omega^{d-}_M)^\flat(x_{0},x_{1})(v)&=\mathbf{i}_{(x_0,x_1)}\Omega^{d-}_M(v)\\
&=\mathbf{i}_{\hat{x}^{-}_0}\Omega_M(x_{0})(v)=\Omega_M(x_0)(\hat{x}^{-}_0,v)\\
&=\mathbf{i}_{ \Psi^{-}_{1}(x_0,x_1)}\Omega_M(x_{0})(v)=\Omega_M(x_0)(\Psi^{-}_{M_{1}}(x_{0},x_{1}),v),\;\; \forall {v} \in T_{x_{0}}M.
\end{split}
\end{equation*}
\end{definition}
\paragraph{Discrete constraint spaces.} 
Next we consider the discretization of constraint sets, which are given by a regular distribution $\Delta_{M}$ on $M$; namely, $\Delta_{M}(x) \in T_{x}M$ has a constant rank for all $x \in M$.
%
\begin{definition}\rm\label{DiscConstraintSpace_M}
Associated with the distribution $\Delta_{M} \subset TM$ on $M$, define the $(\pm)$-\textit{discrete constraint spaces} $\Delta_{M}^{d\pm}  \subset M \times M$ as
\begin{equation*}
\begin{split}
\Delta_{M}^{d+}&=\left\{ (x_{0},x_{1}) \in M \times M \mid \hat{x}^{+}_{1}=\Psi^{+}_{M_{2}}(x_{0},x_{1})\in \Delta_{M}(x_{1})  \right\},
\\[1.5mm]
\Delta_{M}^{d-}&=\left\{ (x_{0},x_{1}) \in M \times M \mid \hat{x}^{-}_{0}=\Psi^{-}_{M_{1}}(x_{0},x_{1})\in \Delta_{M}(x_{0}) \right\}.
\end{split}
\end{equation*}

The \textit{annihilators of the $(\pm)$-discrete constraint spaces} are defined by 
\begin{equation*}
\begin{split}
(\Delta_{M}^{d+})^{\circ}(x_{1})&=\left\{\beta_{x_{1}} \in T^{\ast}_{x_{1}}M \mid \left<(x_{1},\beta),(x_{0},x_{1})\right>_{d+}=\left<\beta, \Psi^{+}_{M_{2}}(x_{0},x_{1})\right> =0, \; \forall (x_{0},x_{1})\in \Delta^{d+}_M\right\},\\[1mm]
(\Delta_{M}^{d-})^{\circ}(x_{0})&=\left\{\beta_{x_{0}}\in T^{\ast}_{x_{0}}M \mid \left<(x_{0},\beta),(x_{0},x_{1})\right>_{d-}=\left<\beta, \Psi^{-}_{M_{1}}(x_{0},x_{1})\right>=0,  \; \forall (x_{0},x_{1})\in \Delta^{d-}_M\right\}.
\end{split}
\end{equation*}
\end{definition}

\paragraph{Discrete Dirac structures.}
Recall from \eqref{ContinuousDiracStructure} that the induced Dirac structure on $M$ can be defined by the distribution $\Delta_M$ and the bundle map $\Omega_{M}^{\flat}: TM \to T^{\ast}M$ associated with the two-form $\Omega_{M}$; namely, for each $x \in M$, 
\begin{equation*}
D_M(x):=\left\{(X_x,\alpha_x)\in T_xM\oplus T^*_xM\mid X_x\in\Delta_M(x)\text{ and } \alpha_x-\Omega^{\flat}_{M}(x)\cdot X_{x}\in \Delta_M^{\circ}(x)\right\}.
\end{equation*}
Now, as a discrete analogue of $D_{M}$ on $M$, we propose $(\pm)$-discrete induced Dirac structures  $D^{d\pm}_{M}$ on $M$ by the following theorem.

\begin{framed}
\begin{theorem}\label{thm:discreteDirac_M}\rm 
Let $\Omega_{M}^{d\pm}$ be the discrete two-form given in Def. \ref{DiscTwoForm_M} and let $\Delta_{M}^{d\pm}$ be the discrete constraint spaces given in Def. \ref{DiscConstraintSpace_M}. Then, the $(+)$-discrete structure $D^{d+}_{M} \subset (M\times M)\times T^{\ast}M$ that is defined by, for each $x_{1} \in M$,
\begin{equation*}
\begin{aligned}
D^{d+}_{M}(x_{1}):&=\left\{((x_{0},x_{1}),\alpha_{x_1})\in(M\times M)\times T^{\ast}_{x_{1}}M \mid  (x_{0},x_{1})\in \Delta_{M}^{d+}\; \right.\\
&\left.\hspace{5cm}
 \text{ and } \;\alpha_{x_1}-(\Omega^{d+}_M)^\flat(x_{0},x_{1})\in (\Delta_{M}^{d+})^{\circ}(x_1) \right\},\\[2mm]
\end{aligned}
\end{equation*}
is a Dirac structure on $M$.
Further,  the $(-)$-discrete structure $D^{d-}_{M} \subset (M\times M)\times T^{\ast}M$ that is defined by, for each $x_{0} \in M$,
\begin{equation*}
\begin{aligned}
D^{d-}_{M}(x_{0}):&=\left\{((x_0,x_1),\alpha_{x_0})\in(M\times M)\times T^{\ast}_{x_{0}}M \mid  (x_{0},x_1)\in \Delta_{M}^{d-}\;
 \right.\\
&\left.\hspace{5cm}
 \text{ and } \;\alpha_{x_0}-({\Omega}^{d-}_M)^{\flat}(x_{0},x_1)\in (\Delta_{M}^{d-})^{\circ}(x_0)\right\}
\end{aligned}
\end{equation*}
is a Dirac structure over $M$.
\end{theorem}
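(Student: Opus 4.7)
The plan is to show that at each fixed base point, the discrete structure corresponds under the $(\pm)$-finite difference maps to the continuous linear Dirac structure on the underlying tangent/cotangent fibre, and then invoke the fact that the latter is a Dirac structure in the vector-space sense. Concretely, I would treat the $(+)$-case first and then remark that the $(-)$-case is completely symmetric by swapping $\Psi^{+}_{M_2}$ with $\Psi^{-}_{M_1}$ and $x_1$ with $x_0$.

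First I would fix $x_1\in M$ and consider the evaluation map
\[
\Psi^{+}_{M_2}(\,\cdot\,,x_1)\colon M \longrightarrow T_{x_1}M,\qquad x_0 \mapsto \hat{x}^{+}_1,
\]
which is a local diffeomorphism onto a neighbourhood of $0\in T_{x_1}M$ for $h$ sufficiently small. Using this map I would identify each element $((x_0,x_1),\alpha_{x_1})$ of $D^{d+}_{M}(x_1)$ with the pair $(\hat{x}^{+}_1,\alpha_{x_1})\in T_{x_1}M\oplus T^{\ast}_{x_1}M$. By Def.~\ref{DiscConstraintSpace_M} the membership $(x_0,x_1)\in\Delta^{d+}_M$ is equivalent to $\hat{x}^{+}_1\in\Delta_M(x_1)$, and by Def.~\ref{DiscTwoForm_M} together with the definition of the discrete flat map,
\[
(\Omega^{d+}_M)^{\flat}(x_0,x_1) \;=\; \mathbf{i}_{\hat{x}^{+}_1}\Omega_M(x_1) \;=\; \Omega^{\flat}_M(x_1)\cdot \hat{x}^{+}_1.
\]
The key auxiliary step is then to verify that $(\Delta^{d+}_M)^{\circ}(x_1)=\Delta^{\circ}_M(x_1)$. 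This follows because, as $x_0$ ranges over a neighbourhood of $x_1$ with $(x_0,x_1)\in\Delta^{d+}_M$, the images $\Psi^{+}_{M_2}(x_0,x_1)$ fill out $\Delta_M(x_1)$; hence the vanishing pairing condition defining the discrete annihilator coincides exactly with the continuous annihilator.

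Combining these three observations, the identification sends $D^{d+}_{M}(x_1)$ bijectively onto
\[
\bigl\{(X_{x_1},\alpha_{x_1})\in T_{x_1}M\oplus T^{\ast}_{x_1}M \;\bigl|\; X_{x_1}\in\Delta_M(x_1),\ \alpha_{x_1}-\Omega^{\flat}_M(x_1)\cdot X_{x_1}\in\Delta^{\circ}_M(x_1)\bigr\},
\]
which is precisely the continuous induced Dirac structure $D_M(x_1)$ described in \eqref{ContinuousDiracStructure}. Since $D_M(x_1)$ is known to satisfy $D_M(x_1)=D_M(x_1)^{\perp}$ with respect to the symmetric pairing on $T_{x_1}M\oplus T^{\ast}_{x_1}M$, the discrete fibre $D^{d+}_{M}(x_1)$ inherits this maximal isotropy property and thus qualifies as a Dirac structure on $M$. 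The analogous argument with $\Psi^{-}_{M_1}(x_0,\,\cdot\,)$ based at $x_0$ yields the statement for $D^{d-}_{M}$.

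The main obstacle I anticipate is the annihilator identification $(\Delta^{d+}_M)^{\circ}(x_1)=\Delta^{\circ}_M(x_1)$: it hinges on local surjectivity of $\Psi^{+}_{M_2}(\,\cdot\,,x_1)$ onto $\Delta_M(x_1)$, which in turn requires either a regularity hypothesis on $\Delta_M$ (already assumed in the paper) or a careful neighbourhood/step-size argument to ensure that every admissible velocity in $\Delta_M(x_1)$ is realized by some admissible discrete pair. Everything else is a direct unpacking of the definitions in \S\ref{Sec:DiscDiracStr}.
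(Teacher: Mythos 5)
Your strategy---transporting the discrete fibre through $\Psi^{+}_{M_2}(\,\cdot\,,x_1)$ into $T_{x_1}M\oplus T^{*}_{x_1}M$ and quoting maximal isotropy of the continuous induced Dirac structure---is genuinely different from the paper's proof, which never leaves the discrete setting: there the two inclusions $D^{d\pm}_M\subset (D^{d\pm}_M)^{\perp}$ and $(D^{d\pm}_M)^{\perp}\subset D^{d\pm}_M$ are verified directly, with the orthogonal complement taken inside $(M\times M)\oplus T^{*}M$ relative to the symmetrized discrete pairing $\langle\alpha,u\rangle_{d\pm}+\langle\beta,v\rangle_{d\pm}$; isotropy comes from the antisymmetry of $\Omega_M$, and maximality from testing first against $v=(0,0)$ and then against arbitrary $v\in\Delta^{d\pm}_M$. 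In effect the paper redoes the standard continuous computation in discrete notation, whereas you propose to reduce to it.

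The gap in your version sits in the last step, not where you placed it. The annihilator identification $(\Delta^{d+}_M)^{\circ}(x_1)=\Delta^{\circ}_M(x_1)$ is actually unproblematic: an annihilator depends only on the linear span of the set being annihilated, and since the image of $\Delta^{d+}_M$ under $\Psi^{+}_{M_2}(\,\cdot\,,x_1)$ contains a neighbourhood of $0$ in the subspace $\Delta_M(x_1)$, it spans $\Delta_M(x_1)$; no surjectivity onto all of $\Delta_M(x_1)$ is required there. What does need care is the assertion that $D^{d+}_{M}(x_1)$ ``inherits'' maximal isotropy. If $\Psi^{+}_{M_2}(\,\cdot\,,x_1)$ is only a local diffeomorphism onto a neighbourhood of $0$, as you yourself assume, then your identification sends $D^{d+}_{M}(x_1)$ onto the proper subset $\{(X,\alpha)\in D_M(x_1)\mid X\in \mathrm{im}\,\Psi^{+}_{M_2}(\,\cdot\,,x_1)\}$ of $D_M(x_1)$, and maximality is not inherited by proper subsets: shrinking $D$ enlarges $D^{\perp}$, so $D'\subsetneq D=D^{\perp}$ forces $D'\neq (D')^{\perp}$. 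To close this you must either argue that the identification is onto all of $T_{x_1}M$ (true when $\Psi^{+}$ is affine in $x_0$ in a global chart, but not automatic on a manifold), or observe that the discrete orthogonal complement is itself constrained to have first slot in $M\times M$, hence in the image, and that testing orthogonality against the restricted but still spanning family of elements of $D_M(x_1)$ imposes the same linear conditions as testing against all of $D_M(x_1)$; then $(D^{d+}_{M}(x_1))^{\perp}$ is exactly $D_M(x_1)^{\perp}$ intersected with the image, which is $D^{d+}_{M}(x_1)$ again. Either patch works, but as written the inheritance claim is unsupported.
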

\end{framed}
\begin{proof} Let us prove $D_M^{d\pm}\subset (D_M^{d\pm})^{\perp}$ and $(D_M^{d\pm})^{\perp}\subset D_M^{d\pm}$, where the orthogonal subspaces of $D_M^{d\pm}$ are defined as
\begin{equation}\label{DiscretePerpDirac}
(D_M^{d\pm})^{\perp}=\left\{(u,\beta)\in (M\times M)\oplus T^*M\mid \langle\alpha,u\rangle_{d\pm}+\langle\beta,v\rangle_{d\pm}=0 \text{ for all } (v,\alpha)\in D_M^{d\pm}\right\}.
\end{equation}

First let us check $D_M^{d\pm}\subset (D_M^{d\pm})^{\perp}$. Let $(v,\alpha)$ and $(v',\alpha')$ be two elements of $D_M^{d\pm}$. Recall that any element $(v,\alpha)\in D_M^{d\pm}$ satisfies the following properties:
\begin{equation*}
v\in \Delta_M^{d\pm}\; \text{ and }\; \alpha-\mathbf{i}_v{\Omega}^{d\pm}_M\in (\Delta_M^{d\pm})^{\circ},
\end{equation*}
and also for any element $(v{'},\alpha{'})\in D_M^{d\pm}$, similarly we have
\begin{equation*}
v{'}\in \Delta_M^{d\pm}\; \text{ and }\; \alpha{'}-\mathbf{i}_{v{'}}{\Omega}^{d\pm}_M\in (\Delta_M^{d\pm})^{\circ}.
\end{equation*}
Hence these properties lead to
\begin{equation*}
\begin{aligned}
\langle \alpha,v'\rangle_{d\pm}+\langle\alpha',v\rangle_{d\pm}&=\left\langle\mathbf{i}_v{\Omega}^{d\pm}_M,v'\right\rangle_{d\pm}+\left\langle\mathbf{i}_{v^{\prime}}{\Omega}^{d\pm}_M,v\right\rangle_{d\pm}\\
&=\mathbf{i}_{\Psi_M^{\pm}(v')}\mathbf{i}_{\Psi_M^{\pm}(v)}\Omega_{M}+\mathbf{i}_{\Psi_M^{\pm}(v)}\mathbf{i}_{\Psi_M^{\pm}(v')}\Omega_{M}\\
&=0,
\end{aligned}
\end{equation*}
from which it follows $(v{'},\alpha{'})\in (D_M^{d\pm})^{\perp}$. Thus $D_M^{d\pm}\subset (D_M^{d\pm})^{\perp}$.
\medskip

Next let us check $(D_M^{d\pm})^{\perp}\subset D_M^{d\pm}$. First, let $(u,\beta)\in (D_M^{d\pm})^{\perp}$ and by definition, it follows
\begin{equation}\label{d_orthogonal}
\langle\alpha,u\rangle_{d\pm}+\langle\beta,v\rangle_{d\pm}=0
\end{equation}
for all $(v,\alpha)\in (M\times M)\oplus T^{\ast}M$ such that $v \in \Delta_{M}^{d\pm}$ and $\alpha-\mathbf{i}_{v}\Omega^{d\pm}_M\in (\Delta_{M}^{d\pm})^{\circ}$, namely, $(v,\alpha)\in D_M^{d\pm}$. Now, choose $v=(0,0) \in M \times M$ which is indeed an element of $\Delta_M^{d\pm}$ and hence it  follows $\alpha-\mathbf{i}_v{\Omega}^{d\pm}_{M}=\alpha\in (\Delta_M^{d\pm})^{\circ}$.  Then, by definition of \eqref{DiscretePerpDirac}, for any element $(u,\beta)\in (D_M^{d\pm})^{\perp}$, it follows
\begin{equation*}
\langle\alpha,u\rangle_{d\pm}=0,
\end{equation*}
from which it follows $u\in \Delta_M^{d\pm}$. Second, let $v\in \Delta_M^{d\pm}$ be arbitrary and suppose that $\left<\alpha, w\right>_{d\pm}=\left\langle\mathbf{i}_v{\Omega}^{d\pm}_M,w\right\rangle_{d\pm}$ for all $w\in \Delta_M^{d\pm}$. Since we already know $u\in \Delta_M^{d\pm}$, we may set $\left<\alpha, u\right>_{d\pm}=\left\langle\mathbf{i}_v{\Omega}^{d\pm}_M,u\right\rangle_{d\pm}$ for all $u\in \Delta_M^{d\pm}$, while from \eqref{d_orthogonal}, we get 
\[
\left\langle\mathbf{i}_v{\Omega}^{d\pm}_M,u\right\rangle_{d\pm}+\langle\beta,v\rangle_{d\pm}=0
\]
for all $v \in \Delta_{M}^{d\pm}$. In other words, $\langle\beta,v\rangle_{d\pm}=\left\langle\mathbf{i}_u{\Omega}^{d\pm}_M,v\right\rangle_{d\pm}$ for all $v \in \Delta_{M}^{d\pm}$. Therefore, $(u,\beta)\in D_M^{d\pm}$.
 Thus $(D_M^{d\pm})^{\perp}\subset D_M^{d\pm}$, and the proof completes.
\end{proof}
We call the discrete structures $D^{d\pm}_{M} \subset (M\times M)\oplus T^{\ast}M$ over $M$ the {\it $(\pm)$-discrete Dirac structures} on $M$.

\subsection{Discrete Dirac structures on cotangent bundles}\label{Sec:DisDiracStuCotBun} 
 Now we shall apply the general theory of the discrete Dirac structure on a manifold $M$ described in \S\ref{Sec:DiscDiracStr} to the special case of a Dirac structure on the cotangent bundle $M=T^{\ast}Q$ of a configuration manifold $Q$, which is induced from a given constraint distribution $\Delta_{Q}$ on $Q$.

 \paragraph{The $(\pm)$-finite difference maps.} As before, corresponding to a continuous curve 
$q: I \to Q$, where $I=\{t \in \mathbb{R} \mid 0 \le t \le T\}$ is the space of time interval, we consider 
the discrete path space by
$
\mathcal{C}_d(Q)=\left\{q_{d}: I_{d} \to Q \right\}, 
$
where $I_{d}:=\{ t_{k}=kh \in \mathbb{R} \mid k=0,...,N \in \mathbb{N} \}$ denotes the increasing sequence of discrete points  with a constant time step $h=t_{k+1}-t_{k} \in \mathbb{R}^{+}$.

\begin{definition}\rm\label{def:DiscMap_Cot}
Consider the $(\pm)$-finite difference maps $\Psi^{\pm}_{(T^{\ast}Q)_{i}}:T^{\ast}Q \times T^{\ast}Q \to T(T^{\ast}Q),\;i=1,2$ as 
\begin{equation*}
\begin{split}
\Psi_{(T^{\ast}Q)_{2}}^{+}&: T^{\ast}Q \times T^{\ast}Q \to TT^{\ast}Q; (z_{0},z_{1}) \mapsto \hat{z}^+_{1}=\Psi^{+}_{(T^{\ast}Q)_{2}}(z_{0},z_{1}) \in T_{z_{1}}T^{\ast}Q,\\[3mm]
\Psi_{(T^{\ast}Q)_{1}}^{-}&: T^{\ast}Q \times T^{\ast}Q \to TT^{\ast}Q; (z_{0},z_{1}) \mapsto  \hat{z}^{-}_{0}= \Psi^{-}_{(T^{\ast}Q)_{1}}(z_{0},z_{1}) \in T_{z_{0}}T^{\ast}Q,
\end{split}
\end{equation*}
each of which is denoted, in local coordinates $(z_{0},z_{1})=(q_{0},p_{0}, q_1, p_1) \in T^{\ast}Q \times T^{\ast}Q$, as
\begin{equation*}
\begin{split}
(z_{0},z_{1})&=(q_{0},p_{0}, q_1, p_1) \in T^{\ast}Q \times T^{\ast}Q \mapsto \\
&\hat z_{1}^{+}=\Psi^{+}_{(T^{\ast}Q)_{2}}(z_{0},z_{1})=(q_1, p_1,\hat q_{1}^{-}, \hat p_{1}^{+}) =
\left(q_{1}, p_{1},  \frac{q_{1}-q_{0}}{h}, \frac{p_2-p_{1}}{h} \right)
\in T_{(q_1,p_1)}T^{\ast}Q,
\end{split}
\end{equation*}
and 
\begin{equation*}
\begin{split}
(z_{0},z_{1})&=(q_0,p_0, q_1, p_1) \in T^{\ast}Q \times T^{\ast}Q \mapsto \\
&\hat z_{0}^{-}=\Psi^{-}_{(T^{\ast}Q)_{1}}(z_{0},z_{1})=(q_0, p_0,\hat q_{0}^{+}, \hat p_{0}^{-}) 
=\left( q_0,p_0, \frac{q_{1}-q_{0}}{h}, \frac{p_0-p_{-1}}{h} \right)\in T_{(q_0,p_0)}T^{\ast}Q.
\end{split}
\end{equation*}

\if0
In the above, note that
\begin{equation}\label{hatRules_qp}
\begin{split}
\hat{p}^{-}_{1}=\left(q_{1},\frac{p_1-p_{0}}{h}\right) \in T^{\ast}_{q_{1}}Q,\quad
\hat{p}^{+}_{0}=\left(q_{0},\frac{p_1-p_{0}}{h}\right) \in T^{\ast}_{q_{0}}Q.
\end{split}
\end{equation}
Using the backward and forward finite difference maps $\Psi_{Q_{i}}^{\pm}: Q \times Q \to TQ$  ($i=1,2$), we get the approximations of $\dot{q}_{0}=\dot{q}(t_{k})$ from the data $q_{-1}, q_{0}, q_{1}, q_{2}$ that are given by the discrete path $q_{d}$ evaluated  at $t_{k-1}, t_{k}, t_{k+1}, t_{k+2} \in I_{d}$ as
\begin{equation}\label{hatq}
\begin{split}
&\hat{q}_{0}^{+} = \Psi_{Q_{1}}^{+}(q_{0},q_{1})=\left(q_{0},\frac{q_{1}-q_{0}}{h}\right) \in T_{q_{0}}Q,\quad \hat{q}_0^{-}= \Psi_{Q_{1}}^{-}(q_{0},q_{1})=\left(q_{0}, \frac{q_{0}-q_{-1}}{h} \right) \in T_{q_{0}}Q,\\
&\hat{q}_{1}^{+} = \Psi_{Q_{2}}^{+}(q_{0},q_{1})=\left(q_{1},\frac{q_{2}-q_{1}}{h}\right) \in T_{q_{1}}Q,\quad \hat{q}_1^{-}= \Psi_{Q_{2}}^{-}(q_{0},q_{1})=\left(q_{1}, \frac{q_{1}-q_{0}}{h} \right) \in T_{q_{1}}Q.
\end{split}
\end{equation}
\fi
\end{definition}
\begin{remark}\rm\label{rem:DiscMap_Cot}
In the construction of the $(\pm)$-dicretization maps $\Psi_{(T^{\ast}Q)_{i}}^{\pm}: T^{\ast}Q \times T^{\ast}Q \to T(T^{\ast}Q)$  ($i=1,2$) on the cotangent bundle $T^{\ast}Q$, one might question why these maps have twisted structures.  Initially, one might think that the following  $(\pm)$-finite difference maps $\Psi_{(T^{\ast}Q)_{i}}^{\pm}: T^{\ast}Q \times T^{\ast}Q \to T(T^{\ast}Q)$  ($i=1,2$)  are natural choices: 
\begin{equation*}
\begin{split}
\hat z_{1}^{+}=\Psi^{+}_{(T^{\ast}Q)_{2}}(z_{0},z_{1})=(q_1, p_1,\hat q_{1}^{+}, \hat p_{1}^{+}),\;\;
\hat z_{0}^{-}=\Psi^{-}_{(T^{\ast}Q)_{1}}(z_{0},z_{1})=(q_0, p_0,\hat q_{0}^{-}, \hat p_{0}^{-}).
\end{split}
\end{equation*}
However, these choices are to be inappropriate, because discrete flat maps $({\Omega}^{d\pm}_{T^{\ast}Q})^{\flat}$ that are defined later by using $\Psi^{\pm}_{(T^{\ast}Q)_{i}}$ do not preserve symplectic properties in the unconstrained cases. The correct choices are:
\begin{equation*}
\begin{split}
\hat z_{1}^{+}=\Psi^{+}_{(T^{\ast}Q)_{2}}(z_{0},z_{1})=(q_1, p_1,\hat q_{1}^{-}, \hat p_{1}^{+}),\;\;
\hat z_{0}^{-}=\Psi^{-}_{(T^{\ast}Q)_{1}}(z_{0},z_{1})=(q_0, p_0,\hat q_{0}^{+}, \hat p_{0}^{-}),
\end{split}
\end{equation*}
which may result in preserving the discrete canonical symplectic structures. In fact,  the choice of $\hat z_{1}^{+}=\Psi^{+}_{(T^{\ast}Q)_{2}}(z_{0},z_{1})=(q_1, p_1,\hat q_{1}^{-}, \hat p_{1}^{+})$ leads to the symplectic Euler-A method and the choice of $\hat z_{0}^{-}=\Psi^{-}_{(T^{\ast}Q)_{1}}(z_{0},z_{1})=(q_0, p_0,\hat q_{0}^{+}, \hat p_{0}^{-})$ the symplectic Euler-B method. 

Regarding the symplectic Euler-A and Euler-B methods, for instance, refer to \cite{LeRe2004} and we will address the details on these issues within the context of discrete Hamiltonian systems in a future study.
\end{remark}

\paragraph{Discrete dual pairings.}  Define the $(+)$-{\it discrete dual paring} between $(z_{0},z_{1})=(q_{0}, p_{0},q_1,p_1) \in T^{\ast}Q \times T^{\ast}Q$ and $(z_{1},\zeta)=(q_1,p_1,\eta,\xi)\in T^{\ast}_{z_{1}}T^{\ast}Q$ as
\begin{equation*}
\begin{split}\hspace{1cm}
\left<(z_{1},\zeta), (z_{0},z_{1}) \right>_{d+} &=\left<(z_{1},\zeta), \Psi_{(T^{\ast}Q)_2}^{+}(z_{0},z_{1})) \right>\\
&=\left<(z_{1},\zeta),(z_{1},\hat{z}^{+}_{1}) \right>\\
&=\left<(q_1,p_1,\eta,\xi), \Psi_{(T^{\ast}Q)_2}^{+}(q_{0},  p_{0},q_1,p_1) \right>\\
&=\left<(q_1,p_1,\eta,\xi), (q_1,p_1, \hat q_{1}^{-}, \hat p_{1}^{+}) \right>\\
&=\left<\eta, \hat q_{1}^{-}\right>+ \left<\hat p_{1}^{+},\xi\right>,
 \end{split}
\end{equation*}
where $(z_{1},\hat{z}^{+}_{1})=(q_1,p_1,\hat q_{1}^{-}, \hat p_{1}^{+}) \in T_{z_{1}}T^{\ast}Q$.
\medskip

Similarly, define the $(-)$-{\it discrete dual paring} between $(z_{0},z_{1})=(q_0,p_0, q_1, p_1) \in T^{\ast}Q \times T^{\ast}Q$ and $(z_{0}, \zeta)=(q_0,p_0,\eta,\xi) \in T^{\ast}_{z_{0}}T^{\ast}Q$ as
\begin{equation*}
\begin{split}\hspace{1cm}
\left<(z_{0}, \zeta), (z_{0},z_{1}) \right>_{d-} &=\left<(z_{0}, \zeta), \Psi_{(T^{\ast}Q)_1}^{-}(z_{0},z_{1}) \right>\\
&=\left<(z_{0},\zeta),(z_{0},\hat{z}^{-}_{0}) \right>\\
&=\left<(q_0,p_0,\eta,\xi), \Psi_{(T^{\ast}Q)_1}^{-}(q_0,p_0, q_1, p_1) \right>\\
&=\left<(q_0,p_0,\eta,\xi), (q_0,p_0, \hat q_0^{+}, \hat p_0^{-}) \right>\\
&=\left<\eta, \hat q_0^{+}\right>+ \left<\hat p_0^{-},\xi\right>,
 \end{split}
\end{equation*}
where $(z_{0},\hat{z}^{-}_{0})=(q_0,p_0,\hat q_0^{+}, \hat p_0^{-}) \in T_{z_{0}}T^{\ast}Q$.
\paragraph{Discretization of the canonical symplectic structure on $T^{\ast}Q$.} Associated with the canonical symplectic forms on the cotangent bundle $T^{\ast}Q$, we shall introduce the corresponding $(\pm)$-discrete canonical symplectic forms as follows.
\medskip

First, recall from \cite{MaRa1999} that the cotangent bundle $T^{\ast}Q$  is naturally equipped with the symplectic one-form $\Theta_{T^{\ast}Q}$ as in equation \eqref{CanOneForm_Cot}. Namely, $\Theta_{T^{\ast}Q}$ is the horizontal one-form  on $T^{\ast}Q$ such that, for $z \in T^{\ast}Q$,
\begin{equation*}
\begin{split}
\Theta_{T^{\ast}Q}(z)(v_{z})=\left<z, T\pi_{Q}(v_{z})\right>,
\end{split}
\end{equation*}
where $v_{z}\in T_{z}T^{\ast}Q$ and $T\pi_{Q}: TT^{\ast}Q \to TQ$ is the tangent map of the cotangent bundle projection $\pi_{Q}: T^{\ast}Q \to Q$.
\begin{definition}\label{Def:discSymOneForm}\rm
Associated with the canonical one-form $\Theta_{T^{\ast}Q}$ on $T^{\ast}Q$, define 
the $(+)$-\textit{discrete canonical  one-form} $\Theta_{T^{\ast}Q}^{d+}$ by, for each $(z_{0},z_{1}) \in T^{\ast}Q \times T^{\ast}Q$,
\begin{equation*}
\begin{split}
\Theta^{d+}_{T^{\ast}Q}(z_{0}, z_1)&= \Theta_{T^{\ast}Q}(z_1)\left(\Psi^{+}_{(T^{\ast}Q)_2}(z_{0}, z_1)\right)\\
&=\Theta_{T^{\ast}Q}(z_1)(\hat{z}^{+}_{1}),
\end{split}
\end{equation*}
which is denoted in coordinates by, for $(z_{0},z_{1})=(q_{0}, p_{0}, q_1, p_1) \in T^{\ast}Q \times T^{\ast}Q$,
\begin{equation*}
\begin{split}
\Theta^{d+}_{T^{\ast}Q}(z_{0}, z_1)=\Theta_{T^{\ast}Q}(z_1)(\hat{z}^{+}_{1})=\left( p_{1}dq_{1}\right)\left(\hat{q}^{-}_{1} \frac{\partial }{\partial q_{1}}+ \hat{p}^{+}_{1} \frac{\partial }{\partial p_{1}}\right)=\langle p_{1}, \hat{q}^{-}_{1}\rangle.
\end{split}
\end{equation*}

Similarly, define the $(-)$-\textit{discrete canonical one-form} $\Theta_{T^{\ast}Q}^{d+}$ by, for each $(z_{0},z_{1}) \in T^{\ast}Q \times T^{\ast}Q$,
\begin{equation*}
\begin{split}
\Theta^{d-}_{T^{\ast}Q}(z_0, z_1)&= \Theta_{T^{\ast}Q}(z_0)\left(\Psi^{-}_{(T^{\ast}Q)_1}(z_0, z_1)\right)\\
&=\Theta_{T^{\ast}Q}(z_0)(\hat{z}^{-}_{0}),
\end{split}
\end{equation*}
which is denoted, in local coordinates $(z_{0},z_{1})=(q_{0}, p_{0}, q_1, p_1) \in T^{\ast}Q \times T^{\ast}Q$, by
\begin{equation*}
\begin{split}
\Theta^{d-}_{T^{\ast}Q}(z_0, z_1)=\left(p_{0}dq_{0}\right)\left(\hat{q}^{+}_{0} \frac{\partial }{\partial q_{0}}+ \hat{p}^{-}_{0} \frac{\partial }{\partial p_{0}}\right)=\langle p_{0}, \hat{q}^{+}_{0}\rangle.
\end{split}
\end{equation*}
\end{definition}

\begin{definition}\rm
Associated with the symplectic structure $\Omega_{T^{\ast}Q}$ on $T^{\ast}Q$, the $(+)$-\textit{discrete canonical two-form (discrete canonical symplectic structure)} ${\Omega}_{T^{\ast}Q}^{d+}$ is defined by, for each $(z_{0},z_{1}) \in T^{\ast}Q \times T^{\ast}Q$,
\begin{equation*}
\begin{split}
\Omega^{d+}_{T^{\ast}Q}((z_{0},z_1),{v})&= \Omega_{T^{\ast}Q}(z_{1})\left(\Psi^{+}_{(T^{\ast}Q)_2}(z_{0},z_1),{v}\right)\\
&=\Omega_{T^{\ast}Q}(z_{1})\left(\hat z^{+}_{1},{v}\right), \;\; \textrm{for all} \; {v} \in T_{z_1}T^{\ast}Q,
\end{split}
\end{equation*}
which is denoted, in local coordinates $(z_{0},z_{1})=(q_{0}, p_{0},q_{1}, p_{1}) \in T^{\ast}Q \times T^{\ast}Q$, by
\[
\Omega^{d+}_{T^{\ast}Q}((q_{0}, p_{0},q_{1}, p_{1}),{v})= \Omega_{T^{\ast}Q}(q_1,p_1)\left( (\hat q_{1}^{-}, \hat p_{1}^{+}),{v}\right), \;\;\textrm{for all} \; {v} \in T_{(q_1,p_1)}T^{\ast}Q.
\]

Similarly, the $(-)$-\textit{discrete canonical two-form (discrete canonical symplectic structure)} ${\Omega}_{T^{\ast}Q}^{d-}$ is defined by, for each $(z_{0},z_{1}) \in T^{\ast}Q \times T^{\ast}Q$,
\begin{equation*}
\begin{split}
\Omega^{d-}_{T^{\ast}Q}((z_0, z_1),{v})&= \Omega_{T^{\ast}Q}(z_0)\left(\Psi^{-}_{(T^{\ast}Q)_1}(z_0, z_1),{v}\right)\\
&= \Omega_{T^{\ast}Q}(z_0)\left( \hat z^{-}_{0},{v}\right), \;\;\textrm{for all ${v} \in T_{z_0}T^{\ast}Q$},
\end{split}
\end{equation*}
which is represented, in local coordinates $(z_{0},z_{1})=(q_{0}, p_{0}, q_1, p_1) \in T^{\ast}Q \times T^{\ast}Q$, by
\[
\Omega^{d-}_{T^{\ast}Q}((q_0,p_0, q_1, p_1),{v})= \Omega_{T^{\ast}Q}(q_0, p_0)\left( (\hat q_0^{+}, \hat p_0^{-}),{v}\right), \;\;\textrm{for all ${v} \in T_{(q_0, p_0)}T^{\ast}Q$}.
\]
Note that the relations $\Omega^{d\pm}_{T^{\ast}Q}=-\mathbf{d}\Theta^{d\pm}_{T^{\ast}Q}$ hold, since $\Omega_{T^{\ast}Q}=-\mathbf{d}\Theta_{T^{\ast}Q}$.

\end{definition}

\begin{definition}\rm\label{def:DisCanSymFlatMap}
Associated with the $(+)$-discrete symplectic two-form $\Omega^{d+}_{T^{\ast}Q}$ on $T^{\ast}Q$,
the $(+)$-{\it discrete canonical symplectic flat map} 
\[
({\Omega}^{d+}_{T^{\ast}Q})^{\flat}: T^{\ast}Q \times T^{\ast}Q \to T^{\ast}T^{\ast}Q
\]
is defined such that, for each $(z_{0},z_{1}) \in T^{\ast}Q \times T^{\ast}Q$,
\begin{equation*}
\begin{split}
(\Omega^{d+}_{T^{\ast}Q})^\flat(z_{0},z_{1})(v)=\Omega^{d+}_{T^{\ast}Q}((z_{0},z_1),{v}),\quad \textrm{for any $v \in T_{z_1}T^{\ast}Q$}.
\end{split}
\end{equation*}
Consequently,
\begin{equation*}
\begin{split}
(\Omega^{d+}_{T^{\ast}Q})^\flat(z_{0},z_{1})=\mathbf{i}_{(z_{0},z_{1})}\Omega^{d+}_{T^{\ast}Q}=\mathbf{i}_{\hat z_{1}^{+}}\Omega_{T^{\ast}Q}(z_1)=(\Omega^{\flat}_{T^{\ast}Q})(z_{1})({\hat z_{1}^{+}}).
\end{split}
\end{equation*}

\medskip

Similarly, associated with the $(-)$-discrete symplectic two-form $\Omega^{d-}_{T^{\ast}Q}$ on $T^{\ast}Q$, the $(-)$-{\it discrete canonical symplectic flat map} 
\[
({\Omega}^{d-}_{T^{\ast}Q})^{\flat}: T^{\ast}Q \times T^{\ast}Q \to T^{\ast}T^{\ast}Q
\]
is defined such that, for each $(z_{0},z_{1}) \in T^{\ast}Q \times T^{\ast}Q$,
\begin{equation*}
\begin{split}
(\Omega^{d-}_{T^{\ast}Q})^{\flat}(z_0, z_1)(v)=\Omega^{d-}_{T^{\ast}Q}((z_{0},z_1),{v}),\quad\textrm{for any $v \in T_{z_0}T^{\ast}Q$}.
\end{split}
\end{equation*}
This can be equivalently written as
\begin{equation*}
\begin{split}
(\Omega^{d-}_{T^{\ast}Q})^\flat(z_0, z_1)&=\mathbf{i}_{(z_0, z_1)}\Omega^{d-}_{T^{\ast}Q}=\mathbf{i}_{\hat z_{0}^{-}}\Omega_{T^{\ast}Q}(z_0)=\Omega^{\flat}_{T^{\ast}Q}(z_0)(\hat z_{0}^{-}).
\end{split}
\end{equation*}
\end{definition}

\paragraph{Local expressions for the $(\pm)$-discrete symplectic structures.} Here we shall derive local expressions for the $(\pm)$-discrete canonical symplectic structures. 
\medskip

According to Def. \ref{def:DisCanSymFlatMap}, the $(+)$-discrete flat map is described in local coordinates $(z_{0},z_{1})=(q_{0}, p_{0},q_1,p_1) \in T^{\ast}Q \times T^{\ast}Q$ as
\begin{equation}\label{disceteOmegaFlat_plus}
(\Omega^{d+}_{T^{\ast}Q})^{\flat}: T^{\ast}Q \times T^{\ast}Q \to T^{\ast}T^{\ast}Q;\quad
(q_{0}, p_{0},q_1,p_1)\mapsto (q_1,p_1,-\hat p_{1}^{+}, \hat q_{1}^{-}),
\end{equation}
which reads
\begin{equation*}
\begin{split}
(\Omega^{d+}_{T^{\ast}Q})^\flat(q_{0}, p_{0},q_1,p_1)&=\mathbf{i}_{(q_{0}, p_{0},q_1,p_1)}\Omega^{d+}_{T^{\ast}Q}
=\mathbf{i}_{(\hat q_{1}^{-}, \hat p_{1}^{+})}\Omega_{T^{\ast}Q}(q_1,p_1)\\
&=\Omega^{\flat}_{T^{\ast}Q}(q_1,p_1)(\hat q_{1}^{-}, \hat p_{1}^{+})
=(q_1,p_1,-\hat p_{1}^{+}, \hat q_{1}^{-}),
\end{split}
\end{equation*}
where $(\hat q_{1}^{-}, \hat p_{1}^{+})=\Psi^{+}_{(T^{\ast}Q)_2}(q_{0}, p_{0},q_1,p_1)\in T_{(q_1,p_1)}T^{\ast}Q$.
\medskip

 Similarly, the $(-)$-discrete flat map is described in local coordinates $(z_{0},z_{1})=(q_0,p_0, q_1, p_1) \in T^{\ast}Q \times T^{\ast}Q$ as
\begin{equation}\label{disceteOmegaFlat_minus}
(\Omega^{d-}_{T^{\ast}Q})^{\flat}: T^{\ast}Q \times T^{\ast}Q \to T^{\ast}T^{\ast}Q;\quad
(q_0,p_0, q_1, p_1)\mapsto (q_0,p_0,-\hat p_0^{-}, \hat q_0^{+}),
\end{equation}
which reads
\begin{equation*}
\begin{split}
(\Omega^{d-}_{T^{\ast}Q})^\flat(q_0,p_0, q_1, p_1)&=\mathbf{i}_{(q_0,p_0, q_1, p_1)}\Omega^{d-}_{T^{\ast}Q}=\mathbf{i}_{(\hat q_0^{+}, \hat p_0^{-})}\Omega_{T^{\ast}Q}(q_0,p_0)\\
&=\Omega^{\flat}_{T^{\ast}Q}(q_0,p_0)(\hat q_0^{+}, \hat p_0^{-})=(q_0,p_0,-\hat p_0^{-}, \hat q_0^{+}),
\end{split}
\end{equation*}
where $(\hat q_0^{+}, \hat p_0^{-})= \Psi^{-}_{(T^{\ast}Q \times T^{\ast}Q)_{1}}(q_0,p_0, q_1, p_1)\in T_{(q_0,p_0)}T^{\ast}Q$.
\paragraph{Discrete constraint spaces.}
Next, we shall consider the $(\pm)$-discrete constraint spaces on $Q \times Q$ associated with the given distribution $\Delta_{Q} \subset TQ$ and also define the $(\pm)$-induced discrete subspaces on the cotangent bundle $T^{\ast}Q$.

\begin{definition}\rm\label{DiscConstSpaces_Q}
Let $\Delta_Q^{d\pm} \subset Q \times Q$ be $(\pm)$-{\it discrete constraint spaces}   defined as
\begin{equation*}
\begin{split}
\Delta_{Q}^{d+}&:=\left\{ (q_{0},q_{1}) \in Q \times Q \mid \hat{q}^{-}_{1}= \Psi_{Q_{2}}^-(q_{0},q_{1})=\left(q_{1},\frac{q_{1}-q_{0}}{h}\right) \in \Delta_Q(q_{1}) 
\right\},\\[2mm]
\Delta_{Q}^{d-}&:=\left\{ (q_{0},q_{1}) \in Q \times Q \mid \hat{q}^{+}_{0}=\Psi_{Q_{1}}^+(q_{0},q_{1})=\left(q_{0},\frac{q_{1}-q_{0}}{h}\right)\in \Delta_Q(q_{0})\right\}.
\end{split}
\end{equation*}
\end{definition}


\begin{definition}\rm
Define the {\it $(\pm)$-discrete constraint spaces} $\Delta_{T^{\ast}Q}^{d\pm} \subset T^{\ast}Q \times T^{\ast}Q$ by lifting $\Delta_{Q}^{d\pm} \subset Q \times Q$ by
\[
\Delta_{T^{\ast}Q}^{d\pm}=(\pi_Q\times \pi_Q)^{-1}\Delta_{Q}^{d\pm},
\]
which are  given in local coordinates $(z_{0},z_{1})=(q_0,p_0,q_{1},p_{1})\in T^{\ast}Q \times T^{\ast}Q$ by
\[
\begin{split}
\Delta_{T^{\ast}Q}^{d\pm}&=\left\{(q_0,p_0,q_{1},p_{1}) \mid (\pi_Q\times \pi_Q) \circ (q_0,p_0,q_{1},p_{1}) =(q_0,q_{1})\in \Delta^{d\pm}_Q \right\}.
\end{split}
\]
\end{definition}
\begin{definition}\rm
The {\it annihilator of} $\Delta_{T^{\ast}Q}^{d+}$ is given by a subspace of $T^{\ast}T^{\ast}Q$ that is defined by
\begin{equation*}
\begin{split}
(\Delta_{T^{\ast}Q}^{d+})^{\circ}&=\left\{(z_1, \zeta)\in T^{\ast}_{z_{1}}T^{\ast}Q \mid \left<(z_1, \zeta), (z_{0},z_{1})\right>_{d+}=0,
 \; \forall (z_{0},z_{1})\in \Delta^{d+}_{T^{\ast}Q},\right\},
\end{split}
\end{equation*}
which is  denoted in local coordinates $(z_{1}, \zeta)=(q_{1},p_{1},\eta, \xi) \in T^{\ast}T^{\ast}Q$ and $(z_{0},z_{1})=(q_{0},p_{0},q_1,p_1)\in T^{\ast}Q \times T^{\ast}Q$ by
\begin{equation*}
\begin{split}
(\Delta_{T^{\ast}Q}^{d+})^{\circ}&=\left\{(q_{1},p_{1},\eta, \xi)  \mid
\left<(q_{1},p_{1},\eta, \xi), (q_{0}, p_{0},q_1,p_1)\right>_{d+}=0,\;\; \forall (q_{0}, p_{0},q_1,p_1) \in \Delta_{T^{\ast}Q}^{d+}
\right\},
\end{split}
\end{equation*}
where $(q_{0},q_1) \in \Delta_{Q}^{d+}$.
Recall that
\[
\left<(q_{1},p_{1},\eta, \xi), (q_{0},p_{0},q_1,p_1)\right>_{d+}= \left<(q_{1},p_{1},\eta, \xi), (q_1,p_1, \hat q_{1}^{-},\hat p_{1}^{+})\right>=\left<\eta, \hat q_{1}^{-}\right>+ \left< \hat p_{1}^{+},\xi\right>=0,
\]
for all $\hat q_{1}^{-}=\Psi^{-}_{Q_{2}}(q_{0},q_{1}) \in \Delta_{Q}(q_{1})$ and for all $\hat p_{1}^{+} \in T^{\ast}_{q_{1}}Q$. Therefore, we obtain $\eta \in \Delta_Q^{\circ}(q_{1})$ and $\xi=0$. Hence, the annihilator of $\Delta_{T^{\ast}Q}^{d+}$ is equivalently given by
\begin{equation}\label{discreteAnnihilator_plus}
\begin{split}
(\Delta_{T^{\ast}Q}^{d+})^{\circ}&=\left\{(q_{1},p_{1},\eta, \xi)  \mid
\eta \in \Delta_Q^\circ(q_1), \;\; \xi=0
\right\}.
\end{split}
\end{equation}
\quad Similarly, the {\bf annihilator of} $\Delta_{T^{\ast}Q}^{d-}$ is given by a subspace of $T^{\ast}T^{\ast}Q$ that is defined by
\begin{equation*}
\begin{split}
(\Delta_{T^{\ast}Q}^{d-})^{\circ}&=\left\{(z_0, \zeta)\in T^{\ast}T^{\ast}Q \mid 
\left<(z_0, \zeta), (z_0,z_1)\right>_{d-}=0,  \; \forall (z_0,z_1)\in \Delta^{d-}_{T^{\ast}Q}\right\},
\end{split}
\end{equation*}
which is  denoted in local coordinates $(z_0, \zeta)=(q_{0},p_{0},\eta, \xi)$ for $T^{\ast}T^{\ast}Q$ and $(z_{0},z_{1})=(q_0,p_0,q_1,p_1)$ for $T^{\ast}Q \times T^{\ast}Q$ by
\begin{equation*}
\begin{split}
(\Delta_{T^{\ast}Q}^{d-})^{\circ}&=\left\{(q_{0},p_{0},\eta, \xi)  \mid \left<(q_{0},p_{0},\eta, \xi), (q_0,p_0,q_1,p_1)\right>_{d-} =0, \; \forall (q_0,p_0,q_1,p_1) \in \Delta_{T^{\ast}Q}^{d-}
\right\},
\end{split}
\end{equation*}
where $(q_0,q_1) \in \Delta_{Q}^{d-}$.
Recall that
\begin{equation*}
\begin{split}
&\left<(q_{0},p_{0},\eta, \xi), (q_0,p_0,q_1,p_1)\right>_{d-}
=\left<(q_{0},p_{0},\eta, \xi), (q_0,p_0,\hat q_0^{+}, \hat p_0^{-})\right>=\left<\eta, \hat q_0^{+}\right>+\left<\hat p_0^{-},\xi \right>
=0,
\end{split}
\end{equation*}
for all $\hat q_0^{+}=\Psi_{Q_{1}}^{
+}(q_{0},q_{1}) \in \Delta_{Q}(q_{0})$ and for all $\hat p_0^{-} \in T^{\ast}_{q_0}Q$. Therefore, we get $\eta \in \Delta_Q^\circ(q_0)$ and $\xi=0$. Thus, the annihilator of $\Delta_{T^{\ast}Q}^{d-}$ is equivalently given by
\begin{equation}\label{discreteAnnihilator_minus}
\begin{split}
(\Delta_{T^{\ast}Q}^{d-})^{\circ}&=\left\{(q_{0},p_{0},\eta, \xi)  \mid
\eta \in \Delta_Q^\circ(q_0), \;\; \xi=0
\right\}.
\end{split}
\end{equation}

\end{definition}
\paragraph{Discrete induced Dirac structures on $T^{\ast}Q$.} Now we have the following theorem regarding the $(\pm)$-discrete induced Dirac structures over $T^{\ast}Q$, each of which is a discrete analogue of the induced Dirac structure $D_{\Delta_{Q}}$ on $T^{\ast}Q$.
\begin{framed}
\begin{theorem}\rm\label{DiscreteDiracStructure_Cot}
Let $({\Omega}_{T^*Q}^{d \pm})^{\flat}: T^{\ast}Q \times T^{\ast}Q \to T^{\ast}T^{\ast}Q$ be the $(\pm)$-discrete canonical symplectic flat maps defined in Def. \ref{def:DisCanSymFlatMap} and let $\Delta_Q^{d\pm} \subset Q \times Q$ be the $(\pm)$-discrete constraint spaces given in Def. \ref{DiscConstSpaces_Q}. Then, the $(+)$-discrete structure that is given by, for each $z_{1} \in T^{\ast}Q$,
\begin{equation*}
\begin{split}
D^{d+}_{\Delta_Q}(z_{1})&:=\left\{((z_{0},z_1),\alpha_{z_1})\in (T^{\ast}Q\times T^{\ast}Q)\times T^{\ast}_{z_1}T^{\ast}Q\mid\right.\\
&\hspace{3cm}\left. (z_{0},z_1)\in \Delta_{T^{\ast}Q}^{d+} \text{ and } \alpha_{z_{1}}-(\Omega^{d+}_{T^{\ast}Q})^{\flat}(z_{0},z_1)\in (\Delta_{T^{\ast}Q}^{d+})^{\circ}(z_1)
\right\}
\end{split}
\end{equation*}
is a discrete induced Dirac structure on $T^{\ast}Q$. Furthermore, the $(-)$-discrete structure that is defined by, for $z_{0} \in T^{\ast}Q$,
\begin{equation*}
\begin{split}
D^{d-}_{\Delta_Q}(z_{0})&:=\left\{((z_{0},z_{1}),\alpha_{z_0})\in (T^{\ast}Q\times T^{\ast}Q)\times T^{\ast}_{z_0}T^{\ast}Q\mid\right.\\
&\hspace{3cm}\left. (z_{0},z_{1})\in \Delta_{T^{\ast}Q}^{d-} \text{ and } \alpha_{z_{0}}-(\Omega^{d-}_{T^{\ast}Q})^{\flat}(z_{0},z_{1})\in (\Delta_{T^{\ast}Q}^{d-})^{\circ}(z_0)
\right\}
\end{split}
\end{equation*}
is a discrete induced Dirac structure on $T^{\ast}Q$.
\end{theorem}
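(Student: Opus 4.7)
The plan is to derive Theorem \ref{DiscreteDiracStructure_Cot} as the specialization of Theorem \ref{thm:discreteDirac_M} to the case $M = T^{\ast}Q$ with the two-form $\Omega_M = \Omega_{T^{\ast}Q}$ and the lifted distribution $\Delta_{T^{\ast}Q} = (T\pi_Q)^{-1}(\Delta_Q)$. Once the four ingredients $\Psi^{\pm}_{(T^{\ast}Q)_i}$, $(\Omega^{d\pm}_{T^{\ast}Q})^\flat$, $\Delta^{d\pm}_{T^{\ast}Q}$, and $(\Delta^{d\pm}_{T^{\ast}Q})^\circ$ are shown to be concrete realizations of the abstract objects $\Psi^{\pm}_{M_i}$, $(\Omega^{d\pm}_M)^\flat$, $\Delta^{d\pm}_M$, and $(\Delta^{d\pm}_M)^\circ$ of the general theorem, the Dirac property transfers automatically.

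The matching of the discrete flat maps is immediate from Def. \ref{def:DisCanSymFlatMap}, which states $(\Omega^{d+}_{T^{\ast}Q})^\flat(z_0,z_1) = \mathbf{i}_{\Psi^+_{(T^{\ast}Q)_2}(z_0,z_1)}\Omega_{T^{\ast}Q}$ (and similarly for the $(-)$-case), exactly the abstract recipe of the general theorem. To match the constraint spaces, I would apply $T\pi_Q$ to $\Psi^+_{(T^{\ast}Q)_2}(z_0,z_1) = (q_1,p_1,\hat q_1^-,\hat p_1^+)$, obtaining $(q_1,\hat q_1^-) = \Psi^-_{Q_2}(q_0,q_1)$; thus $\Psi^+_{(T^{\ast}Q)_2}(z_0,z_1)\in \Delta_{T^{\ast}Q}(z_1)$ is equivalent to $\hat q_1^- \in \Delta_Q(q_1)$, i.e.\ to $(q_0,q_1) \in \Delta^{d+}_Q$. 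Hence $\Delta^{d+}_{T^{\ast}Q} = (\pi_Q\times\pi_Q)^{-1}(\Delta^{d+}_Q)$ coincides with the abstract discrete constraint set produced from the lifted distribution, and the $(-)$-case is analogous. The matching of annihilators is already carried out in \eqref{discreteAnnihilator_plus}--\eqref{discreteAnnihilator_minus}.

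With these identifications settled, the two inclusions $D^{d\pm}_{\Delta_Q} \subset (D^{d\pm}_{\Delta_Q})^\perp$ and $(D^{d\pm}_{\Delta_Q})^\perp \subset D^{d\pm}_{\Delta_Q}$ follow line by line from the argument for Theorem \ref{thm:discreteDirac_M}: the first inclusion uses skew-symmetry of $\Omega_{T^{\ast}Q}$ transported through the discrete pairings of Def. \ref{def:pairing}, and the second uses the trivial test element $(0,0)$ to isolate membership in $\Delta^{d\pm}_{T^{\ast}Q}$ before varying the admissible test elements to isolate the annihilator condition.

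The main issue requiring care, rather than any deep obstacle, is keeping track of the twisted shape of $\Psi^{\pm}_{(T^{\ast}Q)_i}$ noted in Remark \ref{rem:DiscMap_Cot}: the signs and base points used in the $q$- and $p$-slots are intentionally mismatched, so one must ensure the discrete pairings are evaluated with this convention consistently throughout. The local expressions \eqref{disceteOmegaFlat_plus}--\eqref{disceteOmegaFlat_minus} reduce this bookkeeping to an elementary verification.
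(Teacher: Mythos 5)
Your proposal is correct and follows exactly the route the paper takes: its proof of Theorem \ref{DiscreteDiracStructure_Cot} consists precisely of invoking Theorem \ref{thm:discreteDirac_M} with $M=T^{\ast}Q$, $\Omega_M=\Omega_{T^{\ast}Q}$ and the lifted distribution $\Delta_{T^{\ast}Q}$, leaving the identifications as an easy check. You simply carry out that check explicitly (matching the flat maps, showing $\Delta^{d\pm}_{T^{\ast}Q}=(\pi_Q\times\pi_Q)^{-1}\Delta^{d\pm}_Q$ agrees with the abstract discrete constraint space, and flagging the twisted form of $\Psi^{\pm}_{(T^{\ast}Q)_i}$), which is a harmless elaboration of the same argument.
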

\end{framed}
\begin{proof}
Recall Theorem \ref{thm:discreteDirac_M} for the discrete Dirac structure on $M$ and by applying Theorem \ref{thm:discreteDirac_M} to the case in which $M=T^{\ast}Q$, we can easily check that the discrete structures in Theorem \ref{DiscreteDiracStructure_Cot} are $(\pm)$-discrete induced Dirac structures on $T^{\ast}Q$ that are defined by the $(\pm)$-discrete canonical symplectic two-forms $\Omega^{d+}_{T^{\ast}Q}$ and the $(\pm)$-discrete constraint spaces $\Delta_{T^{\ast}Q}^{d\pm}=(\pi_Q\times \pi_Q)^{-1}\Delta_{Q}^{d\pm}$.
\end{proof}
The discrete structures $D^{d\pm}_{\Delta_Q} \subset (T^{\ast}Q\times T^{\ast}Q)\times T^{\ast}T^{\ast}Q$ are called the {\bf $(\pm)$-discrete induced Dirac structures}
 over $T^{\ast}Q$.

\paragraph{Local expressions for the $(\pm)$-discrete induced Dirac structures.}
By analogy with the local expression for the continuous induced Dirac structure given in \eqref{LocIndDiracStr}, we shall illustrate the local expressions for the $(\pm)$-discrete Dirac structures in Theorem \ref{DiscreteDiracStructure_Cot}.

\begin{proposition}\rm
Associated with the $(\pm)$-discrete Dirac structures, we have the $(\pm)$-formulas of local expressions  as follows.
\begin{itemize}
\item[(i)]
For the $(+)$-discrete induced Dirac structure $D^{d+}_{\Delta_{Q}}$, if $X=(q_{0},p_{0},q_{1},p_{1}) \in T^{\ast}Q \times T^{\ast}Q$ and $\alpha=(q_{1},p_{1},\beta,\gamma) \in T^{\ast}_{(q_{1},p_{1})}T^{\ast}Q$ satisfy the condition 
\begin{equation}\label{CondDisDiracStr_plus}
(X,\alpha) \in D^{d+}_{\Delta_{Q}}(q_{1},p_{1}),
\end{equation}
then the following equations are satisfied:
\begin{equation}\label{Loc_discrete_Dirac_plus}
(q_{0},q_1)\in \Delta^{d+}_Q,  \qquad \hat{q}^{-}_{1}=\gamma, 
\qquad \beta +\hat p^{+}_{1} \in \Delta_{Q}^{\circ}(q_{1}).
\end{equation}
\item[(ii)]
For the $(-)$-discrete induced Dirac structure $D^{d-}_{\Delta_{Q}}$, if $X=(q_{0},p_{0},q_{1},p_{1}) \in T^{\ast}Q \times T^{\ast}Q$ and $\alpha=(q_{0},p_{0},\beta,\gamma) \in T^{\ast}_{(q_{0},p_{0})}T^{\ast}Q$ satisfy the condition 
\begin{equation}\label{CondDisDiracStr_minus}
(X,\alpha) \in D^{d-}_{\Delta_{Q}}(q_{0},p_{0}),
\end{equation}
then the following equations are satisfied:
\begin{equation}\label{Loc_discrete_Dirac_minus}
(q_0,q_1)\in \Delta^{d-}_Q, \qquad \hat{q}^{+}_{0}=\gamma, \qquad \beta +\hat p^{-}_{0} \in \Delta_{Q}^{\circ}(q_{0}).
\end{equation}
\end{itemize}
\end{proposition}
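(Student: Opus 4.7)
The plan is to unpack the definition of $D^{d+}_{\Delta_Q}$ from Theorem \ref{DiscreteDiracStructure_Cot} in local coordinates and read off each of the three conditions in \eqref{Loc_discrete_Dirac_plus} directly. The statement is essentially a translation of the intrinsic definition into the chart coordinates $(q,p)$ on $T^{\ast}Q$; almost all of the work has already been done in the earlier local computations of $(\Omega^{d+}_{T^{\ast}Q})^{\flat}$ and of $(\Delta^{d+}_{T^{\ast}Q})^{\circ}$.

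Concretely, I would start from the two conditions defining membership in $D^{d+}_{\Delta_Q}(q_1,p_1)$. The first condition $(z_0,z_1)\in\Delta^{d+}_{T^{\ast}Q}$ is, by the defining relation $\Delta^{d+}_{T^{\ast}Q}=(\pi_Q\times\pi_Q)^{-1}\Delta^{d+}_Q$, equivalent to $(q_0,q_1)\in\Delta^{d+}_Q$, which is the first equation in \eqref{Loc_discrete_Dirac_plus}. For the second condition $\alpha_{z_1}-(\Omega^{d+}_{T^{\ast}Q})^{\flat}(z_0,z_1)\in(\Delta^{d+}_{T^{\ast}Q})^{\circ}(z_1)$, I would substitute the local formula \eqref{disceteOmegaFlat_plus}, namely $(\Omega^{d+}_{T^{\ast}Q})^{\flat}(q_0,p_0,q_1,p_1)=(q_1,p_1,-\hat p_1^{+},\hat q_1^{-})$, which gives
\begin{equation*}
\alpha_{z_1}-(\Omega^{d+}_{T^{\ast}Q})^{\flat}(z_0,z_1)=\bigl(q_1,p_1,\,\beta+\hat p_1^{+},\,\gamma-\hat q_1^{-}\bigr).
\end{equation*}
Invoking the local description of the annihilator \eqref{discreteAnnihilator_plus}, membership in $(\Delta^{d+}_{T^{\ast}Q})^{\circ}$ is equivalent to the cotangent part lying in $\Delta_Q^{\circ}(q_1)$ and the tangent part vanishing, yielding precisely $\beta+\hat p_1^{+}\in\Delta_Q^{\circ}(q_1)$ and $\hat q_1^{-}=\gamma$. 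This establishes \eqref{Loc_discrete_Dirac_plus}.

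Part (ii) is handled by the same three-step template, now invoking \eqref{disceteOmegaFlat_minus} for the $(-)$-flat map and \eqref{discreteAnnihilator_minus} for the annihilator. The substitution gives $\alpha_{z_0}-(\Omega^{d-}_{T^{\ast}Q})^{\flat}(z_0,z_1)=(q_0,p_0,\beta+\hat p_0^{-},\gamma-\hat q_0^{+})$, from which \eqref{Loc_discrete_Dirac_minus} is immediate.

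There is no real obstacle here: the proposition is a coordinate unpacking of Theorem \ref{DiscreteDiracStructure_Cot} using identities \eqref{disceteOmegaFlat_plus}, \eqref{disceteOmegaFlat_minus}, \eqref{discreteAnnihilator_plus} and \eqref{discreteAnnihilator_minus}, all of which have already been derived. The only delicate point worth stressing is the consistency of the ``twisted'' choice of $\Psi^{\pm}_{(T^{\ast}Q)_i}$ noted in Remark \ref{rem:DiscMap_Cot}: it is precisely the asymmetric pairing of $\hat q_1^{-}$ with $\hat p_1^{+}$ (resp.\ $\hat q_0^{+}$ with $\hat p_0^{-}$) that makes the components of $(\Omega^{d\pm}_{T^{\ast}Q})^{\flat}$ land in the correct slots of $T^{\ast}T^{\ast}Q$, so that the coordinate expressions in \eqref{Loc_discrete_Dirac_plus} and \eqref{Loc_discrete_Dirac_minus} come out cleanly.
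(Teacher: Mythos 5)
Your proposal is correct and follows essentially the same route as the paper's own proof: both unpack the two membership conditions of Theorem \ref{DiscreteDiracStructure_Cot} in coordinates, using \eqref{disceteOmegaFlat_plus}--\eqref{disceteOmegaFlat_minus} for the flat maps and \eqref{discreteAnnihilator_plus}--\eqref{discreteAnnihilator_minus} for the annihilators to read off the three conditions. No gaps; your explicit display of the difference $\alpha_{z_1}-(\Omega^{d+}_{T^{\ast}Q})^{\flat}(z_0,z_1)$ is if anything slightly more detailed than the paper's version.
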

\begin{proof}
Let us check these by direct computations.
\begin{itemize}
\item[(i)]
From the condition \eqref{CondDisDiracStr_plus}, we can get
\begin{equation*}
 (q_{0},p_{0},q_{1},p_{1})\in \Delta_{T^{\ast}Q}^{d+},\qquad  (q_{1},p_{1},\beta,\gamma)-({\Omega}^{d+}_{T^{\ast}Q})^{\flat}(q_{0},p_{0},q_{1},p_{1})\in (\Delta_{T^{\ast}Q}^{d+})^{\circ}(q_{1},p_{1}),
\end{equation*}
 The first equation yields $(q_0,q_1)\in \Delta_{Q}^{d+}$, while the second equation reads, in view of \eqref{disceteOmegaFlat_plus} and \eqref{discreteAnnihilator_plus},
\[
\beta 
+\hat p^{+}_{1} \in \Delta_{Q}^{\circ}(q_{1}),\quad  \gamma-\hat{q}^{-}_{1}=0.
\]
\item[(ii)]
From the condition \eqref{CondDisDiracStr_minus}, we can get
\begin{equation*}
 (q_0,p_0,q_1,p_1)\in \Delta_{T^{\ast}Q}^{d-},\qquad (q_{0},p_{0},\beta,\gamma)-({\Omega}^{d-}_{T^{\ast}Q})^{\flat}(q_0,p_0,q_1,p_1)\in (\Delta_{T^{\ast}Q}^{d-})^{\circ}(q_0,p_0).
\end{equation*}
The first equation gives $(q_0,q_1)\in \Delta_{Q}^{d-}$, while the second equation reads, in view of \eqref{disceteOmegaFlat_minus} and \eqref{discreteAnnihilator_minus},
\[
\beta +\hat p^{-}_{0} \in \Delta_{Q}^{\circ}(q_{0}), \quad \gamma-\hat{q}^{+}_{0}=0.
\]
\end{itemize}
\end{proof}
\section{Discrete Lagrange--Dirac dynamical systems}
In this section, we consider the $(\pm)$-discretizations of Lagrange--Dirac dynamical systems within the context of $(\pm)$-discrete induced Dirac structures. To do this, we first present the  $(\pm)$-discretizations of the iterated tangent and cotangent bundles. We then illustrate the  $(\pm)$-discretizations of the Dirac differential of a discrete Lagrangian and the $(\pm)$-discrete evolution maps over $T^{\ast}Q$.  
  
\subsection{Discretization of iterated bundles}
As in \S\ref{Sect:LagDiracDySys}, we have the geometric structure between the iterated tangent and cotangent bundles $TT^{\ast}Q, T^{\ast}TQ$ and $T^{\ast}T^{\ast}Q$ for the Lagrange--Dirac dynamical system  illustrated in Fig. \ref{fig:BundPic}.
Here we shall consider the discretization of $TT^{\ast}Q, T^{\ast}TQ$ and $T^{\ast}T^{\ast}Q$.
\paragraph{Tulczyjew's triple.}
Recall the following diffeomorphisms \eqref{eq:candiff} between these iterated bundles:
\begin{equation*}
\begin{split}
&\Omega_{T^{\ast}Q}^{\flat}: TT^{\ast}Q \to T^{\ast}T^{\ast}Q; \hspace{1.6cm}\, (q,p,\delta{q},\delta{p}) \mapsto (q,p,-\delta{p}, \delta{q}), \\
&\kappa_{Q}: TT^{\ast}Q \to T^{\ast}TQ;\hspace{2.2cm} (q, p, \delta q, \delta p) \mapsto (q, \delta q, \delta p, p),\\
&\gamma_{Q}:=\Omega^{\flat} \circ \kappa_{Q}^{-1}: T^{\ast}TQ \to T^{\ast}T^{\ast}Q;\; \quad (q,\delta{q},\delta{p},p) \mapsto (q,p,-\delta{p}, \delta{q}).
\end{split}
\end{equation*}
The first diffeomorphism is the flat map associated with the canonical symplectic structure $\Omega_{T^{\ast}Q}$. The second diffeomorphism was introduced by \cite{Tu1977} in the context of the {\it generalized Legendre transform}; see also \cite{YoMa2006b}.

Recall also from \cite{Tu1977} that the manifold $TT^{\ast}Q$ is the symplectic manifold with a particular symplectic form $\Omega_{TT^{\ast}Q}$ 
that can be defined from the two distinct one-forms as
\begin{equation*}
\begin{split}
\lambda&=(\kappa_{Q})^{\ast}\Theta_{T^{\ast}TQ}=\delta{p}\,dq+p\,d\delta{q}, \\
\chi&=(\Omega^{\flat})^{\ast}\Theta_{T^{\ast}T^{\ast}Q}=-\delta{p}\,dq+\delta{q}\,dp, 
\end{split}
\end{equation*}
where $\Theta_{T^{\ast}TQ}$ and $\Theta_{T^{\ast}T^{\ast}Q}$ are the canonical one-forms on $T^{\ast}TQ$ and $T^{\ast}T^{\ast}Q$ respectively. Hence the particular symplectic form $\Omega_{TT^{\ast}Q}$ is defined by
\[
\Omega_{TT^{\ast}Q}=-\mathbf{d}\lambda=\mathbf{d}\chi=dq \wedge d\delta{p}+d\delta{q} \wedge dp.
\]
For an unconstrained Lagrangian system with Lagrangian $L: TQ \to \mathbb{R}$, define a subset of the symplectic manifold $(TT^{\ast}Q,\Omega_{TT^{\ast}Q}=-\mathbf{d}\lambda)$ by
\begin{align}\label{LagSub_Lag}
N=\{ x \in TT^{\ast}Q \, \mid \, 
\lambda_{x}(w)=\left< \mathbf{d}L(T\pi_{Q}(x)), T_{x}T\pi_{Q}(w) \right>, \; \forall w \in T_{x}(TT^{\ast}Q) \},
\end{align}
which is a Lagrangian submanifold whose dimension is  $\frac{1}{2}\mathrm{dim\,}TT^{\ast}Q$. Hence, the Lagrangian $L$ is a {\it generating function} of $N$, since $N \subset TT^{\ast}Q$ is the graph of $(\kappa_{Q})^{-1}(\mathbf{d}L)$. In fact, it follows from \eqref{LagSub_Lag} that, for $x=(q,p,\dot{q},\dot{p}) \in N \subset TT^{\ast}Q$, 
\begin{align}\label{Variational_Relation_LagSub_Lag}
\dot{p}\delta q+p\delta\dot{q}=\frac{\partial L}{\partial q}\delta q+\frac{\partial L}{\partial \dot{q}}\delta\dot{q},
\end{align}
which satisfies for arbitrary $\delta{q}, \delta{\dot{q}}$ and hence we get
\[
p= \frac{\partial L}{\partial \dot{q}}, \quad \dot{p}= \frac{\partial L}{\partial q}.
\]
These equations are, of course, equivalent to the Euler--Lagrange equations. In other words, the diffeomorphism $\kappa_{Q}: (q, p, \dot q, \dot p) \mapsto (q, \dot q, \dot p, p)$ was introduced such that the relation \eqref{Variational_Relation_LagSub_Lag} holds.

 \paragraph{Canonical transformation and generating functions.}
 Consider a map $\tilde\varphi: T^{\ast}Q \to T^{\ast}Q$ on the cotangent bundle with  graph 
 \[
 \Gamma(\tilde\varphi) \subset T^{\ast}Q \times T^{\ast}Q
 \]  
 and introduce an inclusion map $\iota_{\tilde\varphi}: \Gamma(\tilde\varphi)  \to T^{\ast}Q \times T^{\ast}Q$. Define 
 a one-form on $T^{\ast}Q \times T^{\ast}Q$ as
 \[
 \hat\Theta=\tau_{(T^{\ast}Q)_{2}}^{\ast}\Theta_{T^{\ast}Q}-\tau_{(T^{\ast}Q)_{1}}^{\ast}\Theta_{T^{\ast}Q},
 \]
 where $\tau_{(T^{\ast}Q)_{i}}: T^{\ast}Q \times T^{\ast}Q \to T^{\ast}Q$ $(i=1,2)$ denote the projection of $T^{\ast}Q \times T^{\ast}Q$ onto the $i$-th manifold. Then we define
 \[
 \hat\Omega=-\mathbf{d}\hat\Theta=\tau_{(T^{\ast}Q)_{2}}^{\ast}\Omega_{T^{\ast}Q}-\tau_{(T^{\ast}Q)_{1}}^{\ast}\Omega_{T^{\ast}Q}.
 \]
 The map is canonical if and only if
 \begin{equation*}
\iota_{\tilde\varphi}^{\ast}\hat\Omega=0.
\end{equation*}
This is because
\begin{equation*}
\begin{split}
\iota_{\tilde\varphi}^{\ast}\hat\Omega&=\iota_{\tilde\varphi}^{\ast}(-\tau_{(T^{\ast}Q)_{1}}^{\ast}\Omega_{T^{\ast}Q}+\tau_{(T^{\ast}Q)_{2}}^{\ast}\Omega_{T^{\ast}Q})\\
&=-(\tau_{(T^{\ast}Q)_{1}}\circ \iota_{\tilde\varphi})^{\ast}\Omega_{T^{\ast}Q}+(\tau_{(T^{\ast}Q)_{2}}\circ \iota_{\tilde\varphi})^{\ast}\Omega_{T^{\ast}Q}\\
&=-(\tau_{(T^{\ast}Q)_{1}}|_{\Gamma(\tilde\varphi)})^{\ast}\Omega_{T^{\ast}Q}+(\tilde\varphi \circ (\tau_{(T^{\ast}Q)_{1}}|_{\Gamma(\tilde\varphi)}))^{\ast}\Omega_{T^{\ast}Q}\\
&=(\tau_{(T^{\ast}Q)_{1}}|_{\Gamma(\tilde\varphi)})^{\ast}(-\Omega_{T^{\ast}Q}+\tilde\varphi^{\ast}\Omega_{T^{\ast}Q}),
\end{split}
\end{equation*}
 where 
 \begin{equation*}
\tau_{(T^{\ast}Q)_{1}} \circ \iota_{\tilde\varphi}=\tau_{(T^{\ast}Q)_{1}}|_{\Gamma(\tilde\varphi)},\quad \textrm{and} \quad \tau_{(T^{\ast}Q)_{2}}\circ \iota_{\tilde\varphi}= \tilde\varphi \circ \tau_{(T^{\ast}Q)_{1}} \;\; \textrm{on}\;\; \Gamma(\tilde\varphi).
\end{equation*}
Hence, by the Poincar\'e Lemma, if the map is canonical, then there exists a local function $S$ on $\Gamma(\tilde\varphi) \subset T^{\ast}Q\times T^{\ast}Q$ such that
\begin{equation*}
\iota_{\tilde\varphi}^{\ast}\hat\Theta=\mathbf{d}S.
\end{equation*}

\paragraph{Construction of the discrete map $\kappa_{Q}^{d}: T^{\ast}Q \times T^{\ast}Q \to T^{\ast}(Q\times Q)$.} 
Here we consider a discrete analogue of the diffeomorphism $\kappa_{Q}: TT^{\ast}Q \to T^{\ast}TQ$, namely, a map  $\kappa_{Q}^{d}: T^{\ast}Q \times T^{\ast}Q \to T^{\ast}(Q \times Q)$ as shown below.

Let $(q_{0},p_{0},q_{1},p_{1})$ be the local coordinates of $T^{\ast}Q \times T^{\ast}Q$ and, we can appropriately choose any two of four  quantities $(q_{0},p_{0},q_{1},p_{1})$ for the local coordinates of the submanifold $\Gamma(\tilde\varphi) \subset T^{\ast}Q \times T^{\ast}Q$ if $Q$ is a vector space as in \cite{GoPoSa2000,LeOh2011}.

Since we are interested in the case where $Q$ is a manifold, the proper choice for the local coordinates is $(q_{0},q_{1}) \in Q \times Q$ for $\Gamma(\tilde\varphi)$. Then we shall show how the discrete Lagrangian $L_{d}$ on $Q\times Q$ can be the generating function $S$ for the canonical transformation; see also \cite{MaWe2001}.  

As was shown in \S\ref{Subsec:DisHamPrin}, associated with the {\it discrete Euler--Lagrange equations}  \eqref{DiscreteEulerLagEq}, i.e., 
\[
\frac{\partial L_d}{\partial q_{1}}(q_{0}, q_1)=- \frac{\partial L_d}{\partial q_{1}}(q_1, q_{2}),
\]
and under the regularity assumption of $L_{d}$, there exists the discrete Lagrange map 
\[
\varphi_{L_{d}}: Q \times Q \to Q \times Q; \quad (q_{0},q_{1}) \mapsto (q_{1}, q_{2}).
\]
This map induces the canonical transformation called the discrete Hamiltonian map $\tilde{\varphi}_{L_{d}}$
as
\[
\tilde{\varphi}_{L_{d}}=\mathbb{F}^{+}L_{d}\circ (\mathbb{F}^{-}L_{d})^{-1}: T^{\ast}Q \to T^{\ast}Q,
\]
which is given in local coordinates by 
\begin{equation}\label{CanonicalTransformation}
\begin{split}
\left(q_{0},p_{0}= -\frac{\partial L}{\partial q_{0}}(q_{0},q_{1})\right)\mapsto 
\left(q_{1}, p_{1}=\frac{\partial L}{\partial q_{1}}(q_{0},q_{1})\right).
\end{split}
\end{equation}
In the below, we shall illustrate how the canonical transformation 
(discrete Hamiltonian map) $\tilde{\varphi}_{L_{d}}: T^{\ast}Q \to T^{\ast}Q$ can be generated by the discrete Lagrangian $L_{d}$ on $Q\times Q$. 
\medskip

Recall that the one-form $\hat\Theta$ on $T^{\ast}Q \times T^{\ast}Q$  is naturally defined by
\begin{equation*}
\hat\Theta=\tau_{(T^{\ast}Q)_{2}}^{\ast}\Theta_{T^{\ast}Q}-\tau_{(T^{\ast}Q)_{1}}^{\ast}\Theta_{T^{\ast}Q}=p_{1}dq_{1}-p_{0}dq_{0}.
\end{equation*}
Let $\iota_{\tilde\varphi_{L_{d}}}: Q\times Q \subset \Gamma(\tilde\varphi_{L_{d}})  \to T^{\ast}Q \times T^{\ast}Q$ be the natural inclusion map given by
\[
(q_{0},q_{1}) \mapsto \left( (q_{0},p_{0}), (q_{1},p_{1}) \right) = \left( (q_{0},p_{0}), \tilde{\varphi}_{L_{d}}(q_{0},p_{0}) \right).
\]
Then the pullback of $\hat\Omega=-\mathbf{d}\hat\Theta$ by $\iota_{\tilde\varphi_{L_{d}}}$ vanishes, namely,
\begin{equation}\label{CondLagSubmanifold}
\iota_{\tilde\varphi_{L_{d}}}^{\ast}\hat\Omega=\iota_{\tilde\varphi_{L_{d}}}^{\ast}(-\mathbf{d}\hat\Theta)=0,
\end{equation}
which leads to the preservation of the canonical symplectic structure
\[
\tilde{\varphi}_{L_{d}}^{\ast}\Omega_{T^{\ast}Q}=\Omega_{T^{\ast}Q},
\]
where $\Omega_{T^{\ast}Q}(q_{0},p_{0})=dq_{0}\wedge dp_{0}$.
Therefore, it follows from \eqref{CondLagSubmanifold} that there exists a local function $L_{d}=L_{d}(q_{0},q_{1})$ on $Q \times Q \subset \Gamma(\tilde\varphi_{L_{d}})$ by the Poincar\'e Lemma, and the local function is exactly the discrete Lagrangian such that
\[
\iota_{\tilde\varphi_{L_{d}}}^{\ast}\hat\Theta=\mathbf{d}L_{d},
\]
where 
\begin{equation*}
\mathbf{d}L_{d}= \frac{\partial L_{d}}{\partial q_{0}}dq_{0} + \frac{\partial L_{d}}{\partial q_{1}}dq_{1}.
\end{equation*}
Thus the canonical transformation $\tilde{\varphi}_{L_{d}}: T^{\ast}Q \to T^{\ast}Q$  in \eqref{CanonicalTransformation} is generated by the discrete Lagrangian $L_{d}=L_{d}(q_{0},q_{1})$ on $Q \times Q$. From this construction, we can uniquely define the discrete map $\kappa_{Q}^{d}: T^{\ast}Q \times T^{\ast}Q \to T^{\ast}(Q \times Q)$ as follows. 

\begin{definition}\rm
Define the discrete map $\kappa_{Q}^{d}: T^{\ast}Q \times T^{\ast}Q \to T^{\ast}(Q \times Q)$ such that
\begin{equation*}
\begin{split}
\kappa_{Q}^{d} \circ \iota_{\tilde\varphi_{ L_{d}}}=\mathbf{d}L_{d}
\end{split}
\end{equation*}
and the map $\kappa_{Q}^{d}$ is locally denoted as
\begin{equation}\label{kappa_Q^d}
\kappa_{Q}^{d}: \left((q_{0},p_{0}),(q_{1},p_{1})\right) \mapsto \left((q_{0},q_{1}),(-p_{0}, p_{1})\right).
\end{equation}
\end{definition}

\begin{remark}\rm
In the construction of the discrete map $\kappa_{Q}^{d}$ in \eqref{kappa_Q^d}, we  considered the case in which there exist no constraints and therefore the canonical symplectic structure $\Omega_{T^{\ast}Q}$ is preserved under the discrete Hamiltonian map $\tilde{\varphi}_{L_{d}}$. The discrete map $\kappa_{Q}^{d}$ is uniquely determined so that it is consistently analogous with Tulczyjew's triple in the continuous case reviewed in \S\ref{Sect:LagDiracDySys}. Furthermore, we shall use the discrete map $\kappa_{Q}^{d}$ even for the case when  nonholonomic constraints exist and hence the discrete Lagrangian two-form $\Omega_{L_d}$ is no longer preserved. Note also that the same discrete map $\kappa_{Q}^{d}$ was used in \cite{LeOh2011}.
\end{remark}

\paragraph{Discretization of the structure between iterated  bundles.}
We have seen the construction of the discrete map $\kappa_{Q}^{d}: T^{\ast}Q \times T^{\ast}Q \to T^{\ast}(Q \times Q)$.
 As already shown in Def. \ref{def:DisCanSymFlatMap}, the bundle map $\Omega_{T^{\ast}Q}^{\flat}: TT^{\ast}Q \to T^{\ast}T^{\ast}Q$ is discretized, in view of $TT^{\ast}Q \cong T^{\ast}Q \times T^{\ast}Q$, to be the $(\pm)$-discrete canonical flat maps $(\Omega_{T^{\ast}Q}^{d\pm})^{\flat}: T^{\ast}Q \times T^{\ast}Q \to T^{\ast}T^{\ast}Q$. Now we introduce the $(\pm)$-discretizations of diffeomorphism $\gamma_{Q}: T^{\ast}TQ \to T^{\ast}T^{\ast}Q$ by using the discrete maps $\kappa_{Q}^{d}$ and $(\Omega_{T^{\ast}Q}^{d\pm})^{\flat}$.

\begin{definition}\rm
The $(\pm)$-discrete analogues of $\gamma_{Q}: T^{\ast}TQ \to T^{\ast}T^{\ast}Q$, i.e., the $(\pm)$-discrete maps $\gamma^{d\pm}_{Q}: T^{\ast}(Q \times Q) \to T^{\ast}T^{\ast}Q$, are defined by using the $(\pm)$-discrete canonical flat maps $({\Omega}^{d\pm}_{T^{\ast}Q})^{\flat}: T^{\ast}Q \times T^{\ast}Q \to T^{\ast}T^{\ast}Q$ given by \eqref{disceteOmegaFlat_plus} and \eqref{disceteOmegaFlat_minus} and the discrete map $\kappa_Q^d: T^{\ast}Q \times T^{\ast}Q \to T^{\ast}(Q \times Q)$ as
\begin{equation}\label{DiscreteGammaMap}
\begin{split}
\gamma_{Q}^{d+}:=({\Omega}^{d+}_{T^{\ast}Q})^{\flat} \circ (\kappa_{Q}^{d})^{-1}:  T^{\ast}(Q \times Q) &\to T^{\ast}T^{\ast}Q\\ 
((q_{0},q_1),(-p_{0},p_{1}))&\mapsto ((q_1,p_1),(-\hat{p}^{+}_{1},\hat{q}^{-}_{1})),\\
\gamma_{Q}^{d-}:=({\Omega}^{d-}_{T^{\ast}Q})^{\flat} \circ (\kappa_{Q}^{d})^{-1}:  T^{\ast}(Q \times Q) &\to T^{\ast}T^{\ast}Q\\
((q_0,q_1),(-p_0,p_1))&\mapsto ((q_0, p_0),(-\hat{p}^{-}_0,\hat{q}^{+}_0)).
\end{split}
\end{equation}
\end{definition}

\paragraph{Discrete projections.} In order to define the discrete bundle structures, we can naturally define the following  \textit{canonical discrete projections} by
\begin{equation*}
\begin{split}
&\pi_{Q} \times \pi_{Q}: T^{\ast}Q \times T^{\ast}Q \to Q \times Q; \quad ((q_{0},p_{0}),(q_{1},p_{1})) \mapsto (q_{0},q_{1}),\\
&\pi_{Q \times Q}: T^{\ast}(Q \times Q) \to Q \times Q; \quad ((q_{0},q_{1}),(-p_{0}, p_{1})) \mapsto (q_{0},q_{1}).\\
\end{split}
\end{equation*}

\begin{definition}\rm
We can further introduce canonical discrete projections as follows:
\begin{equation*}
\begin{split}
&\tau_{Q_{2}}: Q \times Q \to Q;\quad (q_{0},q_{1}) \mapsto q_{1},\\
&\tau_{Q_{1}}: Q \times Q \to Q;\quad (q_{0},q_{1}) \mapsto q_{0},\\
&\tau_{(T^{\ast}Q)_{2}}: T^{\ast}Q \times T^{\ast}Q \to T^{\ast}Q;\quad ((q_{0},p_{0}),(q_{1},p_{1})) \mapsto (q_{1},p_{1}),\\
&\tau_{(T^{\ast}Q)_{1}}: T^{\ast}Q \times T^{\ast}Q \to T^{\ast}Q;\quad ((q_{0},p_{0}),(q_{1},p_{1})) \mapsto (q_{0},p_{0}).\\
\end{split}
\end{equation*}
\end{definition}
\paragraph{Discrete bundle pictures.}
Before going into the details on the construction of $(\pm)$-discrete Lagrange--Dirac dynamical systems, we summarize the $(\pm)$-discrete bundle pictures as illustrated in Figs. \ref{plus_DisBundleGeometry} and \ref{minus_DisBundleGeometry}. 

\begin{figure}[htbp]
\centering
\includegraphics[scale=0.55]{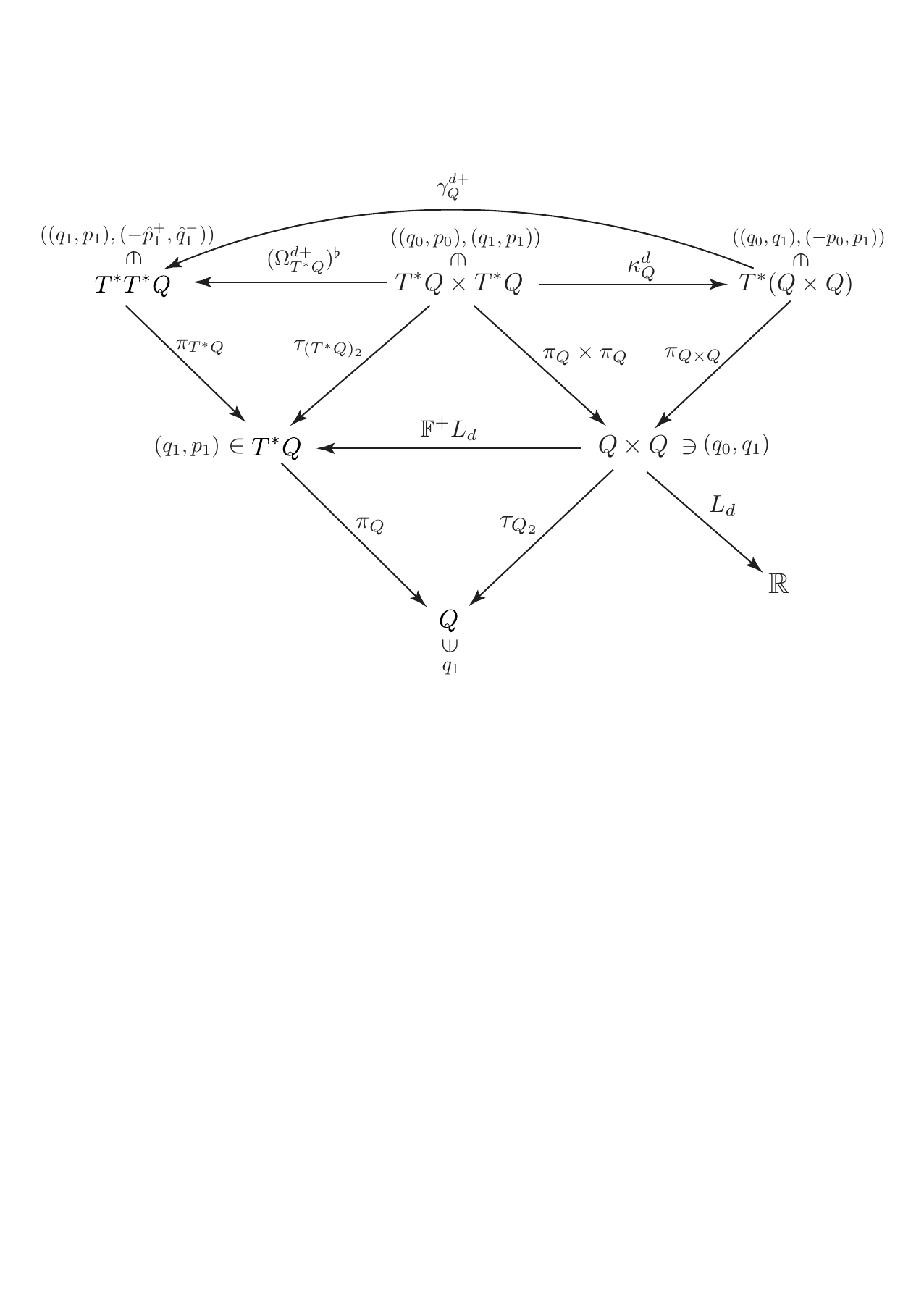}
\caption{$(+)$-discrete bundle picture.}
\label{plus_DisBundleGeometry}
\end{figure}
\vspace{5mm}
\begin{figure}[htbp]
\centering
\includegraphics[scale=0.55]{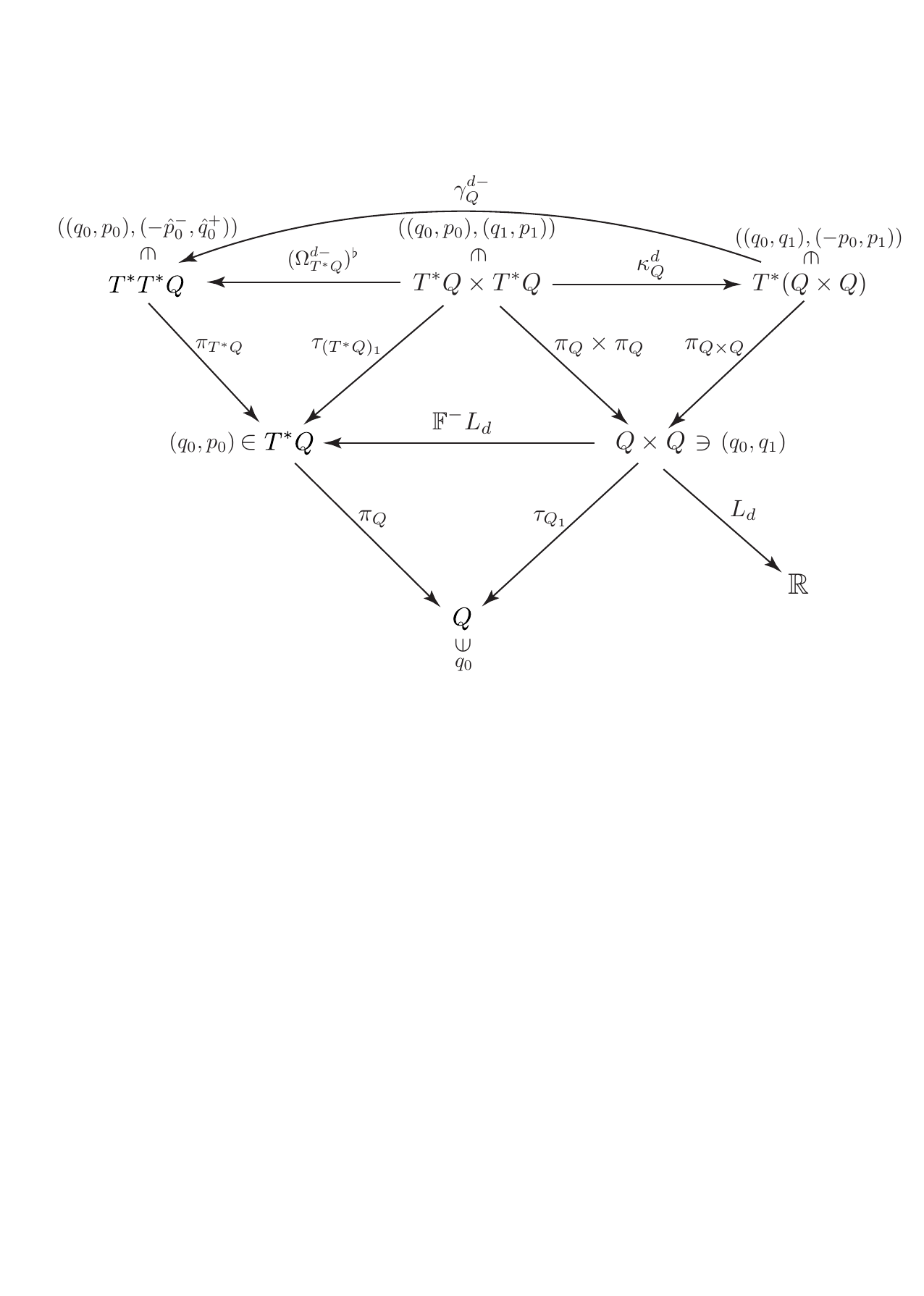}
\caption{$(-)$-discrete bundle picture.}
\label{minus_DisBundleGeometry}
\end{figure}

\subsection{Discrete Lagrange--Dirac dynamical systems}
Here, we shall develop $(\pm)$-discrete Lagrange--Dirac dynamical systems within the context of the $(\pm)$-discrete bundle pictures, associated with the continuous Lagrange--Dirac dynamical system described in \S\ref{Sect:LagDiracDySys}. In Fig. \ref{BundlePic1_discrete_coordinates_plus} and Fig. \ref{BundlePic1_discrete_coordinates_minus}, we illustrate that the $(\pm)$-discrete Lagrange--Dirac dynamical systems can be incorporated into the $(\pm)$-discrete bundle structures shown in Fig. \ref{plus_DisBundleGeometry} and Fig. \ref{minus_DisBundleGeometry}

\paragraph{Discrete Legendre transforms.}
For a given discrete Lagrangian $L_d:Q \times Q \to \mathbb{R}$, recalling from \eqref{eq:disLeg}, the $(\pm)$-discrete Legendre transforms $\mathbb{F}^{\pm}L_{d}: Q\times Q \to T^{\ast}Q$ are given by
\begin{equation*}
\begin{split} 
&\mathbb{F} ^+L_d:  (q_0 , v _{0}) \in Q \times Q \mapsto (v_0, p_{{1}})= (v_0, D_2L_d( q_0 , v _{0} )) \in T^*_{q _{1} }Q,\\
&\mathbb{F} ^-L_d: (v_1 , q _{1})  \in Q \times Q \mapsto (v_1, p_{0})= (v_1, -D_1L_d(v_1, q_{1} )) \in T^*_{q _0}Q,
\end{split}
\end{equation*} 
where we note that the local coordinates $(q_0 , v _{0}) \in Q \times Q$ are employed for the $(+)$-formula and $(v_1 , q _{1}) \in Q \times Q$ for the $(-)$-formula. Note that the base point of $\mathbb{F} ^+L_d(q_0 , v _{0})$ is given by $v_0=q_{1} \in Q$, while  the base point of $\mathbb{F} ^-L_d(v_1 , q _{1})$ is given by $v_1=q_{0} \in Q$.

\paragraph{Discrete Dirac differentials.}
Next we shall consider the discretization of the Dirac differential of the discrete Lagrangian. In the continuous setting, recall that the Dirac differential $\mathbf{d}_{D}L$ for a given Lagrangian $L:TQ \to \mathbb{R}$, namely, $\mathbf{d}_{D}L: TQ \to T^\ast T^\ast Q$ is the map defined by
\[
\begin{split}
\mathbf{d}_{D}L=\gamma_{Q} \circ \mathbf{d}L: TQ \to T^{\ast}T^{\ast}Q;\; (q,v) \mapsto \left(q,\frac{\partial L}{\partial v},-\frac{\partial L}{\partial q},v\right).
\end{split}
\]
In the discrete setting, corresponding to the $(\pm)$-discrete maps $\gamma_{Q}^{d\pm}$, there exist the $(\pm)$-discrete Dirac differentials of $L_d:Q \times Q \to \mathbb{R}$ as follows.

\begin{definition}\rm
The $(\pm)$-discrete Dirac differentials of the discrete Lagrangian $L_{d}:Q\times Q\to \mathbb{R}$ are the maps that are defined by the composition of the discrete maps $\gamma_{Q}^{d\pm}: T^{\ast}(Q\times Q) \to T^{\ast}T^{\ast}Q$ in \eqref{DiscreteGammaMap} and $\mathbf{d}L_{d}: Q \times Q \to T^{\ast}(Q \times Q)$ as
\begin{equation*}
\mathbf{d}^{\pm}_{D}L_d:=\gamma_{Q}^{d\pm}\circ \mathbf{d}L_d:Q\times Q\rightarrow T^{\ast}T^{\ast}Q.
\end{equation*}

The local expression of the {\it $(+)$-discrete Dirac differential} of $L_{d}$ is expressed in local  coordinates $(q_{0},v_{0}) \in Q \times Q$ as
\begin{equation*}
\begin{split}
&\mathbf{d}^{+}_{D}L_d: (q_0,v_{0})\mapsto \left(v_{0},D_2 L_d(q_0,v_{0}), -\widehat{D_{2}L}_{d}^{+}(q_0,v_{0}),\hat{v}^{-}_{0}\right),
\end{split}
\end{equation*}
where $\widehat{D_2L}_d^{+}(q_{0},v_{0})$ and $\hat{v}^-_0$ are respectively approximations of ${D_{2}L_{d}}(q_{0},v_{0})$ and $v_0$, which are given in association with the map $\gamma_{Q}^{d+}$ as
\begin{equation}\label{hat_rules_dL+}
\widehat{D_2L}_d^{+}(q_0,v_{0})=\frac{1}{h}\bigl(D_2L_d(q_{1},v_{1})+D_1L_d(q_{1},v_{1})\bigr), 
\quad
\hat{v}^-_0=\frac{v_0-q_{0}}{h}.
\end{equation}

On the other hand, the local expression of the {\it $(-)$-discrete Dirac differential} of $L_{d}$  is denoted  in local  coordinates $(v_{1},q_{1}) \in Q \times Q$ by
\begin{equation*}
\begin{split}
& \mathbf{d}^{-}_{D}L_d:(v_{1},q_{1})\mapsto \left(v_{1}, -D_1L_d(v_{1},q_{1}) , \widehat{D_{1}L}_{d}^{-}(v_{1},q_{1}),\hat{v}^{+}_{1}\right),
\end{split}
\end{equation*}
where $\widehat{D_{1}L}_{d}^{-}(v_{1},q_{1})$ and $\hat{v}^{+}_{1}$ are respectively approximations of ${D_{1}L_{d}}(v_{1},q_{1})$ and $v_1$, which are given in association with  the map $\gamma_{Q}^{d-}$ as
\begin{equation}\label{hat_rules_dL-}
\widehat{D_1L}_d^{-}(v_{1},q_{1})=-\frac{1}{h}\bigl(D_2L_d(v_{0},q_{0})+D_1L_d(v_{0},q_{0})\bigr), 
\quad
\hat{v}^+_1=\frac{q_1-v_{1}}{h}.
\end{equation}
\end{definition}

\paragraph{Discrete evolution operators.}
Next we will introduce $(\pm)$-discrete evolution operators, which serve analogous roles to continuous vector fields.

\begin{definition}$\;$\rm
Define the $(\pm)$-{\it discrete evolution operators} 
\begin{equation*}
{X}^{d\pm}: T^{\ast}Q \to T^{\ast}Q \times T^{\ast}Q
 \end{equation*}
such that
\[
\tau_{(T^{\ast}Q)_{2}}\circ X^{d+}=\mathrm{Id} \quad \textrm{and}\quad
\tau_{(T^{\ast}Q)_{1}}\circ X^{d-}=\mathrm{Id},
\]
which are locally given by
\begin{equation}\label{LocalDisEvoOpe}
\begin{split}
&{X}^{d +}: (q_{1},p_{1}) \mapsto (q_{0},p_{0},q_{1},p_{1}),\\
&{X}^{d -}: (q_{0},p_{0}) \mapsto (q_{0},p_{0}, q_{1},p_{1}).
\end{split}
\end{equation}
Corresponding to the $(\pm)$-discrete evolution operators $X^{d\pm}$, the $(\pm)$-{\it discrete flow maps} are defined by
\[
\begin{split}
&(\tilde{\varphi})^{-1}: =\tau_{(T^{\ast}Q)_1}\circ X^{d+}: T^{\ast}Q \to T^{\ast}Q;\quad (q_{1},p_{1}) \mapsto (q_{0},p_{0}),\\
&\tilde{\varphi}: =\tau_{(T^{\ast}Q)_2}\circ X^{d-}: T^{\ast}Q \to T^{\ast}Q;\quad  (q_{0},p_{0}) \mapsto (q_{1},p_{1}).
\end{split}
\]
Associated with the $(+)$-discrete evolution operator $X^{d+}$ in \eqref{LocalDisEvoOpe}, the approximation of $(\dot{q}_{1},\dot{p}_{1})$ is given by using $(+)$-discretization  map $\Psi_{(T^{\ast}Q)_{2}}^{+}: T^{\ast}Q \times T^{\ast}Q \to TT^{\ast}Q$ as
\begin{equation}\label{disc_plus_VectorField}
\begin{split}
\hat{X}^{d+}(q_1,p_1)&=\Psi_{(T^*Q)_2}^{+}(X^{d+}(q_1,p_1))\\
& =\Psi_{(T^*Q)_2}^{+}(q_{0},p_{0},q_{1},p_{1})\\
&=\left(q_1,p_1,\hat{q}^{-}_{1}=\frac{q_1-q_{0}}{h},\hat{p}^{+}_{1}=\frac{p_2-p_{1}}{h}\right)\in T_{(q_{1},p_{1})}T^{\ast}Q.
\end{split}
\end{equation}

Similarly, associated with the $(-)$-discrete evolution operator $X^{d-}$ in \eqref{LocalDisEvoOpe}, the approximation of $(\dot{q}_{0},\dot{p}_{0})$ is given by using $(-)$-discretization  map $\Psi_{(T^{\ast}Q)_{1}}^{-}: T^{\ast}Q \times T^{\ast}Q \to TT^{\ast}Q$ as
\begin{equation}\label{disc_minus_VectorField}
\begin{split}
\hat{X}^{d-}(q_0,p_0)&=\Psi_{(T^*Q)_1}^{-}(X^{d-}(q_0,p_0))\\
& =\Psi_{(T^*Q)_1}^{-}(q_{0},p_{0},q_{1},p_{1})\\
&=\left(q_0,p_0,\hat{q}^{+}_{0}=\frac{q_1-q_{0}}{h},\hat{p}^{-}_{0}=\frac{p_0-p_{-1}}{h}\right)\in T_{(q_{0},p_{0})}T^{\ast}Q.
\end{split}
\end{equation}

\end{definition}

\if0
Then, a {\bf discrete evolution operator} $X^{\pm}_{L_{d}}:   Q \times Q \to T^{\ast}Q  \times T^{\ast}Q $ is given by
\begin{equation*}
X^{\pm}_{L_{d}}:= X_{d\pm} \circ \mathbb{F}^{\pm}L_{d},
\end{equation*}
which are locally denoted by
\begin{equation*}
\begin{split}
X^{-}_{L_{d}}:=(q_{0},q_{1}) \to ((q_0,p_{0}),(\hat{q}_{1},\hat{p}_{1})),\\
X^{+}_{L_{d}}:=(q_{0},q_{1}) \to ((\hat{q}_{0},\hat{p}_{0}), (q_{1},p_{1})).
\end{split}
\end{equation*}
\fi

\begin{figure}[htbp]
\centering
\includegraphics[scale=0.55]{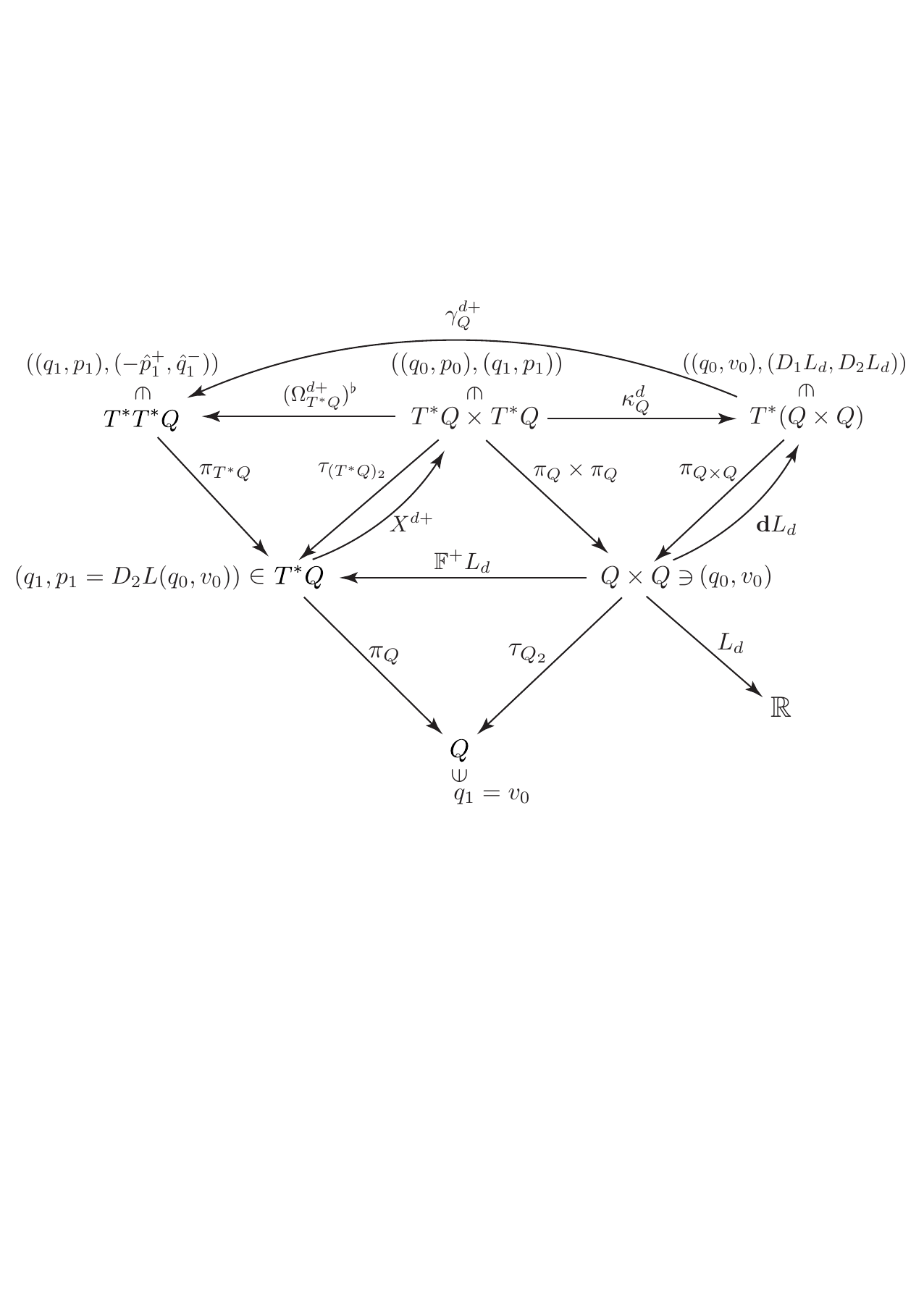}
\caption{Geometric structure of $(+)$-discrete Lagrange--Dirac dynamical systems.}
\label{BundlePic1_discrete_coordinates_plus}
\end{figure}
\vspace{5mm}
\begin{figure}[htbp]
\centering
\includegraphics[scale=0.55]{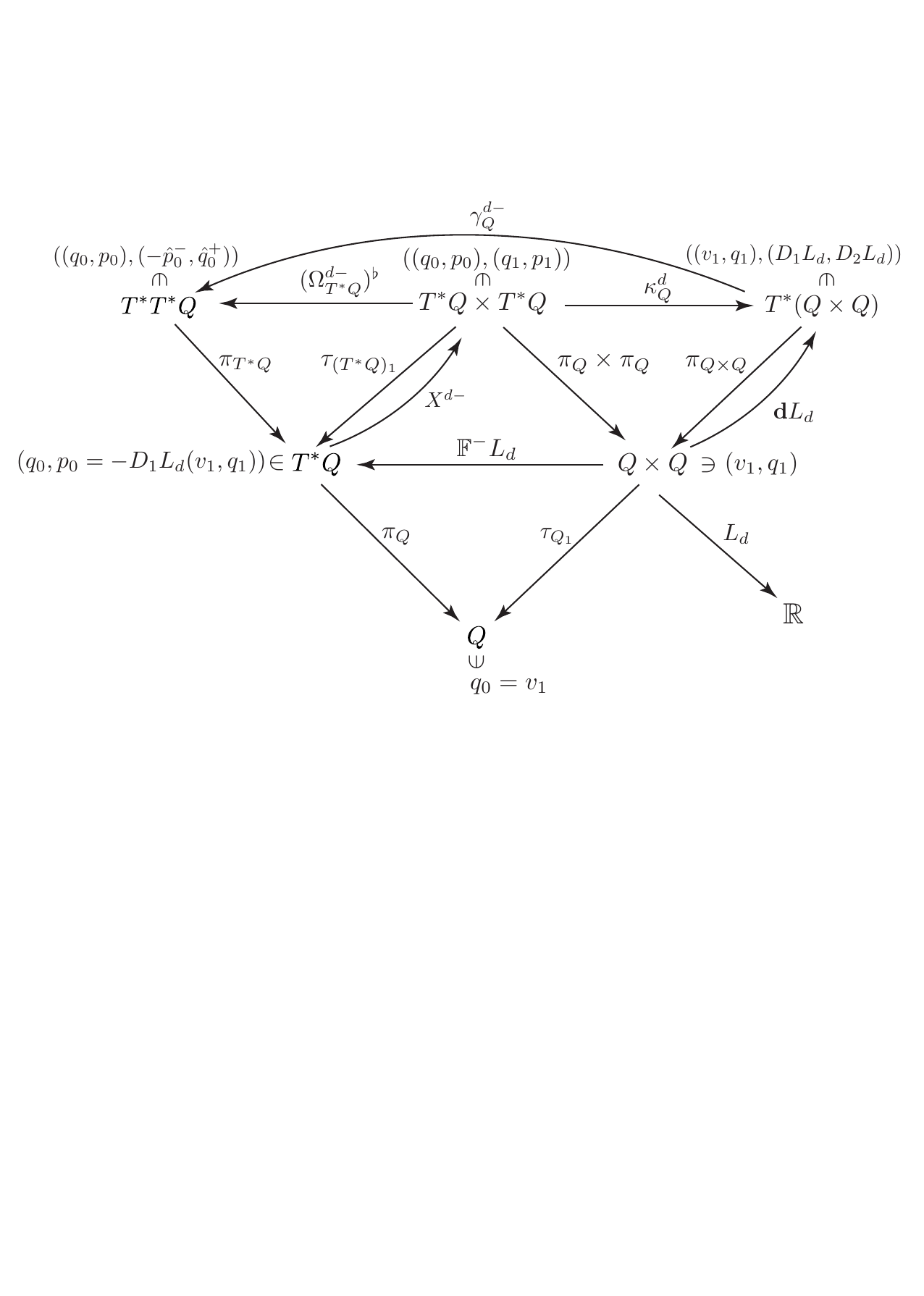}
\caption{Geometric structure of $(-)$-discrete Lagrange--Dirac dynamical systems.}
\label{BundlePic1_discrete_coordinates_minus}
\end{figure}

\paragraph{Discrete Lagrange--Dirac dynamical systems.}
By analogy with the continuous setting of Lagrange--Dirac dynamical systems given in Def.\,\ref{ContLagDiracDynSys},  we shall define $(\pm)$-discrete Lagrange--Dirac dynamical systems as follows.

\begin{definition}\rm Let $L_{d}: Q\times Q \to \mathbb{R}$ be a discrete Lagrangian, let $X^{d\pm}: T^{\ast}Q \to T^{\ast}Q \times T^{\ast}Q$ be  discrete evolution operators and let $D^{d\pm}_{\Delta_{Q}}$ be $(\pm)$-discrete induced Dirac structures.

The $(+)$-{\it discrete Lagrange--Dirac dynamical system} is defined by the triple $(X^{d+}, \mathbf{d}^{+}_{D}L_{d},$ $ D^{d+}_{\Delta_{Q}})$ that satisfies the condition, for each $((q_{0},v_{0}),(q_1,p_1)) \in (Q\times Q)\oplus T^{\ast}Q$,
\begin{equation}
\label{EqDiscLagDiracDynSys_plus}
\left(X^{d+}(q_{1},p_{1}),\mathbf{d}^{+}_{D}L_{d}(q_{0},v_{0})\right)\in D^{d+}_{\Delta_{Q}}(q_{1},p_{1}).
\end{equation}
Similarly, 
the $(-)$-{\it  discrete Lagrange--Dirac dynamical system} is defined by the triple $(X^{d-}, \mathbf{d}^{-}_{D}L_{d}, D^{d-}_{\Delta_{Q}})$ that satisfies the condition, for each $((v_{1},q_{1}),(q_0,p_{0})) \in (Q\times Q)\oplus T^{\ast}Q$,
\begin{equation}\label{EqDiscLagDiracDynSys_minus}
\left(X^{d-}(q_{0},p_{0}),\mathbf{d}^{-}_{D}L_{d}(v_{1},q_{1})\right)\in D^{d-}_{\Delta_{Q}}(q_{0},p_{0}).
\end{equation}
\end{definition}
Then, we have the following theorem for the discrete Lagrange--Dirac dynamical systems.

\begin{framed}
\item \begin{theorem}\rm\label{Thm:DiscLagDiracDynSys}
Consider a discrete path $(q_{d}, v_{d}, p_{d})$ on $(Q \times Q) \oplus T^{\ast}Q$, where $q_{d}\!=\!\{q_{k}\}_{k=0}^{N}$, \!
$v_{d}\!=\!\{v_{k}\}_{k=0}^{N}$\! and \!$p_{d}\!=\!\{p_{k}\}_{k=0}^{N}$ with $q_{k}\!=\!q_{d}(t_{k})$, $v_{k}\!=\!v_{d}(t_{k})$, and $p_{k}\!=\!p_{d}(t_{k})$.  
%
\begin{itemize}
\item[(i)]
If a discrete path $(q_{d}, v_{d}, p_{d}) \in \mathcal{C}((Q \times Q) \oplus T^{\ast}Q)$ is the solution of the \textit{$(+)$-discrete Lagrange--d'Alembert--Pontryagin equations}:
\begin{equation}\label{plus_DiscLagDiracSys}
\begin{split}
&p_{k+1}=\frac{\partial L_{d}}{\partial v_{k}}(q_{k},v_{k}),\;\; p_{k+1}+ \frac{\partial L_{d}}{\partial q_{k+1}}(q_{k+1},v_{k+1}) \in \Delta^{\circ}_Q(q_{k+1}),\\
&q_{k+1}=v_{k},\;\
(q_{k},v_{k})\in \Delta_Q^{d+},
\end{split}
\end{equation}
then the discrete path $(q_{d}, v_{d}, p_{d})$ satisfies the condition for the {\it $(+)$-discrete Lagrange--Dirac dynamical system} as
\[
(X^{d+}(q_{k+1},p_{k+1}) ,\mathbf{d}_{D}^{+}L_{d}(q_{k},v_{k}))\in D^{d+}_{\Delta_{Q}}(q_{k+1},p_{k+1}),\quad k=1,...,N-1.
\]
The $(+)$-discrete Lagrange--d'Alembert--Pontryagin equations recover the $(+)$-discrete Lagrange--d'Alembert equations:
\[
D_{2}L_{d}(q_{k},q_{k+1}) + D_{1}L_{d}(q_{k+1},q_{k+2}) \in \Delta^{\circ}_{Q}(q_{k+1}),\quad 
(q_{k},q_{k+1}) \in \Delta_{Q}^{d+}.
\]
\item[(ii)]
If a discrete path $(q_{d}, v_{d}, p_{d}) \in \mathcal{C}((Q \times Q) \oplus T^{\ast}Q)$ is the solution of the \textit{$(-)$-discrete Lagrange--d'Alembert--Pontryagin equations}:
\begin{equation}\label{minus_DiscLagDiracSys}
\begin{split}
&p_{k}=-\frac{\partial L_{d}}{\partial v_{k+1}}(v_{k+1},q_{k+1}),\;\; p_{k}- \frac{\partial L_{d}}{\partial q_{k}}(v_{k},q_{k}) \in \Delta^{\circ}_Q(q_{k}),\\
&q_k=v_{k+1},\;\
(v_{k+1},q_{k+1})\in \Delta_Q^{d-},
\end{split}
\end{equation}
then the discrete path $(q_{d}, v_{d}, p_{d})$ satisfies the condition for the {\it $(-)$-discrete Lagrange--Dirac dynamical system} as
\[
(X^{d-}(q_{k},p_{k}) ,\mathbf{d}^{-}_{D}L_{d}(v_{k+1},q_{k+1}))\in D^{d-}_{\Delta_{Q}}(q_{k},p_{k}),\quad k=1,...,N-1.
\]
The $(-)$-discrete Lagrange--d'Alembert--Pontryagin equations recover the $(-)$-discrete Lagrange--d'Alembert equations:
\[
D_{2}L_{d}(q_{k-1},q_{k}) + D_{1}L_{d}(q_{k},q_{k+1}) \in \Delta^{\circ}_{Q}(q_{k}),\quad 
(q_{k},q_{k+1}) \in \Delta_{Q}^{d-}.
\]
\end{itemize}
\end{theorem}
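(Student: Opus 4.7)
The plan is to prove parts (i) and (ii) by the same three-step strategy, treating (i) in detail and stating (ii) as a symmetric dual. The high-level idea is to simply unpack the Dirac membership condition \eqref{EqDiscLagDiracDynSys_plus} using the explicit local formula \eqref{Loc_discrete_Dirac_plus} for $D^{d+}_{\Delta_Q}$, read off the component equations, and check that they are exactly the $(+)$-discrete Lagrange--d'Alembert--Pontryagin equations \eqref{plus_DiscLagDiracSys}. Then, a substitution of the Legendre relation and the second-order-type identity $q_{k+1} = v_k$ into the force equation recovers the discrete Lagrange--d'Alembert equations.

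Concretely, I would first compute the two sides of \eqref{EqDiscLagDiracDynSys_plus} in coordinates: $X^{d+}(q_{k+1},p_{k+1}) = (q_k,p_k,q_{k+1},p_{k+1})$ lies in $T^{\ast}Q\times T^{\ast}Q$ with second component $(q_{k+1},p_{k+1})$, while $\mathbf{d}^{+}_{D}L_d(q_k,v_k) = (v_k,\,D_2 L_d(q_k,v_k),\,-\widehat{D_2 L}_d^{+}(q_k,v_k),\,\hat v_k^{-})$ is a covector whose base point in $T^{\ast}Q$ is $(v_k,D_2L_d(q_k,v_k))$. Compatibility of the base points already forces the two Legendre-type identities $q_{k+1}=v_k$ and $p_{k+1}=D_2L_d(q_k,v_k)$ from \eqref{plus_DiscLagDiracSys}. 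The local characterization \eqref{Loc_discrete_Dirac_plus} then gives three further conditions: (a) $(q_k,q_{k+1})\in\Delta_Q^{d+}$, which becomes $(q_k,v_k)\in\Delta_Q^{d+}$ via $v_k=q_{k+1}$; (b) $\hat q^{-}_{k+1}=\hat v_k^{-}$, which is automatic from $v_k=q_{k+1}$; and (c) the annihilator condition $-\widehat{D_2L}_d^{+}(q_k,v_k) + \hat p^{+}_{k+1} \in \Delta_Q^\circ(q_{k+1})$.

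The main technical point is condition (c). Using the Legendre identity at the next index, $p_{k+2}=D_2L_d(q_{k+1},v_{k+1})$, the forward difference becomes $\hat p^{+}_{k+1} = (D_2L_d(q_{k+1},v_{k+1})-p_{k+1})/h$, while \eqref{hat_rules_dL+} reads $\widehat{D_2L}_d^{+}(q_k,v_k)=(D_2L_d(q_{k+1},v_{k+1})+D_1L_d(q_{k+1},v_{k+1}))/h$. Subtracting yields
\begin{equation*}
-\widehat{D_2L}_d^{+}(q_k,v_k) + \hat p^{+}_{k+1} \;=\; -\tfrac{1}{h}\bigl(p_{k+1}+D_1L_d(q_{k+1},v_{k+1})\bigr),
\end{equation*}
whose membership in $\Delta_Q^\circ(q_{k+1})$ is exactly the force condition in \eqref{plus_DiscLagDiracSys}. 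Conversely, the LDAP equations clearly imply each of (a)--(c) by the same arithmetic, so \eqref{EqDiscLagDiracDynSys_plus} holds. Finally, to recover the $(+)$-discrete Lagrange--d'Alembert equations, I substitute $v_k=q_{k+1}$ and $v_{k+1}=q_{k+2}$ together with $p_{k+1}=D_2L_d(q_k,q_{k+1})$ into the force equation, obtaining $D_2L_d(q_k,q_{k+1})+D_1L_d(q_{k+1},q_{k+2})\in\Delta_Q^\circ(q_{k+1})$ with constraint $(q_k,q_{k+1})\in\Delta_Q^{d+}$, which is the desired LDA form.

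The proof of (ii) proceeds in an entirely parallel manner, replacing \eqref{Loc_discrete_Dirac_plus} by \eqref{Loc_discrete_Dirac_minus}, the hat rules \eqref{hat_rules_dL+} by \eqref{hat_rules_dL-}, and the Legendre transform $p_{k+1}=D_2L_d(q_k,v_k)$ by $p_k=-D_1L_d(v_{k+1},q_{k+1})$. The anticipated main obstacle is the bookkeeping in step (c): one must be careful that $\widehat{D_2L}_d^{\pm}$, as defined in \eqref{hat_rules_dL+}--\eqref{hat_rules_dL-}, is \emph{not} the naive forward/backward difference of $D_2L_d$, and that the identification with $\hat p^{\pm}$ works only on-shell, i.e., after inserting the Legendre transform at the appropriate shifted index. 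Once this algebraic identity is isolated, every other step is a direct coordinate reading of the local characterization of $D^{d\pm}_{\Delta_Q}$ and a substitution to eliminate the auxiliary variables $v$ and $p$.
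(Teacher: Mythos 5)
Your proposal is correct and follows essentially the same route as the paper's proof: unpack the membership condition \eqref{EqDiscLagDiracDynSys_plus} (resp.\ \eqref{EqDiscLagDiracDynSys_minus}) via the local characterization \eqref{Loc_discrete_Dirac_plus} (resp.\ \eqref{Loc_discrete_Dirac_minus}), identify the resulting component equations with \eqref{plus_DiscLagDiracSys} (resp.\ \eqref{minus_DiscLagDiracSys}) using the hat rules \eqref{hat_rules_dL+}--\eqref{hat_rules_dL-}, and eliminate $v$ and $p$ to recover the discrete Lagrange--d'Alembert equations. Your explicit on-shell computation of $-\widehat{D_2L}_d^{+}+\hat p^{+}_{k+1}$ makes precise a step the paper treats more tersely, but the argument is the same.
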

\end{framed}
\begin{proof}
\begin{itemize}
\item[(i)] For the $(+)$ part, setting  
\[
\begin{split}
&X^{d+}(q_{1},p_{1})=(q_{0},p_{0},q_{1},p_{1}),\quad \hat X^{d+}(q_{1},p_{1})=(q_{1},p_{1},\hat{q}^-_{1},\hat{p}^+_{1}),
\end{split}
\]
and
\[
\begin{split}
\mathbf{d}^{+}_{D}L_{d}(q_{0},v_{0})&=\left(v_{0},D_2 L_d(q_0,v_{0}), -\widehat{D_{2}L}_{d}^{+}(q_0,v_{0}),\hat{v}^{-}_{0}\right)\\
&=\left(\left(v_0,\frac{\partial L_d}{\partial v_0}(q_{0},v_{0})\right),\left(-\frac{\widehat{\partial L}^{+}_d}{\partial v_0}(q_{0},v_{0}), \hat{v}^{-}_0 \right) \right),
\end{split}
\]
it follows from \eqref{Loc_discrete_Dirac_plus} that the condition \eqref{EqDiscLagDiracDynSys_plus} becomes
\begin{equation*}
(q_{0},v_0)\in \Delta^{d+}_Q,\quad -\frac{\widehat{\partial L}^{+}_d}{\partial v_0}(q_{0},v_{0})
+\hat p^{+}_{1} \in \Delta^{\circ}_{Q}(q_{1}), \quad \hat{q}^{-}_{1}=\hat{v}^{-}_0
\end{equation*}
together with the $(+)$-Legendre transform $p_{1}=\frac{\partial L_{d}}{\partial v_{0}}(q_0,v_0)$.
Then it follows from \eqref{hat_rules_dL+} and \eqref{disc_plus_VectorField} that 
\[
\begin{split}
&\hat{q}^-_1=\frac{q_1-q_0}{h}, \quad \hat{v}^-_0=\frac{v_0-q_{0}}{h}, \quad \hat p^{+}_{1}=\frac{p_2-p_{1}}{h},\\
&\frac{\widehat{\partial L}^{+}_d}{\partial v_0}(q_{0},v_{0})=\frac{1}{h}\left( \frac{\partial L_d}{\partial q_1}(q_{1},v_{1})+\frac{\partial L_d}{\partial v_1}(q_{1},v_{1})\right)
\end{split}
\]
and we get the $(+)$-discrete Lagrange--d'Alembert--Pontryagin equations:
\begin{equation*}
p_{1}=\frac{\partial L_{d}}{\partial v_{0}}(q_0,v_0),\;\; p_{1}+ \frac{\partial L_{d}}{\partial q_{1}}(q_{1},v_{1}) \in \Delta^{\circ}_Q(q_{1}),\;\;v_0=q_{1},\;\
(q_0,v_{0})\in \Delta_Q^{d+},
\end{equation*}
which leads to the $(+)$-discrete Lagrange--d'Alembert equations:
\[
\frac{\partial L_{d}}{\partial q_{1}}(q_{0},q_{1})+\frac{\partial L_{d}}{\partial q_1}(q_1,q_{2}) \in \Delta^{\circ}_Q(q_1),\quad (q_{0},q_{1})\in \Delta_Q^{d+}.
\]
\smallskip
\item[(ii)] Similarly, for the $(-)$ part, setting  
\[
\begin{split}
&X^{d-}(q_{0},p_{0})=(q_{0},p_{0},q_{1},p_{1}),\quad \hat X^{d-}(q_{0},p_{0})=(q_{0},p_{0},\hat{q}^+_{0}, \hat{p}^-_{0}),
\end{split}
\]
and
\[
\begin{split}
\mathbf{d}^{-}_{D}L_{d}(v_{1},q_{1})
&= \left(v_{1}, -D_1L_d(v_{1},q_{1}) , \widehat{D_{1}L}_{d}^{-}(v_{1},q_{1}),\hat{v}^{+}_{1}\right).\\
&=\left(v_1,-\frac{\partial L_d}{\partial v_1}(v_{1},q_{1}),\; \frac{\widehat{\partial L}^{-}_d}{\partial v_1}(v_{1},q_{1}),\; \hat{v}^{+}_1 \right),
\end{split}
\]
it follows from \eqref{Loc_discrete_Dirac_minus} that the condition \eqref{EqDiscLagDiracDynSys_minus} becomes
\begin{equation*}
(q_0,v_{0})\in \Delta^{d-}_Q,\quad \frac{\widehat{\partial L}^{-}_d}{\partial v_1}(v_{1},q_{1})+\hat p^{-}_{0} \in \Delta^{\circ}_{Q}(q_{0}), \quad \hat{q}^+_{0}=\hat{v}^{+}_1,
\end{equation*}
together with the $(-)$-Legendre transform 
$
p_{0}=-\frac{\partial L_{d}}{\partial v_{1}}(v_1,q_1).
$
It follows from \eqref{hat_rules_dL-} and \eqref{disc_minus_VectorField} that 
\[
\begin{split}
&\hat{q}^+_0=\frac{q_1-q_0}{h}, \quad \hat{v}^+_1=\frac{q_1-v_1}{h}, \quad \hat p^{-}_{0}=\frac{p_0-p_{-1}}{h},\\
&\frac{\widehat{\partial L}^{-}_d}{\partial v_1}(v_{1},q_{1})=-\frac{1}{h}\left( \frac{\partial L_d}{\partial q_0}(v_{0},q_{0})+\frac{\partial L_d}{\partial v_0}(v_{0},q_{0})\right)
\end{split}
\]
and we get the $(-)$-discrete Lagrange--d'Alembert--Pontryagin equations:
\begin{equation*}
p_{0}=-\frac{\partial L_{d}}{\partial v_1}(v_1,q_1),\;\; p_0- \frac{\partial L_{d}}{\partial q_{0}}(v_{0},q_{0}) \in \Delta^{\circ}_Q(q_0),\;\;v_1=q_{0},\;\
(v_1,q_1)\in \Delta_Q^{d-}.
\end{equation*}
Finally, we can recover the $(-)$-discrete Lagrange--d'Alembert equations:
\[
\frac{\partial L_{d}}{\partial q_{0}}(q_{-1},q_{0}) +\frac{\partial L_{d}}{\partial q_0}(q_0,q_{1}) \in \Delta^{\circ}_Q(q_0)\quad (q_0,q_{1})\in \Delta_Q^{d-}.
\]
\end{itemize}
\end{proof}

\begin{remark}\rm
In Theorem \ref{Thm:DiscLagDiracDynSys}, we derived the $(\pm)$-discrete Lagrange--d'Alembert equations within the context of the $(\pm)$ discrete induced Dirac structures, in which the nonholonomic constraints are described in different ways by using $\Delta^{d\pm}_{Q}$. In particular, the $(-)$-discrete Lagrange--d'Alembert equations are identical to \eqref{eq: local_discrete_LDA}, which were originally derived in \cite{CoMa2001,McPe2006}.
\end{remark}

\begin{remark}\rm
Here we employ the $(\pm)$ signs   for the bundle structures in correspondence with those of the discrete bundle maps; namely, the $(+)$ sign of the formulas in Figs. \ref{plus_DisBundleGeometry} and \ref{BundlePic1_discrete_coordinates_plus} is allocated to be consistent with the $(+)$ signs of the discrete symplectic flat map $({\Omega}^{d+}_{T^{\ast}Q})^{\flat}$, the discrete map $\gamma_{Q}^{+}$, and the discrete Legendre transform $\mathbb{F}^{+}L_{d}$. On the other hand,  the $(-)$ sign of the formulas in Figs. \ref{minus_DisBundleGeometry} and \ref{BundlePic1_discrete_coordinates_minus} is allocated to be consistent with the $(-)$ signs of the discrete symplectic flat map $({\Omega}^{d-}_{T^{\ast}Q})^{\flat}$, the discrete map $\gamma_{Q}^{-}$, and the discrete Legendre transform $\mathbb{F}^{-}L_{d}$.
\end{remark}

\if0
\subsection{Preservation of discrete Dirac structures}

In this section, we prove the preservation property of the discrete Dirac structures $D_{T^*Q}^{d\pm}$ (\ref{dds}) with respect to the $(\pm)$discrete Lagrange--Dirac systems \eqref{pmLD}.
These systems can be expressed using invertible discrete ``flow" maps $\varphi_d^{\pm}:T^*Q\rightarrow T^*Q$ (implicitly usually) as follows:
\begin{equation}\label{varphi}
\varphi_d^{\pm}:(q_k,p_k)\mapsto (q_{k+1},p_{k+1}).
\end{equation}

\begin{theorem}\label{disdiracthm}\rm
The Dirac structures $D_{Q\times Q^*}^{d\pm}$ \eqref{dds} are respectively preserved along the discrete flows \eqref{varphi}  in the sense that 
\begin{equation}
\left(\varphi_d^{\pm}\right)^*D_{T^*Q}^{d\pm}=D_{T^*Q}^{d\pm}.
\end{equation}
\end{theorem}

\begin{proof}
We will only prove the $(+)$ case; the $(-)$ version is similar. First we prove that
\begin{equation}\label{dcpc}
\left(\varphi_d^+\right)^*\mathcal{C}_{T^*Q}^+=\mathcal{C}_{T^*Q}^+.
\end{equation}
In local coordinates, we can rewrite the constraint space as
\begin{equation}
\mathcal{C}_{T^*Q}^+
=\left\{((q_0,p_0),(q_1,p_1))\in  T^*Q\times  T^*Q \mid  g^i(q_0,q_1)=0\right\}
\end{equation}
for finite numbers of functions $g$. Its pullback is 
\begin{equation}
\begin{aligned}
\left(\varphi_d^+\right)^*\mathcal{C}_{T^*Q}^+&=\left\{\left(\varphi_d^+\right)^*((q_0,p_0),(q_1,p_1))\in T^*Q\times T^*Q\mid g^i(q_0,q_1)=0\right\}\\
&=\left\{((q_0,p_0),(q_1,p_1))\in T^*Q \times T^*Q\mid \left(\left(\varphi_d^+\right)^*\right)^{-1}g^i(q_0,q_1)=0\right\}.
\end{aligned}
\end{equation}
As $\varphi_d^+$ is invertible, equality (\ref{dcpc}) holds immediately. Then their annihilators are also the same, namely
\begin{equation}
\left(\varphi_d^+\right)^*\left(\mathcal{C}^+_{T^*Q}\right)^{\circ}=\left(\mathcal{C}^+_{T^*Q}\right)^{\circ}.
\end{equation}

The pullbacked Dirac structure can be written as
\begin{equation} \label{deduction}
\begin{aligned}
\left(\varphi_d^+\right)^*D_{T^*Q}^{d+}=&\left\{\left(\left(\varphi_d^+\right)^*X^+,\left(\varphi_d^+\right)^*\eta^+\right)\mid (X^+,\eta^+)\in D_{T^*Q}^{d+}\right\}\\
=&\Big\{\left(X^+,\eta^+\right)\in (T^*Q\times T^*Q)\oplus T^*T^*Q \mid\\
&~~~~~~ \left(\left(\varphi_d^+\right)^*\right)^{-1}X^+\in \mathcal{C}^+_{T^*Q}\text{ and }\left(\left(\varphi_d^+\right)^*\right)^{-1}\eta^+-\iota_{\left(\left(\varphi_d^+\right)^*\right)^{-1}X^+}\widehat{\Omega}_d\in\left(\mathcal{C}^+_{T^*Q}\right)^{\circ}\Big\}\\
=&\Big\{\left(X^+,\eta^+\right)\in (T^*Q\times T^*Q)\oplus T^*T^*Q \mid\\
& ~~~~~~X^+\in \left(\varphi_d^+\right)^*\mathcal{C}_{T^*Q}^+\text{ and }\eta^+-\left(\varphi_d^+\right)^*\left(\iota_{\left(\left(\varphi_d^+\right)^*\right)^{-1}X^+}\widehat{\Omega}_d\right)\in\left(\varphi_d^+\right)^*\left(\mathcal{C}^+_{T^*Q}\right)^{\circ}\Big\}\\
=&\Big\{\left(X^+,\eta^+\right)\in (T^*Q\times T^*Q)\oplus T^*T^*Q \mid\\
& ~~~~~~X^+\in \mathcal{C}_{T^*Q}^+\text{ and }\eta^+-\left(\varphi_d^+\right)^*\left(\iota_{\left(\left(\varphi_d^+\right)^*\right)^{-1}X^+}\widehat{\Omega}_d\right)\in\left(\mathcal{C}^+_{T^*Q}\right)^{\circ}\Big\}.
\end{aligned}
\end{equation}

Now $\Omega(q_k,-p_k)=\operatorname{d}\!q_k\wedge \operatorname{d}\!\left(-p_k\right)$ and let $X^+=((q_k,-p_{k}),(q_{k+1},p_{k+1}))\in T^*Q\times T^*Q$; a direct calculation  amounts to 
\begin{equation}\label{vareq}
\begin{aligned}
\left(\varphi_d^+\right)^*\left(\iota_{\left(\left(\varphi_d^+\right)^*\right)^{-1}X^+}\widehat{\Omega}_d(q_{k-1},-p_{k-1})\right)&=\left(\varphi_d^+\right)^*\left(q_{k-1},-p_{k-1},-\widehat{p}_k,\widehat{q}_k\right)\\
&=\left(q_k,-p_k,-\widehat{p}_{k+1},\widehat{q}_{k+1}\right)\\
&=\iota_{X^+}\widehat{\Omega}_d(q_k,-p_k).
\end{aligned}
\end{equation}
Therefore, we are ready to conclude that $\left(\varphi_d^+\right)^* D_{T^*Q}^{d+}=D_{T^*Q}^{d+}$ according to  (\ref{deduction}),
which finishes the proof of the $(+)$ version.
\end{proof}

It is well known that when no constraints are taken into consideration, variational integrators preserve the canonical symplectic form $\Omega(q_k,p_k)=\operatorname{d}\!q_k\wedge \operatorname{d}\!p_k$ (at the base point $(q_k,p_k)\in T^*Q$), that is\begin{equation}
\varphi_d^*\Omega=\Omega,
\end{equation}
where the map $\varphi_d$ is $\varphi_d:T^*Q\rightarrow T^*Q,\; (q_k,p_k)\mapsto (q_{k+1},p_{k+1})$. In particular, let us consider the discrete $(\pm)$discrete Lagrange--Dirac systems in Remark \ref{rem43}. Now we have
\begin{equation}
\begin{aligned}
\left(\mathcal{C}_{T^*Q}^+\right)^{\circ}\ni&\operatorname{d}_D^+\!L_d-\iota_{X^+}\widehat{\Omega}_d\\
&=-\widehat{D_2L_d}\operatorname{d}\!q_k+\widehat{v}_{k}\operatorname{d}\!\left(D_1L_d\right)+\widehat{p}_{k+1}\operatorname{d}\!q_k+\widehat{q}_{k+1}\operatorname{d}p_k\\
&=-\frac{D_2L_d(q_{k-1},v_{k-1})+D_1L_d(q_{k-1},v_{k-1})}{h}\operatorname{d}\!q_k+\frac{v_k-q_k}{h}\operatorname{d}\!\left(D_1L_d(q_k,v_k)\right)\\
&~~~~+\frac{p_k-p_{k-1}}{h}\operatorname{d}\!q_k+\frac{q_{k+1}-q_k}{h}\operatorname{d}\!p_k.
\end{aligned}
\end{equation}
and
\begin{equation}
\begin{aligned}
\left(\mathcal{C}_{T^*Q}^-\right)^{\circ}\ni&\operatorname{d}_D^-\!L_d-\iota_{X^-}\widehat{\Omega}_d\\
&=-\widehat{D_1L_d}\operatorname{d}\!v_k+\widehat{q}_{k}\operatorname{d}\!\left(D_2L_d\right)-\widehat{p}_k\operatorname{d}\!q_{k+1}-\widehat{q}_k\operatorname{d}\!p_{k+1}\\
&=\frac{D_2L_d(q_{k+1},v_{k+1})+D_1L_d(q_{k+1},v_{k+1})}{h}\operatorname{d}\!v_k+\frac{q_{k+1}-q_k}{h}\operatorname{d}\!\left(D_2L_d(q_k,v_k)\right)\\
&~~~~-\frac{p_{k+2}-p_{k+1}}{h}\operatorname{d}\!q_{k+1}-\frac{q_{k+1}-q_k}{h}\operatorname{d}\!p_{k+1}.
\end{aligned}
\end{equation}
Notice that the following two differential forms are closed, namely $\operatorname{d}\!\sigma^{\pm}=0$ where
\begin{equation}
\begin{aligned}
\sigma^+(q_{k-1},v_{k-1})&=-D_2L_d(q_{k-1},v_{k-1})\operatorname{d}\!v_{k-1}+q_{k-1}\operatorname{d}\!\left(D_1L_d(q_{k-1},v_{k-1})\right),\\
 \sigma^-(q_{k+1},v_{k+1})&=D_1L_d(q_{k+1},v_{k+1})\operatorname{d}\!q_{k+1}-v_{k+1}\operatorname{d}\!\left(D_2L_d(q_{k+1},v_{k+1})\right).
 \end{aligned}
\end{equation}
Then we have  
\begin{equation}
\begin{aligned}
\left(\varphi_d^+\right)^{\ast}\operatorname{d}&\left(\operatorname{d}_D^+\!L_d(q_k,v_k)-\iota_{X^+}\widehat{\Omega}_d(q_k,p_k)\right)\\
&=\left(\varphi_d^+\right)^{\ast}\operatorname{d}\left(\operatorname{d}_D^+\!L_d(q_k,v_k)-\iota_{X^+}\widehat{\Omega}_d(q_k,p_k)-\frac{\sigma^+(q_{k-1},v_{k-1})}{h}\right)\\
&=-\frac{1}{h}\left(\operatorname{d}\!q_k\wedge\operatorname{d}\!p_k-\operatorname{d}\!q_{k-1}\wedge\operatorname{d}\!p_{k-1} \right)\\
&=-\frac{1}{h}\left(\Omega(q_k,p_k)-\Omega(q_{k-1},p_{k-1})\right).
\end{aligned}
\end{equation}
In other words, there exists $\beta^+\in\left(\mathcal{C}_{T^*Q}^+\right)^{\circ}$ (recalling $\left(\varphi_d^+\right)^{\ast}\left(\mathcal{C}_{T^*Q}^+\right)^{\circ}=\left(\mathcal{C}_{T^*Q}^+\right)^{\circ}$) such that 
\begin{equation}
\left(\varphi_d^+\right)^{\ast}\Omega-\Omega=\operatorname{d}\!\beta^+.
\end{equation}
Similarly one can prove the $(-)$case and we conclude that there separately exist $\beta^{\pm}\in\left(\mathcal{C}_{T^*Q}^{\pm}\right)^{\circ}$ such that 
\begin{equation}
\left(\varphi_d^{\pm}\right)^{\ast}\Omega-\Omega=\operatorname{d}\!\beta^{\pm}.
\end{equation}
In this particular case, the preservation of discrete Dirac structures is the same as  the geometric property of DLA algorithms  introduced in \cite{CoMa2001}. In particular, when no constraints are involved, the symplectic form $\Omega$ is preserved exactly as a special case of the preservation of the discrete Dirac structures.


\fi

\section{Variational discretization of Lagrange--Dirac dynamical systems.} 
We have shown the $(\pm)$-discrete induced Dirac structures and the associated $(\pm)$-discrete Lagrange--Dirac dynamical systems by analogy with the continuous setting proposed by \cite{YoMa2006a}. In the continuous setting, it was shown by \cite{YoMa2006b} that the variational structure of the Lagrange--Dirac dynamical system is given by the {\it Lagrange--d'Alembert--Pontryagin principle}. Therefore, here we will explore the variational structures associated with the $(\pm)$-discrete Lagrange--Dirac dynamical systems by discretizing the Lagrange--d'Alembert--Pontryagin principle.

\subsection{Variational structure of Lagrange--Dirac dynamical systems} 
\paragraph{The Lagrange--d'Alembert--Pontryagin principle.} 
There exists a variational structure behind the Lagrange--Dirac dynamical system, which is called the \textit{Lagrange--d'Alembert--Pontryagin principle}. Here following \cite{YoMa2006b}, we will make a brief review on this principle.
\medskip

Consider a path space on the Pontryagin bundle $TQ\oplus T^{\ast}Q$ over $Q$ as
\begin{equation*}
\begin{split}
\mathcal{C} (TQ\oplus T^{\ast}Q) &=\left\{ (q,v,p):I=[0,T] \to TQ \oplus T^{\ast}Q \mid (q,v,p) \;  \textrm{is a smooth curve} \right.\\
&\left. \textrm{on $TQ \oplus T^{\ast}Q$ such that}\;  q(0)=q_1\;\textrm{and}\; q(T)=q_2 \right\},
\end{split}
\end{equation*}
where $I=[0,T] \subset \mathbb{R}^{+}$ is the space of time.
\medskip

Define the action functional $\tilde{\mathfrak{S}}: \mathcal{C} (TQ\oplus T^{\ast}Q)  \to \mathbb{R}$ by
\[
\tilde{\mathfrak{S}}(q,v,p)= \int ^{T}_{0}  \left\{ L(q(t),v(t)) + \left\langle p(t),  \dot{q}(t)-v(t) \right\rangle \right\} \, dt.
\]
Let $(q,v,p)=(q(t),v(t),p(t)) \in \mathcal{C} ((TQ\oplus T^{\ast}Q)$ be a critical curve of $\tilde{\mathfrak{S}}: \mathcal{C} (Q)  \to \mathbb{R}$, namely,
\begin{equation}\label{LagDALP}
\begin{split}
\delta \tilde{\mathfrak{S}}(q,v,p)=\delta\int_{0}^{T}\left\{ L(q(t),v(t)) + \left\langle p(t),  \dot{q}(t)-v(t) \right\rangle \right\} \, dt=0,
\end{split}
\end{equation}
subject to $ \delta q(t)\in \Delta _Q (q(t) )$ with the fixed endpoint conditions $\delta{q}(0)=\delta{q}(T)=0$ and with the kinematic constraint $ \dot{q}(t) \in \Delta _Q (q(t))$. 

By direct computations,  the critical condition of \eqref{LagDALP} can be developed as
\begin{equation*}
\begin{split}
\delta\int_{0}^{T}&\left\{ L(q(t),v(t)) + \left\langle p(t),  \dot{q}(t)-v(t) \right\rangle \right\} \, dt\\
&=\int_{0}^{T}\left\{ \frac{\partial L}{\partial q} \delta q +\frac{\partial L}{\partial v} \delta v + \left\langle \delta p,  \dot{q}-v \right\rangle
+ \left\langle p,  \delta\dot{q}- \delta v \right\rangle\right\} \, dt\\
&=\int_{0}^{T}\left\{ \left<\frac{\partial L}{\partial q}-\dot{p},  \delta q\right> +\left<\frac{\partial L}{\partial v}-p, \delta v\right> + \left\langle \delta p,  \dot{q}-v \right\rangle
\right\} \, dt +\left< p, \delta{q}\right> \biggr{\arrowvert}_{0}^{T}=0.
\end{split}
\end{equation*}

Imposing the variational condition $ \delta q(t)\in \Delta _Q (q(t) )$ with the fixed endpoint conditions $\delta{q}(0)=\delta{q}(T)=0$ together with the kinematic constraints $ \dot{q}(t) \in \Delta _Q (q(t))$, we get the {\it Lagrange--d'Alembert--Pontryagin equations} as
\begin{equation*}
p =\frac{\partial L}{\partial v }, \quad \dot{q} =v \in \Delta_{Q}(q), \quad  \dot{p} - \frac{\partial L}{\partial q}
\in \Delta^{\circ}_{Q}(q),
\end{equation*}
which are equivalent with the equations of motion for the Lagrange--Dirac dynamical system given in \eqref{LDAPeqn}.

\paragraph{The variational formula based on the generalized energy.}
Now let us define the \textit{generalized energy} on $TQ \oplus T^{\ast}Q$ by
\begin{equation}\label{GeneralizedEnergy}
E(q,\dot{q},p):=\left<p,\dot{q}\right>-L(q,\dot{q}).
\end{equation}
Then, employing the generalized energy, the Lagrange--d'Alembert--Pontryagin principle in \eqref{LagDALP} can be equivalently restated by
\begin{equation*}
\delta\int_{0}^{T}\left\{ \left\langle p(t),  \dot{q}(t) \right\rangle -E(q(t), v(t),p(t))\right\} \, dt
 =0
\end{equation*}
with respect to $ \delta q(t)\in \Delta _Q (q(t))$, together with the kinematic constraint $ \dot{q}(t) \in \Delta _Q (q(t))$; see \cite{YoMa2006b}.
\medskip

Setting $u(t)=(q(t),v(t)) \in TQ$ and $z(t)=(q(t),p(t)) \in T^{\ast}Q$, where $q(t)=\tau_{Q}(u(t))=\pi_{Q}(z(t)) \in Q$ is the base curve on $Q$, the criticality condition of the Lagrange--d'Alembert--Pontryagin principle can be intrinsically restated by,  for each $(u(t),z(t))=(q(t),v(t),p(t))\in TQ \oplus T^{\ast}Q$,
\begin{equation}\label{IntLagDAPPrin}
\delta\int_{0}^{T}\left\{ \Theta_{T^{\ast}Q}(z(t))(\dot{z}(t))-E(u(t),z(t)) \right\}\, dt=0,
\end{equation}
subject to the variational constraint $\delta{q}(t)=T\tau_{Q}(\delta{u})=T\pi_{Q}(\delta{z}(t))  \in \Delta_{Q}(q(t))$ with $\delta{q}(0)=\delta{q}(T)=0$, and also subject to the kinematic constraint $\dot{q}(t)=T\tau_{Q}(\dot{u}(t))=T\pi_{Q}(\dot{z}(t)) \in \Delta_{Q}(q(t))$.
\if0
More intrinsically, the criticality condition of the Lagrange--d'Alembert--Pontryagin principle may be described by using the projection $\rho_{T^{\ast}Q}: TQ \oplus T^{\ast}Q \to T^{\ast}Q; x=(q,v,p) \mapsto (q,p)$ as,  for each $x(t)=(q(t),v(t),p(t))\in TQ \oplus T^{\ast}Q$,
\begin{equation}\label{IntLagDAPPrin-1}
\delta\int_{0}^{T}\left\{ \Theta_{T^{\ast}Q}(\rho_{T^{\ast}Q}(x(t)))\cdot T\rho_{T^{\ast}Q}(\dot{x}(t))-E(x(t)) \right\}\, dt=0,
\end{equation}
subject to the variational constraint $\delta{x}(t)\in \Delta_{TQ \oplus T^{\ast}Q}(x(t))$ with $\rho_{Q}(x(0))=\rho_{Q}(x(T))=0$, and also subject to the kinematic constraint $\dot{x}(t) \in \Delta_{TQ \oplus T^{\ast}Q}(x(t))$,
where $\Delta_{TQ \oplus T^{\ast}Q}= (T\rho_{Q})^{-1}(\Delta_{Q})$ is the lifted distribution by the projection $\rho_{Q}: TQ \oplus T^{\ast}Q \to Q; x=(q,v,p) \mapsto q$.
\fi

\subsection{The discrete Lagrange--d'Alembert--Pontryagin principles}
Here, we  introduce the $(\pm)$ discretizations of the Lagrange--d'Alembert--Pontryagin principle as given in \eqref{LagDALP} or \eqref{IntLagDAPPrin}. We then demonstrate that the resulting $(\pm)$-discrete Lagrange--d'Alembert--Pontryagin equations are identical to the equations of motion for the $(\pm)$-discrete Lagrange--Dirac systems and, consequently, equivalent to the discrete Lagrange--d'Alembert equations in 
\cite{CoMa2001,McPe2006}, as presented in Theorem \ref{Thm:DiscLagDiracDynSys}. This implies that the resulting $(\pm)$-nonholonomic integrators preserve the $(\pm)$-induced discrete Dirac structures.

\paragraph{The discrete generalized energy.}
Recall from Def. \ref{Def:discSymOneForm} that the $(\pm)$-discrete canonical one-form $\Theta_{T^{\ast}Q}^{\pm}$ are given as
 by, for each $(z_{0},z_{1})=(q_{0},p_{0},q_{1},p_{1}) \in T^{\ast}Q \times T^{\ast}Q$,
\begin{equation*}
\begin{split}
\Theta^{d+}_{T^{\ast}Q}(z_{0}, z_1)&=\Theta_{T^{\ast}Q}(q_{1},p_{1})(\hat{q}^-_{1}, \hat{p}^+_1)=(p_{1}dq_{1})\left( \hat{q}^{-}_{1} \frac{\partial }{\partial q_{1}}+ \hat{p}^{+}_{0} \frac{\partial }{\partial p_{1}}\right)=\langle p_{1}, \hat{q}^{-}_{1}\rangle,\\
\Theta^{d-}_{T^{\ast}Q}(z_{0}, z_1)&=\Theta_{T^{\ast}Q}(q_{0},p_{0})(\hat{q}^+_{0}, \hat{p}^-_0)=(p_{0}dq_{0})\left( \hat{q}^{+}_{0} \frac{\partial }{\partial q_{0}}+ \hat{p}^{-}_{0} \frac{\partial }{\partial p_{0}}\right)=\langle p_{0}, \hat{q}^{+}_{0}\rangle.
\end{split}
\end{equation*}

\begin{definition}\rm
Associated with the generalized energy in \eqref{GeneralizedEnergy}, let us introduce the $(\pm)$-generalized energies $E^{\pm}_{d}: (Q \times Q) \oplus T^{\ast}Q \to \mathbb{R}$ as follows.

\begin{itemize}
\item[(i)] For a given discrete Lagrangian $L_d: Q \times Q \to \mathbb{R}$,
define the (+)-\textit{discrete generalized energy}  $E^+_{d}: (Q \times Q) \oplus T^{\ast}Q \to \mathbb{R}$ by using $(q_{k},v_{k}) \in Q \times Q$ and $(q_{k+1},p_{k+1})\in T^{\ast}Q$ as
\[
\begin{split}
E^+_{d}(q_{k},v_{k},p_{k+1})&=h\left<p_{k+1},\hat{v}^-_{k}\right>-L_d(q_{k} , v_{k})\\
&=\left<p_{k+1},v_k-q_k\right>-L_d( q_{k}, v_k).
\end{split}
\]
\item[(ii)] Define the (-)-\textit{discrete generalized energy}  $E^-_{d}: (Q \times Q) \oplus T^{\ast}Q \to \mathbb{R}$ by using $(v_{k+1}, q_{k+1}) \in Q \times Q$ and $(q_{k+1},p_{k+1})\in T^{\ast}Q$ as
\[
\begin{split}
E^-_{d}(v_{k+1},q_{k+1}, p_{k})&=h\left<p_{k}, \hat{v}^+_{k+1}\right>-L_d(v_{k+1}, q_{k+1})\\
&=\left<p_{k},q_{k+1}-v_{k+1}\right>-L_d(v_{k+1}, q_{k+1}).
\end{split}
\]

\end{itemize}
\end{definition}

\begin{remark}\rm
More intrinsically, the $(+)$-discrete generalized energy can be given, in coordinates $(q_{k},q_{k+1},p_{k+1}) \in (Q \times Q) \oplus T^{\ast}Q$, by
\begin{equation*}
\begin{split}
E^+_{d}(q_k, q_{k+1}, p_{k+1})&=h\left<(q_{k+1},p_{k+1}, p_{k+1}, 0),(q_k, p_k, q_{k+1}, p_{k+1})\right>_{d+}-L_d(q_{k} , q_{k+1})\\
&=h\left<(q_{k+1},p_{k+1}, p_{k+1}, 0),\Psi^+_{(T^{\ast}Q)_2}(q_k, p_k, q_{k+1}, p_{k+1})\right>-L_d(q_{k} , q_{k+1})\\
&=h\left<(q_{k+1},p_{k+1}, p_{k+1}, 0),(q_{k+1}, p_{k+1}, \hat{q}^-_{k+1}, \hat{p}^+_{k+1})\right>-L_d(q_{k} , q_{k+1})\\
&=h\left<(p_{k+1})_{q_{k+1}},\hat{q}^-_{k+1}\right>-L_d(q_{k} , q_{k+1})\\
&=h\left<(p_{k+1})_{q_{k+1}},\left(\frac{q_{k+1}-q_k}{h}\right)_{q_{k+1}}\right>-L_d(q_{k} , q_{k+1})\\
&=\left<p_{k+1}, q_{k+1}-q_k \right>-L_d(q_{k} , q_{k+1}).
\end{split}
\end{equation*} 
Then, by setting $v_k=q_{k+1}$, we get $E^+_{d}(q_k, v_{k}, p_{k+1})=\left<p_{k+1},v_k-q_k\right>-L_d( q_{k}, v_k)$. 
\medskip

Similarly, the $(-)$-discrete generalized energy can be given, in coordinates $(q_{k},q_{k+1},p_{k}) \in (Q \times Q) \oplus T^{\ast}Q$, by
\begin{equation*}
\begin{split}
E^-_{d}(q_k, q_{k+1}, p_{k})&=h\left<(q_{k},p_{k}, p_k, 0),(q_k, p_k, q_{k+1}, p_{k+1})\right>_{d-}-L_d(q_{k} , q_{k+1})\\
&=h\left<(q_{k},p_{k}, p_k, 0),\Psi^-_{(T^{\ast}Q)_1}(q_k, p_k, q_{k+1}, p_{k+1})\right>-L_d(q_{k} , q_{k+1})\\
&=h\left<(q_{k},p_{k}, p_k, 0),(q_{k}, p_{k}, \hat{q}^+_{k}, \hat{p}^-_{k})\right>-L_d(q_{k} , q_{k+1})\\
&=h\left<(p_{k})_{q_{k}}, \hat{q}^+_{k}\right>-L_d(q_{k} , q_{k+1})\\
&=h\left<(p_{k})_{q_{k}},\left(\frac{q_{k+1}-q_k}{h}\right)_{q_{k}}\right>-L_d(q_{k} , q_{k+1})\\
&=\left<p_{k}, q_{k+1}-q_k \right>-L_d(q_{k} , q_{k+1}).
\end{split}
\end{equation*} 
Then, by setting $v_{k+1}=q_{k}$, we get $E^+_{d}(v_{k+1}, q_{k+1}, p_{k})=\left<p_{k},q_{k+1}-v_{k+1}\right>-L_d(v_{k+1}, q_{k+1}).$

\end{remark}
%
\begin{definition}\rm
Consider a discrete path $(q_{d}, v_{d}, p_{d})$ on $(Q \times Q) \oplus T^{\ast}Q$, where $q_{d}\!=\!\{q_{k}\}_{k=0}^{N}$, \!
$v_{d}\!=\!\{v_{k}\}_{k=0}^{N}$\! and \!$p_{d}\!=\!\{p_{k}\}_{k=0}^{N}$ with $q_{k}\!=\!q_{d}(t_{k})$, $v_{k}\!=\!v_{d}(t_{k})$, and $p_{k}\!=\!p_{d}(t_{k})$.  
\medskip

\begin{itemize}
\item[(i)] Define the $(+)$-\textit{discrete action sum} $\tilde{\mathfrak{S}}^+_{d}: \mathcal{C}_d((Q \times Q) \oplus T^{\ast}Q) \to \mathbb{R}$ by
\begin{equation*}
\begin{split}
\tilde{\mathfrak{S}}^+_{d}(q_{d}, v_{d}, p_{d})
&=\sum_{k=0}^{N-1}
\left\{h\,\Theta^{d+}_{T^{\ast}Q}(q_k, p_k, q_{k+1}, p_{k+1})-E^+_{d}(q_{k},v_{k},p_{k+1})\right\}\\
&=\sum_{k=0}^{N-1}
\left\{
L_d( q_{k} , v_{k})+h\left< p_{k+1}, \hat{q}^{-}_{k+1}-\frac{v_{k}-q_{k}}{h}\right>
\right\}\\
&=\sum_{k=0}^{N-1}
\left\{
L_d( q_{k} , v_{k})+\left< p_{k+1}, q_{k+1}-v_{k}\right>
\right\}.\\
\end{split}
\end{equation*}

\item[(ii)] Define the $(-)$-\textit{discrete action sum} $\tilde{\mathfrak{S}}_{d}^{-}: \mathcal{C}_d((Q \times Q) \oplus T^{\ast}Q) \to \mathbb{R}$ by
\begin{equation*}
\begin{split}
\tilde{\mathfrak{S}}_{d}^{-}(q_{d}, v_{d}, p_{d})
&=\sum_{k=0}^{N-1}
\left\{h\,\Theta^{d-}_{T^{\ast}Q}(q_k, p_k, q_{k+1}, p_{k+1})-E^-_{d}(v_{k+1},q_{k+1}, p_{k})\right\}\\
&=\sum_{k=0}^{N-1}
\left\{
L_d(v_{k+1},q_{k+1})+h\left< p_{k}, \hat{q}^{+}_{k}-\frac{q_{k+1}-v_{k+1}}{h}\right>
\right\}\\
&=\sum_{k=0}^{N-1}
\left\{
L_d(v_{k+1},q_{k+1})+\left< p_{k}, v_{k+1}-q_{k}\right>
\right\}.
\end{split}
\end{equation*}
\end{itemize}
\end{definition}
\begin{framed}
\begin{theorem}\rm
The $(\pm)$-discretized Lagrange--d'Alembert--Pontryagin principle is given by the following statements.
\begin{itemize}
\item[(i)]
A discrete path $(q_{d}, v_{d}, p_{d})$ on $(Q \times Q) \oplus T^{\ast}Q$ joining $q_{0}$ and $q_{N}$ satisfies the $(+)$-discretized Lagrange--d'Alembert--Pontryagin equations
\begin{equation}\label{+disLagDAPeq}
\begin{split}
&p_{k+1}=\frac{\partial L_{d}}{\partial v_{k}}(q_k,v_k),\;p_{k}+ \frac{\partial L_{d}}{\partial q_{k}}(q_{k},v_{k}) \in \Delta_Q^{\circ}(q_{k}),\;v_k=q_{k+1},\;(q_k,v_{k})\in \Delta_Q^{d+},
\end{split}
\end{equation}
if and only if
\begin{equation*}
\begin{split}
\delta\tilde{\mathfrak{S}}_{d}^+(q_{d}, v_{d}, p_{d})&=\delta\sum_{k=0}^{N-1}
\left\{h\,\Theta^{d+}_{T^{\ast}Q}(q_k, p_k, q_{k+1}, p_{k+1})-E^+_{d}(q_{k}, v_{k},p_{k+1})\right\}\\
&=\delta\sum_{k=0}^{N-1}
\left\{
L_d( q_{k} , v_{k})+\left< p_{k+1}, q_{k+1}-v_{k}\right>
\right\}=0,
\end{split}
\end{equation*}
with respect to $\delta{q}_{k} \in \Delta_{Q}(q_{k})$, for all $\delta{v_k}$ and $\delta{p}_{k+1}$,  together with the discrete constraints $(q_k,v_{k})\in \Delta_{Q}^{d+}$.
\medskip

\quad From the $(+)$-discrete Lagrange--d'Alembert--Pontryagin equations in \eqref{+disLagDAPeq}, the $(+)$-discrete Lagrange--d'Alembert equations can be recovered as:
\[
D_{2}L_{d}(q_{k-1},q_{k})+ D_{1}L_{d}(q_{k},q_{k+1}) \in \Delta^{\circ}_{Q}(q_{k}),\quad 
(q_{k},q_{k+1}) \in \Delta_{Q}^{d+}.
\]
\item[(ii)]
A discrete path $(q_{d}, v_{d}, p_{d})$ on $(Q \times Q) \oplus T^{\ast}Q$ joining $q_{0}$ and $q_{N}$ satisfies the $(-)$-discretized Lagrange--d'Alembert--Pontryagin equations
\begin{equation}\label{-disLagDAPeq}
\begin{split}
&p_{k}=-\frac{\partial L_{d}}{\partial v_{k+1}}(v_{k+1},q_{k+1}),\;\; p_{k+1}- \frac{\partial L_{d}}{\partial q_{k+1}}(v_{k+1},q_{k+1}) \in \Delta^{\circ}_Q(q_{k+1}),\\\
&v_{k+1}=q_{k},\;\;(v_{k+1},q_{k+1})\in \Delta_Q^{d-},
\end{split}
\end{equation}
if and only if
\begin{equation*}
\begin{split}
\delta\tilde{\mathfrak{S}}_{d}^{-}(q_{d}, v_{d}, p_{d})&=\sum_{k=0}^{N-1}
\left\{h\,\Theta^{d-}_{T^{\ast}Q}(q_k, p_k, q_{k+1}, p_{k+1})-E^-_{d}(v_{k+1},q_{k+1}, p_{k})\right\}\\
&=\delta\sum_{k=0}^{N-1}
\left\{
L_d(v_{k+1},q_{k+1})+\left< p_{k}, v_{k+1}-q_{k}\right>
\right\}=0,
\end{split}
\end{equation*}
with respect to $\delta{q}_{k+1} \in \Delta_{Q}(q_{k+1})$, for all $\delta{v_{k+1}}$ and $\delta{p}_{k}$,  together with the discrete constraints $(v_{k+1}, q_{k+1})\in \Delta_{Q}^{d-}$.
\medskip

\quad From the $(-)$-discrete Lagrange--d'Alembert--Pontryagin equations in \eqref{-disLagDAPeq}, the $(-)$-discrete Lagrange--d'Alembert equations can be recovered as:
\[
D_{2}L_{d}(q_{k},q_{k+1})+D_{1}L_{d}(q_{k+1},q_{k+2}) \in \Delta^{\circ}_{Q}(q_{k+1}),\quad 
(q_{k},q_{k+1}) \in \Delta_{Q}^{d-}.
\]

\end{itemize}
\end{theorem}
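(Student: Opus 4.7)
The plan is to prove both halves by the standard discrete-variational strategy: compute $\delta\tilde{\mathfrak{S}}_d^{\pm}$ by the Leibniz rule, perform a discrete summation-by-parts (an index shift), and then read off the equations from the free variations $\delta v_k,\delta p_{k+1}$ (resp.\ $\delta v_{k+1},\delta p_k$) together with the constrained variation $\delta q_k\in\Delta_Q(q_k)$ and the imposed side constraint $(q_k,v_k)\in\Delta_Q^{d+}$ (resp.\ $(v_{k+1},q_{k+1})\in\Delta_Q^{d-}$). I will carry out (i) explicitly and note that (ii) is the mirror-image computation.

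For (i), expanding $\delta\tilde{\mathfrak{S}}_d^+$ produces four groups of terms, with coefficients $D_1 L_d(q_k,v_k)$, $D_2 L_d(q_k,v_k)-p_{k+1}$, $q_{k+1}-v_k$, and $p_{k+1}$, multiplying $\delta q_k$, $\delta v_k$, $\delta p_{k+1}$, and $\delta q_{k+1}$ respectively. The last group is folded into the first by the re-indexing
\begin{equation*}
\sum_{k=0}^{N-1}\langle p_{k+1},\delta q_{k+1}\rangle=\sum_{k=1}^{N}\langle p_k,\delta q_k\rangle,
\end{equation*}
with the $k=0$ and $k=N$ boundary contributions killed by the fixed-endpoint conditions $\delta q_0=\delta q_N=0$. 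Free variations $\delta v_k$ then force the Legendre transform $p_{k+1}=\partial L_d/\partial v_k$; free variations $\delta p_{k+1}$ force the second-order condition $v_k=q_{k+1}$; the constrained variations $\delta q_k\in\Delta_Q(q_k)$ for $1\le k\le N-1$ force $p_k+\partial L_d/\partial q_k\in\Delta_Q^{\circ}(q_k)$. Combined with the side condition $(q_k,v_k)\in\Delta_Q^{d+}$, these reassemble exactly the claimed $(+)$-equations. Each implication runs both ways, giving the ``if and only if''.

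To recover the $(+)$-discrete Lagrange--d'Alembert equations I would substitute $v_k=q_{k+1}$ into the Legendre transform to obtain $p_{k+1}=D_2 L_d(q_k,q_{k+1})$, shift the index to write $p_k=D_2 L_d(q_{k-1},q_k)$, and feed this into $p_k+D_1 L_d(q_k,v_k)\in\Delta_Q^{\circ}(q_k)$, yielding $D_2 L_d(q_{k-1},q_k)+D_1 L_d(q_k,q_{k+1})\in\Delta_Q^{\circ}(q_k)$; the kinematic constraint becomes $(q_k,q_{k+1})\in\Delta_Q^{d+}$.

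Part (ii) is the mirror computation: the $(-)$-action sum is obtained from the $(+)$-one by the swap $(q_k,v_k)\leftrightarrow(v_{k+1},q_{k+1})$ with $p_{k+1}\leftrightarrow p_k$, so the same variational argument with the reverse index shift $\sum\langle p_k,\delta q_k\rangle\to\sum\langle p_{k+1},\delta q_{k+1}\rangle$ produces the $(-)$-equations, and the Lagrange--d'Alembert recovery follows by substituting $v_{k+1}=q_k$ and shifting the index (the resulting overall sign is absorbed since $\Delta_Q^{\circ}$ is a linear subspace). The main obstacle throughout is purely bookkeeping: one must keep careful track of the index ranges so that only interior $q_k$ are varied and the endpoint terms vanish unambiguously, matching the output with equations \eqref{plus_DiscLagDiracSys} and \eqref{minus_DiscLagDiracSys} of Theorem \ref{Thm:DiscLagDiracDynSys}. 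Once this is pinned down, no analytic subtlety remains.
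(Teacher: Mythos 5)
Your proposal is correct and follows essentially the same route as the paper's own proof: expand the variation by the Leibniz rule, perform the discrete summation by parts $\sum_{k=0}^{N-1}\langle p_{k+1},\delta q_{k+1}\rangle=\sum_{k=1}^{N}\langle p_k,\delta q_k\rangle$, kill the boundary terms with $\delta q_0=\delta q_N=0$, read off the equations from the free variations $\delta v_k,\delta p_{k+1}$ and the constrained $\delta q_k\in\Delta_Q(q_k)$, and then substitute $v_k=q_{k+1}$ (resp.\ $v_{k+1}=q_k$) with an index shift to recover the discrete Lagrange--d'Alembert equations. Your remark that the overall sign in the $(-)$ case is absorbed because $\Delta_Q^{\circ}$ is a linear subspace is exactly the point the paper uses implicitly, so no gap remains.
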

\end{framed}
\begin{proof}
\begin{itemize}
\item[(i)]
By direct computations, the criticality condition reads
\begin{equation*}
\begin{split}
&\delta\tilde{\mathfrak{S}}^+_{d}(q_{d}, v_{d}, p_{d})=
\delta\sum_{k=0}^{N-1}
\left\{
L_d( q_{k} , v_{k})+\bigl< p_{k+1}, q_{k+1}-v_{k}\bigr>
\right\}\\
&=\sum_{k=0}^{N-1}
\biggl\{
\left<\frac{\partial L_d}{\partial q_{k}}( q_{k} , v_{k}), \delta q_{k}\right>+\left<\frac{\partial L_d}{\partial v_{k}}( q_{k} , v_{k}),\delta v_{k}\right>\biggr.\\
&\biggl.\hspace{5cm}+\bigl< \delta p_{k+1}, q_{k+1}-v_{k}\bigr>+\bigl< p_{k+1}, \delta q_{k+1}- \delta v_{k}\bigr>
\biggr\}\\
&=\sum_{k=0}^{N-1}\biggl\{\left<-p_{k+1}+\frac{\partial L_d}{\partial v_{k}}( q_{k} , v_{k}),\delta v_{k}\right>+\bigl< \delta p_{k+1}, q_{k+1}-v_{k}\bigr>\biggr\}\\
&\hspace{1cm}+\sum_{k=1}^{N-1}\left<p_{k}+\frac{\partial L_d}{\partial q_{k}}( q_{k} , v_{k}), \delta q_{k}\right>
+\left<p_{N},  \delta q_{N}\right>+\left<\frac{\partial L_d}{\partial q_{0}}( q_{0} , v_{0}), \delta q_{0}\right>=0,
\end{split}
\end{equation*}
for all $\delta{q}_{k} \in \Delta_{Q}(q_{k})$, for all $\delta{v_k}$ and $\delta{p}_{k+1}$, and  with the discrete constraint $(q_k,v_{k})\in \Delta_Q^{d+}$. Imposing the fixed endpoint conditions $\delta q_{0}=\delta q_{N}=0$, we get $(+)$-discretized Lagrange--d'Alembert--Pontryagin equations:
\[
p_{k+1}=\frac{\partial L_{d}}{\partial v_{k}}(q_k,v_k),\;\; p_{k}+ \frac{\partial L_{d}}{\partial q_{k}}(q_{k},v_{k}) \in \Delta_Q^{\circ}(q_{k}),\;\;v_k=q_{k+1},\;\
(q_k,v_{k})\in \Delta_Q^{d+}.
\]
Thus, we get the $(+)$-discrete Lagrange--d'Alembert equations can be recovered as:
\[
D_{2}L_{d}(q_{k-1},q_{k})+ D_{1}L_{d}(q_{k},q_{k+1}) \in \Delta^{\circ}_{Q}(q_{k}),\quad 
(q_{k},q_{k+1}) \in \Delta_{Q}^{d+}.
\]

\item[(ii)]
Similarly, 
the criticality condition reads
\begin{equation*}
\begin{split}
&\delta\tilde{\mathfrak{S}}^-_{d}(q_{d}, v_{d}, p_{d})=
\delta\sum_{k=0}^{N-1}
\left\{
L_d(v_{k+1},q_{k+1})+\left< p_{k}, v_{k+1}-q_{k}\right>
\right\}\\
&=\sum_{k=0}^{N-1}
\biggl\{
\left<\frac{\partial L_d}{\partial v_{k+1}}(v_{k+1},q_{k+1}), \delta v_{k+1}\right>+\left<\frac{\partial L_d}{\partial q_{k+1}}( v_{k+1} , q_{k+1}),\delta q_{k+1}\right>\biggr.\\
&\biggl.\hspace{7cm}+\bigl< \delta p_{k}, v_{k+1}-q_{k}\bigr>+\bigl< p_{k}, \delta v_{k+1}- \delta q_{k}\bigr>
\biggr\}\\
&=\sum_{k=0}^{N-1}\biggl\{\left<p_{k}-\frac{\partial L_d}{\partial v_{k+1}}(v_{k+1} , q_{k+1}),\delta v_{k+1}\right>+\bigl< \delta p_{k}, v_{k+1}-q_{k}\bigr>\biggr\}\\
&\hspace{0.5cm}
+\sum_{k=0}^{N-2}
\left<-p_{k+1}+\frac{\partial L_d}{\partial q_{k+1}}(v_{k+1} , q_{k+1}), \delta q_{k+1}\right>
-\left<p_{0},  \delta q_{0}\right>+\left<\frac{\partial L_d}{\partial q_{N}}(v_{N} , q_{N}), \delta q_{N}\right>=0,
\end{split}
\end{equation*}
for all $\delta{q}_{k+1} \in \Delta_{Q}(q_{k+1})$, for all $\delta{v_{k+1}}$ and $\delta{p}_{k}$, and  with the discrete constraint $(q_k,v_{k})\in \Delta_Q^{d-}$. Imposing the fixed endpoint conditions $\delta q_{0}=\delta q_{N}=0$, we get the $-$discretized Lagrange--d'Alembert--Pontryagin equations:
\[
\begin{split}
&p_{k}=-\frac{\partial L_{d}}{\partial v_{k+1}}(v_{k+1},q_{k+1}),\;\; p_{k+1}- \frac{\partial L_{d}}{\partial q_{k+1}}(v_{k+1},q_{k+1}) \in \Delta^{\circ}_Q(q_{k+1}),\\\
&v_{k+1}=q_{k},\;\;(v_{k+1},q_{k+1})\in \Delta_Q^{d-}.
\end{split}
\]
Thus we get the $(-)$-discrete Lagrange--d'Alembert equations can be recovered as:
\[
D_{2}L_{d}(q_{k},q_{k+1})+D_{1}L_{d}(q_{k+1},q_{k+2}) \in \Delta^{\circ}_{Q}(q_{k+1}),\quad 
(q_{k},q_{k+1}) \in \Delta_{Q}^{d-}.
\]

\end{itemize}

\end{proof}

\begin{remark}\rm
    In particular, for the case where the distribution is locally given by \eqref{ConstraintDistribution}, namely,
\begin{equation*}
\Delta_Q(q)=\left\{ \dot{q} \in T_{q}Q\,\mid \left< \omega^{r}(q),\dot{q}\right>=0,\,r=1,...,m <n\right\},
\end{equation*}
where $\omega^{r}=\omega_{i}^{r}(q)dq^{i},\; r=1,...,m<n$ are some given $m$ independent one-forms on $Q$,
the $(+)$-discrete Lagrange--d'Alembert--Pontryagin equations can be rewritten by using Lagrange multipliers as  
\begin{equation*}\label{plus_DiscLagDiracSysv1}
\begin{split}
&p_{k+1}=\frac{\partial L_{d}}{\partial v_{k}}(q_{k},v_{k}),\;\; p_{k}+ \frac{\partial L_{d}}{\partial q_{k}}(q_{k},v_{k}) =(\mu_r)_k\omega^r(q_k),\quad q_{k+1}=v_{k},\;\
\left\langle \omega^r(v_k),\frac{v_k-q_k}{h}\right\rangle=0.
\end{split}
\end{equation*}
Hence, the corresponding $(+)$-discrete Lagrangian--d'Alembert equations are
\[
D_{2}L_{d}(q_{k-1},q_{k}) + D_{1}L_{d}(q_{k},q_{k+1}) =(\mu_r)_k\omega^r(q_k),\quad 
\left\langle \omega^r(q_{k+1}),\frac{q_{k+1}-q_k}{h}\right\rangle=0.
\]

The $(-)$-discrete Lagrange--d'Alembert--Pontryagin equations can be rewritten using Lagrange multiplies as  
\begin{equation*}\label{minus_DiscLagDiracSysv1}
\begin{split}
&p_{k}=-\frac{\partial L_{d}}{\partial v_{k+1}}(v_{k+1},q_{k+1}),\;\; p_{k}- \frac{\partial L_{d}}{\partial q_{k}}(v_{k},q_{k}) =(\mu_r)_k\omega^r(q_k),\\
&q_k=v_{k+1},\;\
\left\langle \omega^r(v_{k+1}),\frac{q_{k+1}-v_{k+1}}{h}\right\rangle=0,
\end{split}
\end{equation*}
Hence, the corresponding $(-)$-discrete Lagrangian--d'Alembert equations are
\[
D_{2}L_{d}(q_{k-1},q_{k}) + D_{1}L_{d}(q_{k},q_{k+1}) =-(\mu_r)_k\omega^r(q_k),\quad 
\left\langle \omega^r(q_{k}),\frac{q_{k+1}-q_k}{h}\right\rangle=0.
\]
\end{remark}

\section{Examples of the discrete Lagrange--Dirac dynamical systems}\label{exam}
In this section, we illustrate how nonholonomic mechanical systems can be described by the $(\pm)$-discrete Lagrange--Dirac dynamical systems by two examples, i.e., the vertical rolling disk on a plane and the classical Heisenberg system; as to these examples, refer to \cite{Bl2003}.

\subsection{The vertical rolling disk on the plane}
\paragraph{Continuous setting.} In this example, we consider a homogenous disk rolling without slipping on a horizontal plane or tilting away from the vertical. Let $S^1$ denote the unit circle in the plane; it is parameterized by an angular variable. The configuration space of the rolling disk is then $Q=\mathbb{R}^2\times S^1\times S^1$, whose arbitrary point $q$ is denoted by the local coordinates $(x,y, \theta,\phi)$. Here $(x,y)$ indicates the position of the contact point on the $xy$-plane, $\theta$ the rotation angle of the disk, and $\phi$ the orientation of the disk (Fig. \ref{fig:rollingdisk}). The kinetic energy of the rolling disk is given by
\begin{equation*}
K(q,\dot{q})=\frac{1}{2}m\left(\dot{x}^2+\dot{y}^2\right)+\frac{1}{2}I\dot{\theta}^2+\frac{1}{2}J\dot{\phi}^2,
\end{equation*}
where $m$ is the mass of the disk, $I$ is the moment of inertia of the disk about the axis perpendicular to the plan of the disk, and $J$ is the moment of inertia about an axis in the plan of the disk. As illustrated in \cite{CoMa2001, Co2002}, we add an artificial potential $V=-10 \sin \theta$ in order to force the system. Then the Lagrangian of the rolling disk to be
\begin{equation*}
L(q,\dot{q})=K(q,\dot{q})-V(q)=\frac{1}{2}m\left(\dot{x}^2+\dot{y}^2\right)+\frac{1}{2}I\dot{\theta}^2+\frac{1}{2}J\dot{\phi}^2-10 \sin \theta.
\end{equation*}

As the point fixed on the rim of the disk has zero velocity at the point of contact with the $xy$-plane, we have the following nonholonomic constraints of rolling without slipping
\begin{equation}\label{rdc}
\begin{aligned}
&\dot{x}=R (\cos \phi) \dot{\theta},\\
&\dot{y}=R(\sin \phi) \dot{\theta},
\end{aligned}
\end{equation}
where $R$ is the radius of the disk.

\begin{figure}[htbp]
  \begin{center}
    \includegraphics[scale=0.25]{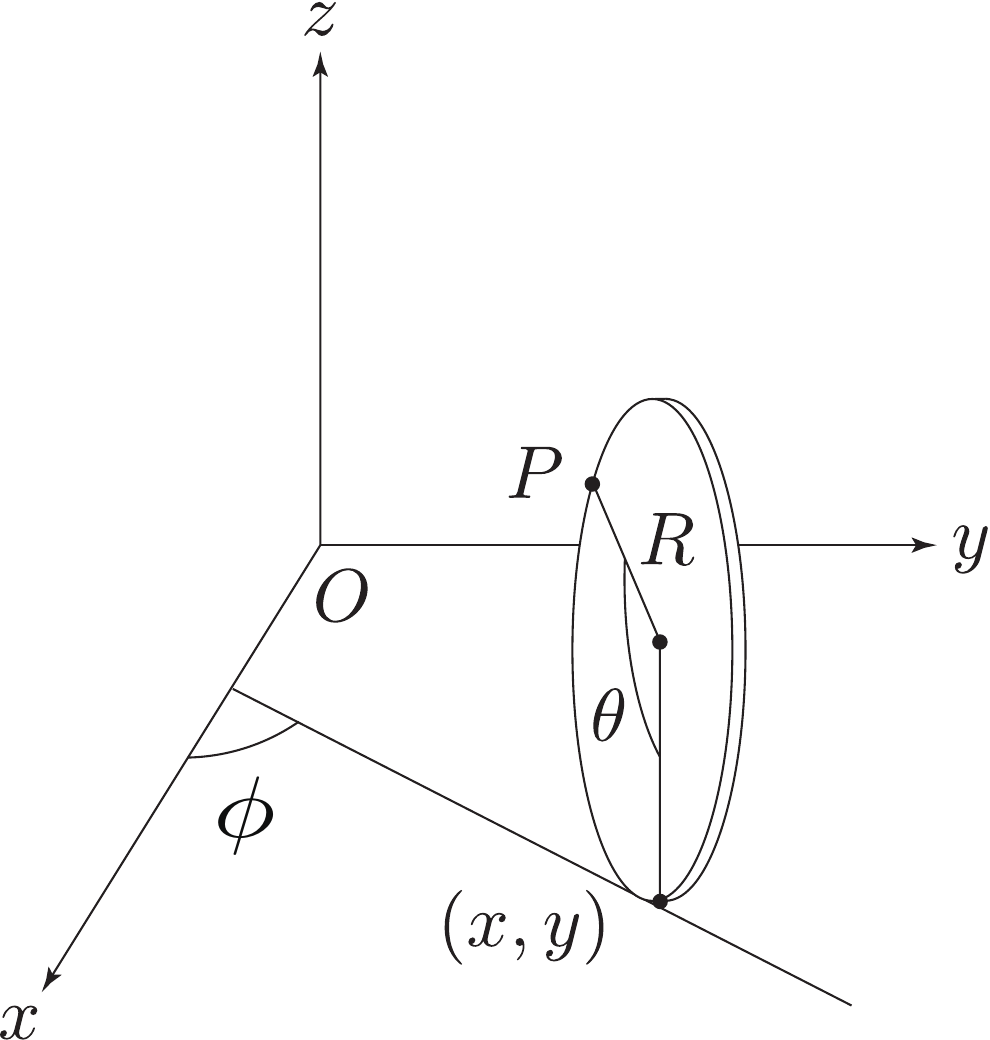}
  \end{center}
  \caption{A rolling disk on the plane.}
  \label{fig:rollingdisk}
\end{figure}

 Noting $q^{1}=x, q^{2}=y, q^{3}=\theta$, and $q^{4}=\phi$, the nonholonomic constraints are given by the distribution 
\[ 
\Delta_Q(q)=\left\{ (q,\dot{q}) \mid \left< \omega^{r}(q), \dot{q} \right>=0, \;\; r=1, 2 \right\},
\]
where $\omega^r(q)=\omega^r_i(q)dq^i,\;r=1,2$ are the differential forms on $Q$ given by
\begin{equation*}
\omega^1(q)=dx-R(\cos\phi)d\theta,\quad\omega^2(q)=dy-R(\sin\phi)d\theta,
\end{equation*}
and where  the non-zero coefficients are
\begin{equation*}
\omega_1^1(q)=1,\quad\omega_3^1(q)=-R\cos\phi,\quad\omega_2^2(q)=1,\quad\omega_3^2(q)=-R\sin\phi.
\end{equation*}
Using Lagrange multipliers $\mu_{r},\;r=1,2$, the equations of motion of the continuous Lagrange--Dirac system given in \eqref{LDAPeqn} are
\begin{equation*}
\begin{split}
&p_x=mv_{x},\quad p_y=m v_{y},\quad p_{\theta}=I v_{\theta},\quad p_{\phi}=Jv_{\phi},\\ 
&\dot{x}=v_{x},\quad \dot{y}=v_{y},\quad \dot{\theta}=v_{\theta},\quad \dot{\phi}=v_{\phi},\\
&\dot{p}_{x}=\mu_1\omega_1^1,\quad \dot{p}_{y}=\mu_2\omega_2^2,\quad \dot{p}_{\theta}+10 \cos \theta=\mu_1\omega_3^1+\mu_2\omega_3^2,\quad \dot p_{\phi}=0,
\end{split}
\end{equation*}
together with the nonholonomic constraints $\omega_i^r(q)\dot{q}^i=0$ in \eqref{rdc}.

\paragraph{The (+)-discrete Lagrange--Dirac dynamical system.}
For the Lagrangian
\begin{equation*}
L(q,v)=\frac{1}{2}m\left(v_{x}^2+v_{y}^2\right)+\frac{1}{2}Iv_{\theta}^2+\frac{1}{2}Jv_{\phi}^2-10 \sin \theta_{k},
\end{equation*}
the discrete Lagrangian is given in local coordinates $(q_{k},v_{k}) \in Q \times Q$ by
\[
\begin{split}
L_{d}(q_{k},v_{k})&=h L\left(q_{k}, \frac{v_{k}-q_k}{h}\right)\\
&=h\left[\frac{1}{2}m\left\{\left(\frac{(v_{x})_{k}-x_k}{h}\right)^{2}+
\left(\frac{(v_{y})_{k}-y_k}{h}\right)^{2}\right\}+
\frac{1}{2}I\left(\frac{(v_{\theta})_{k}-\theta_k}{h}\right)^{2} \right.\\
&\hspace{6cm}\left.+\frac{1}{2}J\left(\frac{(v_{\phi})_{k}-\phi_k}{h}\right)^{2}
-10 \sin \theta_{k}
\right],
\end{split}
\]
where $h=t_{k+1}-t_{k} \in \mathbb{R}^{+}$ is a constant time step.
Recall from \eqref{plus_DiscLagDiracSys} that the equations of motion for the (+)-discrete Lagrange--Dirac dynamical system are
\begin{equation*}
\begin{split}
&p_{k+1}=\frac{\partial L_{d}}{\partial v_{k}}(q_{k},v_{k}),\;\; p_{k}+ \frac{\partial L_{d}}{\partial q_{k}}(q_{k},v_{k}) \in \Delta^{\circ}_Q(q_{k}),\;\;
q_{k+1}=v_{k},\;\
(q_{k},v_{k})\in \Delta_Q^{d+},
\end{split}
\end{equation*}
and then we get
\[
\begin{split}
&(p_{x})_{k+1}=\frac{m}{h}\left\{(v_{x})_{k}-x_{k}\right\},\;\; (p_{y})_{k+1}=\frac{m}{h}\left\{(v_{y})_{k}-y_{k}\right\},\;\; (p_{\theta})_{k+1}=\frac{I}{h}\left\{(v_{\theta})_{k}-\theta_{k}\right\},\\
&(p_{\phi})_{k+1}=\frac{J}{h}\left\{(v_{\phi})_{k}-\phi_{k}\right\},\;\;(v_{x})_{k}=x_{k+1},\;\; (v_{y})_{k}=y_{k+1},\;\; (v_{\theta})_{k}=\theta_{k+1},\;\; (v_{\phi})_{k}=\phi_{k+1},\\
&(p_{x})_{k}-\frac{m}{h}\left\{(v_{x})_{k}-x_{k}\right\}= (\mu_{1})_{k}\,\omega^{1}_{1}(q_{k}),\;\;
(p_{y})_{k}-\frac{m}{h}\left\{(v_{y})_{k}-y_{k}\right\}= (\mu_{1})_{k}\,\omega^{2}_{2}(q_{k}),\\
&(p_{\theta})_{k}-\frac{I}{h}\left\{(v_{\theta})_{k}-\theta_{k}\right\}-10 h \cos \theta_{k}= (\mu_1)_{k}\,\omega_3^1(q_k)+(\mu_2)_{k}\,\omega_3^2(q_k),\\
&(p_{\phi})_{k}-\frac{J}{h}\left\{(v_{\phi})_{k}-\phi_{k}\right\}= 0,
\end{split}
\]
together with $(q_{k},v_{k})\in \Delta_Q^{d+}$, namely, 
\[
\begin{split}
\frac{(v_{x})_{k}-x_k}{h}-R(\cos\phi_{k+1})\left(\frac{(v_{\theta})_{k}-\theta_k}{h}\right)=0,\;\; 
\frac{(v_{y})_{k}-y_k}{h}-R(\sin\phi_{k+1})\left(\frac{(v_{\theta})_{k}-\theta_k}{h}\right)=0.
\end{split}
\]
Therefore, the $(+)$-discrete Lagrange--Dirac dynamical system \eqref{minus_DiscLagDiracSys} is simplified by eliminating 
the variables $v_k$ and $p_k$, and becomes 
\begin{equation*}
\begin{aligned}
&\frac{m}{h}(2x_k-x_{k-1}-x_{k+1})=(\mu_1)_{k},\quad \frac{m}{h}(2y_k-y_{k-1}-y_{k+1})=(\mu_2)_{k},\\
& \frac{I}{h}(2\theta_k-\theta_{k-1}-\theta_{k+1})-10 h \cos \theta_{k}=-(\mu_1)_{k} R \cos \phi_k-(\mu_2)_{k} R \sin \phi_k ,\\
& \frac{J}{h}(2\phi_k-\phi_{k-1}-\phi_{k+1})=0,
\end{aligned}
\end{equation*}
subject to constraints 
\begin{equation*}
\begin{aligned}
&\frac{x_{k+1}-x_k}{h}-R\cos\phi_{k+1} \frac{\theta_{k+1}-\theta_k}{h}=0,\\
& \frac{y_{k+1}-y_k}{h}-R\sin\phi_{k+1} \frac{\theta_{k+1}-\theta_k}{h}=0.
\end{aligned}
\end{equation*}

\paragraph{The ($-$)-discrete Lagrange--Dirac dynamical system.}
The discrete Lagrangian is given in local coordinates $(v_{k+1},q_{k+1}) \in Q \times Q$ by
\[
\begin{split}
L_{d}(v_{k+1},q_{k+1})&=h L\left(v_{k+1},\frac{q_{k+1}-v_{k+1}}{h}\right)\\
&=h\left[\frac{1}{2}m\left\{\left(\frac{x_{k+1}-(v_{x})_{k+1}}{h}\right)^{2}+\left(\frac{y_{k+1}-(v_{y})_{k+1}}{h}\right)^{2}\right\}\right.\\
&\left.\hspace{1cm}
+\frac{1}{2}I\left(\frac{\theta_{k+1}-(v_{\theta})_{k+1}}{h}\right)^{2}
+\frac{1}{2}J\left(\frac{\phi_{k+1}-(v_{\phi})_{k+1}}{h}\right)^{2}
-10 \sin (v_{\theta})_{k+1}
\right].
\end{split}
\]
Recall from \eqref{minus_DiscLagDiracSys} that the equations of motion for the $(-)$-discrete Lagrange--Dirac dynamical system are
\begin{equation*}
\begin{split}
&p_{k}=-\frac{\partial L_{d}}{\partial v_{k+1}}(v_{k+1},q_{k+1}),\;\; p_{k}- \frac{\partial L_{d}}{\partial q_{k}}(v_{k},q_{k}) \in \Delta^{\circ}_Q(q_{k}),\;\;q_k=v_{k+1},\;\
(v_{k+1},q_{k+1})\in \Delta_Q^{d-},
\end{split}
\end{equation*}
and then we get
\[
\begin{split}
&(p_{x})_{k}=\frac{m}{h}\left\{x_{k+1}-(v_{x})_{k+1}\right\},\;\; (p_{y})_{k}=\frac{m}{h}\left\{y_{k+1}-(v_{y})_{k+1}\right\},\\ 
&(p_{\theta})_{k}=\frac{I}{h}\left\{\theta_{k+1}-(v_{\theta})_{k+1}\right\} +10 \cos (v_{\theta})_{k+1},\\
&(p_{\phi})_{k}=\frac{J}{h}\left\{\phi_{k+1}-(v_{\phi})_{k+1}\right\},\;\;(v_{x})_{k+1}=x_{k},\;\; (v_{y})_{k+1}=y_{k},\;\; (v_{\theta})_{k+1}=\theta_{k},\;\; (v_{\phi})_{k+1}=\phi_{k},\\
&(p_{x})_{k}-\frac{m}{h}\left\{x_{k}-(v_{x})_{k}\right\}= (\mu_{1})_{k}\,\omega^{1}_{1}(q_{k}),\;\;
(p_{y})_{k}-\frac{m}{h}\left\{y_{k}-(v_{y})_{k}\right\}= (\mu_{1})_{k}\,\omega^{2}_{2}(q_{k}),\\
&(p_{\theta})_{k}-\frac{I}{h}\left\{\theta_{k}-(v_{\theta})_{k}\right\}= (\mu_1)_{k}\,\omega_3^1(q_k)+(\mu_2)_{k}\,\omega_3^2(q_k),\;\;
(p_{\phi})_{k}-\frac{J}{h}\left\{\phi_{k}-(v_{\phi})_{k}\right\}= 0,
\end{split}
\]
together with $(v_{k+1},q_{k+1})\in \Delta_Q^{d-}$, namely, 
\[
\begin{split}
\frac{x_{k+1}-(v_{x})_{k+1}}{h}-R\cos\phi_{k}\frac{\theta_{k+1}-(v_{\theta})_{k+1}}{h}=0,\;\; 
\frac{y_{k+1}-(v_{y})_{k+1}}{h}-R\sin\phi_{k}\frac{\theta_{k+1}-(v_{\theta})_{k+1}}{h}=0.
\end{split}
\]
Therefore, the $(-)$-discrete Lagrange--Dirac dynamical system \eqref{minus_DiscLagDiracSys} is simplified by eliminating 
the variables $v_k$ and $p_k$, and becomes 
\begin{equation*}
\begin{aligned}
&\frac{m}{h}(x_{k+1}-2x_k+x_{k-1})=(\mu_1)_{k},\quad \frac{m}{h}(y_{k+1}-2y_k+y_{k-1})=(\mu_2)_{k},\\
& \frac{I}{h}(\theta_{k+1}-2\theta_k+\theta_{k-1})+h 10 \cos \theta_{k}=-(\mu_1)_{k}\,R\cos\phi_k-(\mu_2)_{k}\,R\sin\phi_k,\\
& \frac{J}{h}(\phi_{k+1}-2\phi_k+\phi_{k-1})=0,
\end{aligned}
\end{equation*}
subject to the constraints 
\begin{equation*}
\begin{aligned}
\frac{x_{k+1}-x_k}{h}-R(\cos\phi_k)\left(\frac{\theta_{k+1}-\theta_k}{h}\right)=0,\;
\frac{y_{k+1}-y_k}{h}-R(\sin\phi_k)\left(\frac{\theta_{k+1}-\theta_k}{h}\right)=0.
\end{aligned}
\end{equation*}

\paragraph{Numerical tests.}
In the following numerical results, the step is set to be $h=0.001$ while the time interval is chosen as $[0,50]$.
We use the following parameters $m=1.0$, $r=1.0$, $I=0.25$, $J=0.5$.  The initial conditions are set to be
\begin{equation*}
 (x_{0},y_{0},\phi_{0}, \theta_{0})=\left(0.0,0.0,\frac{\pi}{3},0\right),
\end{equation*}
and $(x_{1},y_{1},\phi_{1}, \theta_{1})$ are determined such that $\theta_1 =\theta_0 + h (v_{\theta})_{0}=0.01$, 
$\phi_1 =\phi_0 + h  (v_{\phi})_{0}=1.0481976$, $x_1 = x_0 + r  (\theta_k-\theta_{k+1}) \cos\phi_0=0.005$ and 
$y_1 = z_0 + r  (\theta_k-\theta_{k+1}) \sin\phi_0=0.0086603$.

We illustrate the trajectory of the vertical coin that is obtained by the $(+)$-discrete Lagrange--Dirac dynamical system in Fig. \ref{fig:vcoin_trajectory+} and also show the energy behavior of $E^+_d$ in 
Fig. \ref{fig:vcoin_energy+} and we also show the discrete path of the position $x$ and the momentum $p_{\phi}$ 
respectively in Fig. \ref{fig:vcoin_xt+} and Fig.  \ref{fig:vcoin_pphit+}.

\begin{figure}[htbp]
\centering
\;\;
\begin{minipage}{7cm}
\includegraphics[width=7cm]{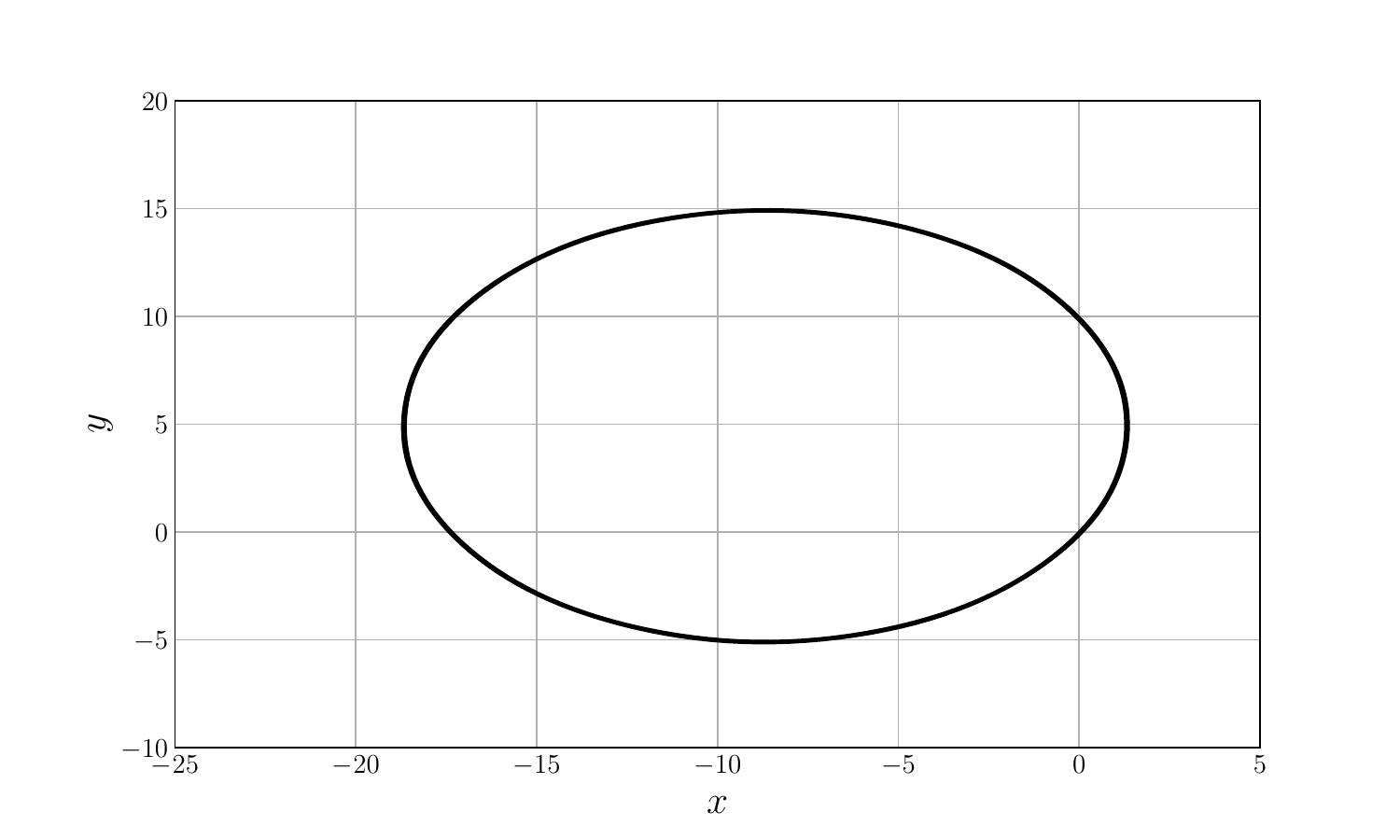}
\caption{Trajectory $(x(t), y(t))$ on a plane.}
\label{fig:vcoin_trajectory+}
\end{minipage}
\begin{minipage}{7cm}
\includegraphics[width=7cm]{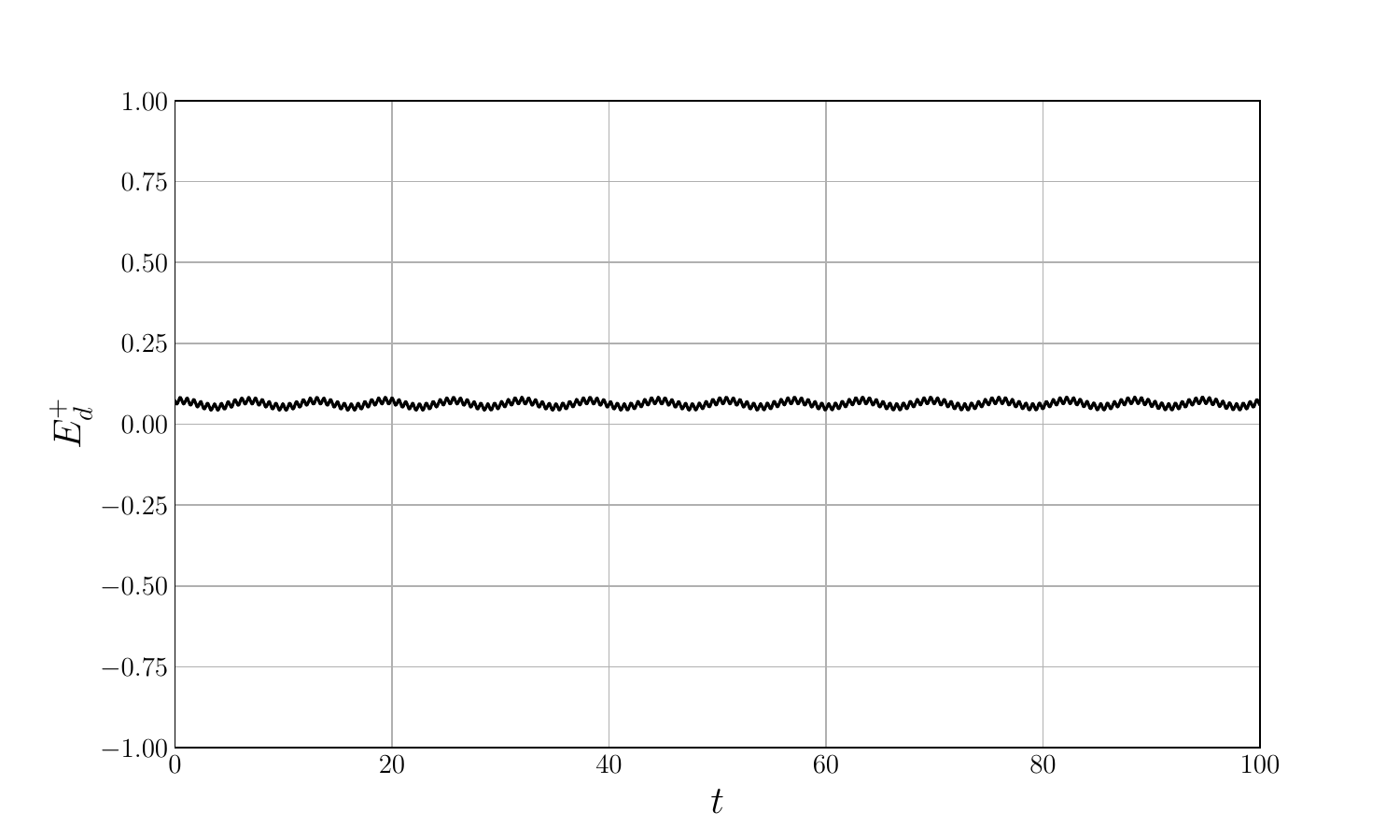}
\caption{Energy behavior $E_d^{+}(t)$.}
\label{fig:vcoin_energy+}
\end{minipage}
\end{figure}
\begin{figure}[htbp]
\centering
\;\;
\parbox{7cm}{
\includegraphics[width=7cm]{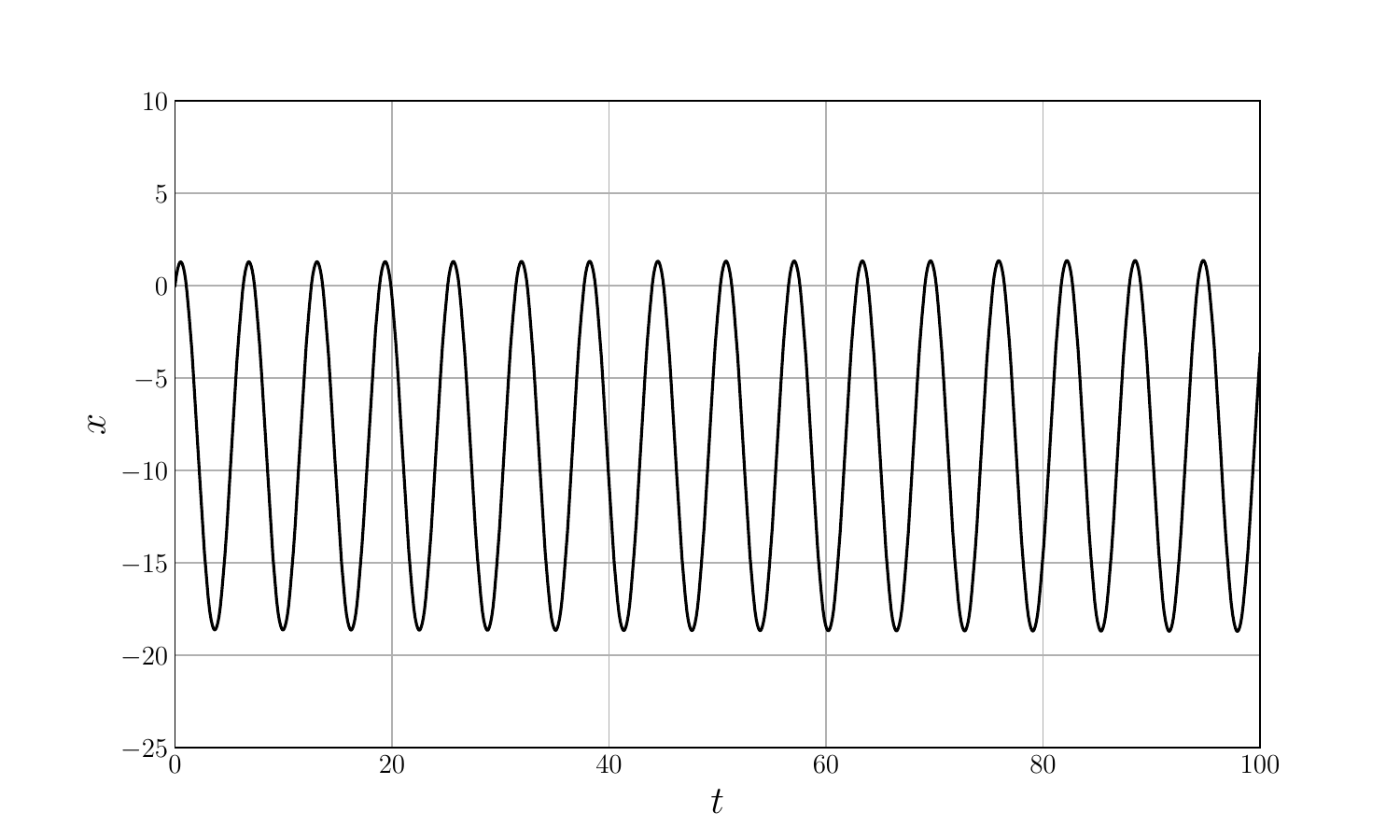}
\caption{Discrete path $x(t)$.}
\label{fig:vcoin_xt+}
}
\begin{minipage}{7cm}
\includegraphics[width=7cm]{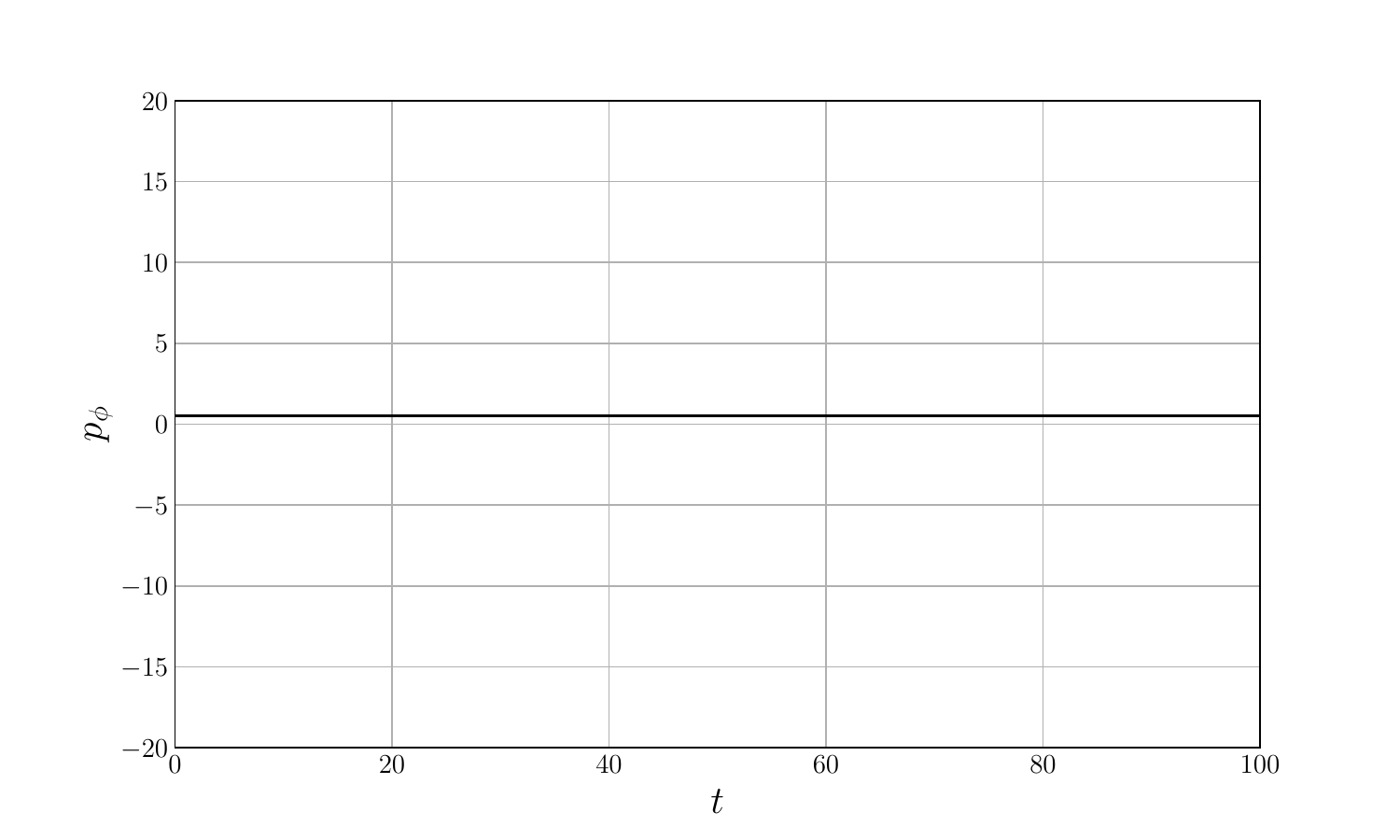}
\caption{Momentum $p_{\phi}(t)$.}
\label{fig:vcoin_pphit+}
\end{minipage}
\end{figure}

We also illustrate the trajectory of the vertical coin that is obtained by the $(-)$-discrete Lagrange--Dirac dynamical system in Fig. \ref{fig:vcoin_trajectory-}, and  show the energy behavior of $E^-_d$ in Fig. \ref{fig:vcoin_energy-} and the discrete paths of the position $x$ and the momentum $p_{\phi}$ 
respectively in Fig. \ref{fig:vcoin_xt-} and Fig. \ref{fig:vcoin_pphit-}. 

\begin{figure}[htbp]
\centering
\;\;
\begin{minipage}{7cm}
\includegraphics[width=7cm]{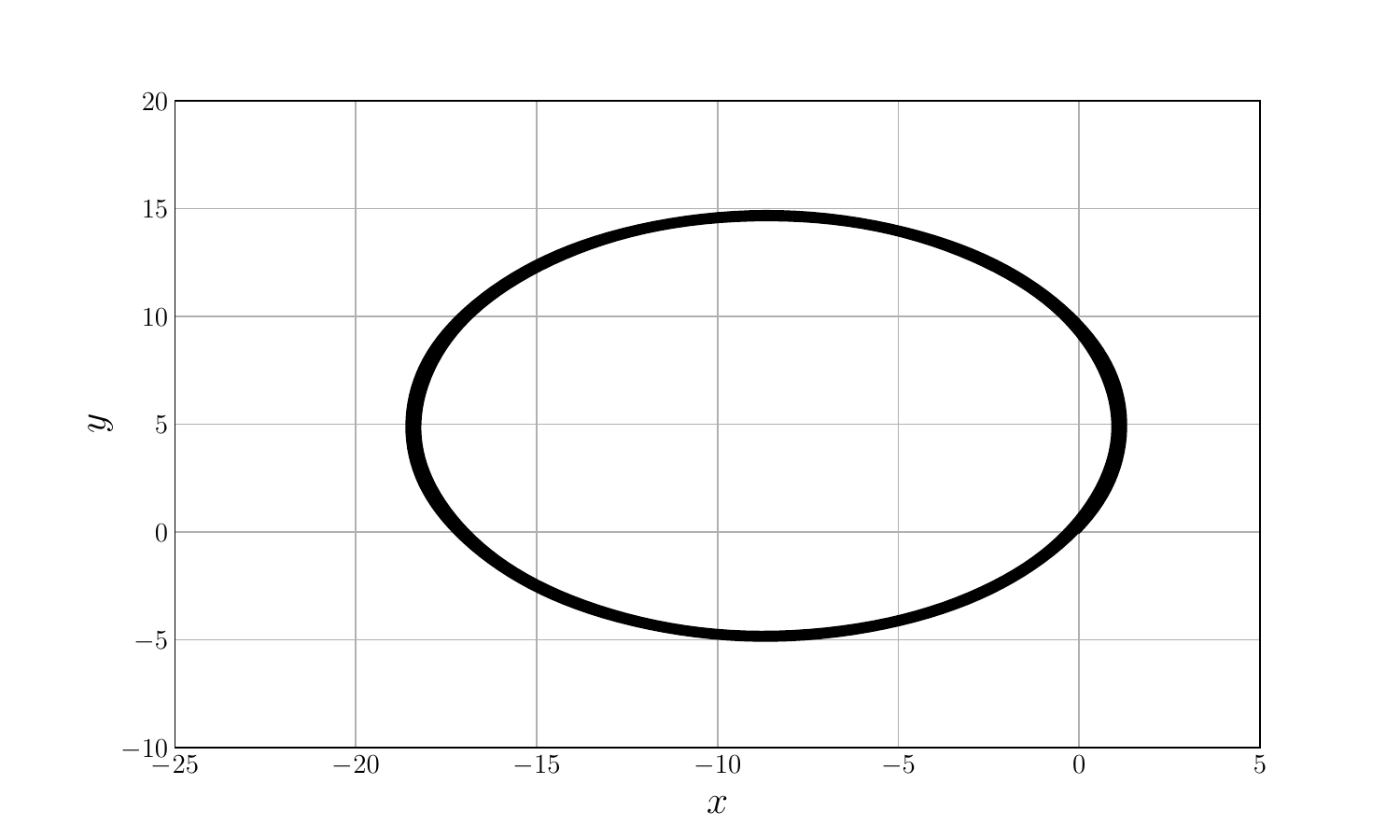}
\caption{Trajectory $(x(t), y(t))$ on a plane.}
\label{fig:vcoin_trajectory-}
\end{minipage}
\begin{minipage}{7cm}
\includegraphics[width=7cm]{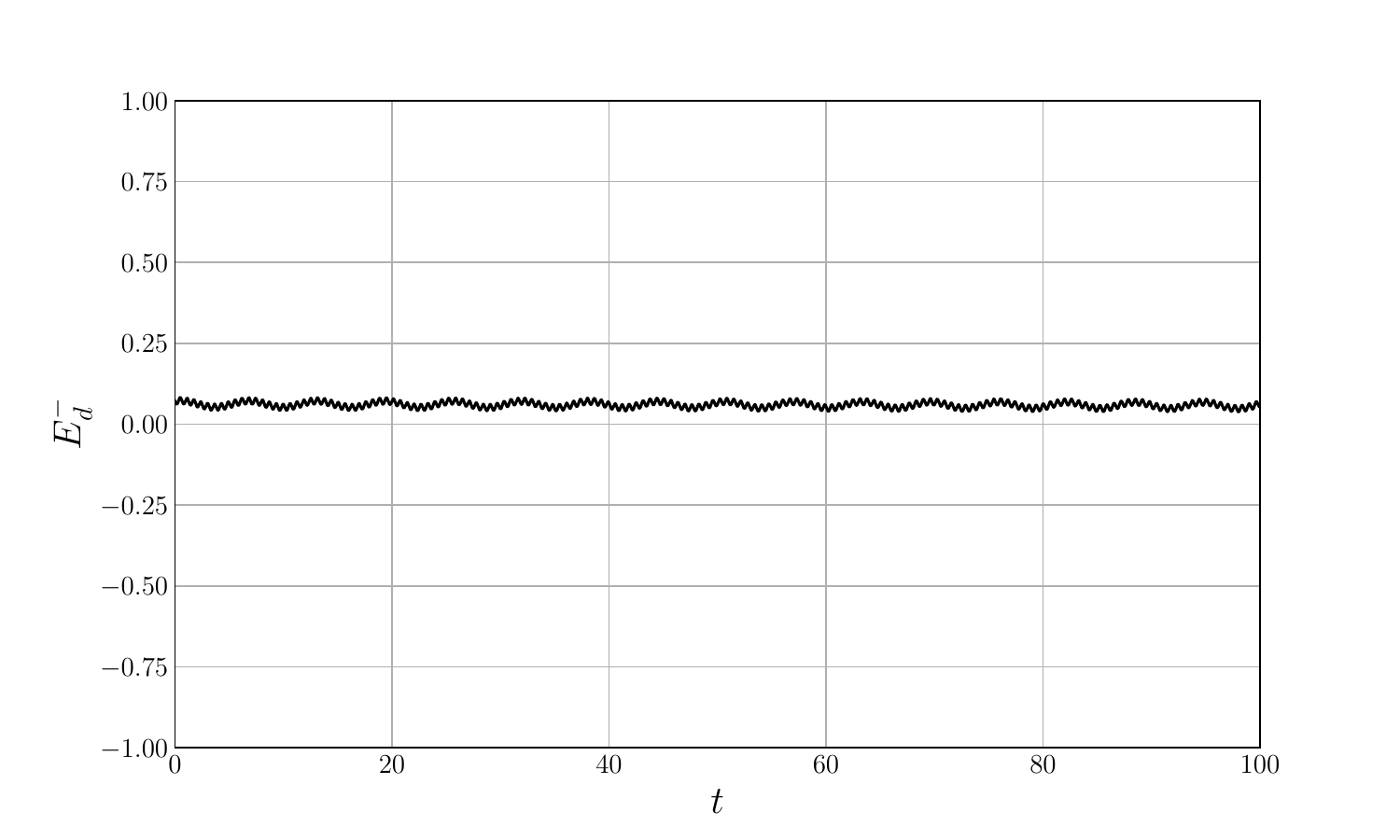}
\caption{Energy behavior $E_d^{-}(t)$.}
\label{fig:vcoin_energy-}
\end{minipage}
\end{figure}
\begin{figure}[htbp]
\centering
\;\;
\begin{minipage}{7cm}
\includegraphics[width=7cm]{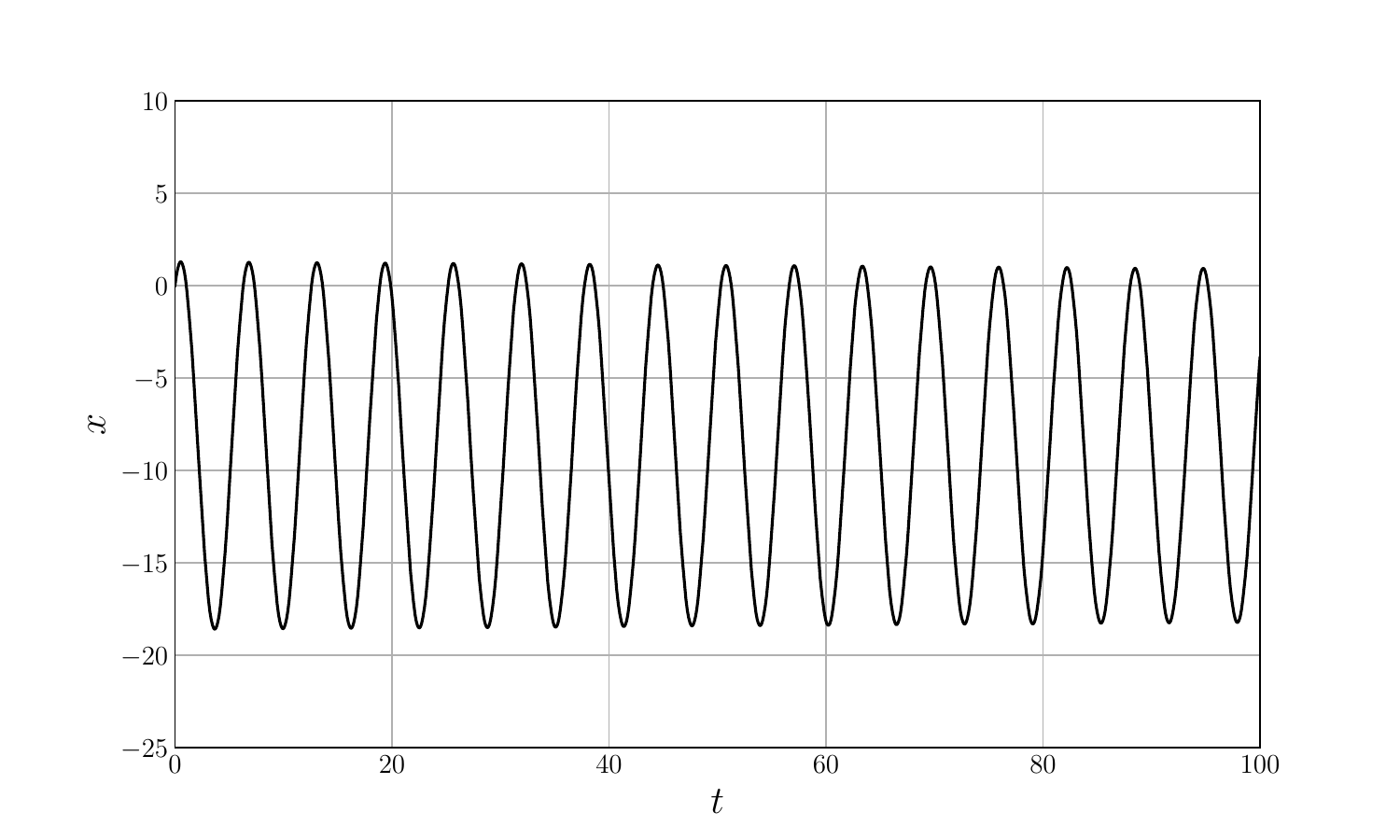}
\caption{Discrete path $x(t)$.}
\label{fig:vcoin_xt-}
\end{minipage}
\begin{minipage}{7cm}
\includegraphics[width=7cm]{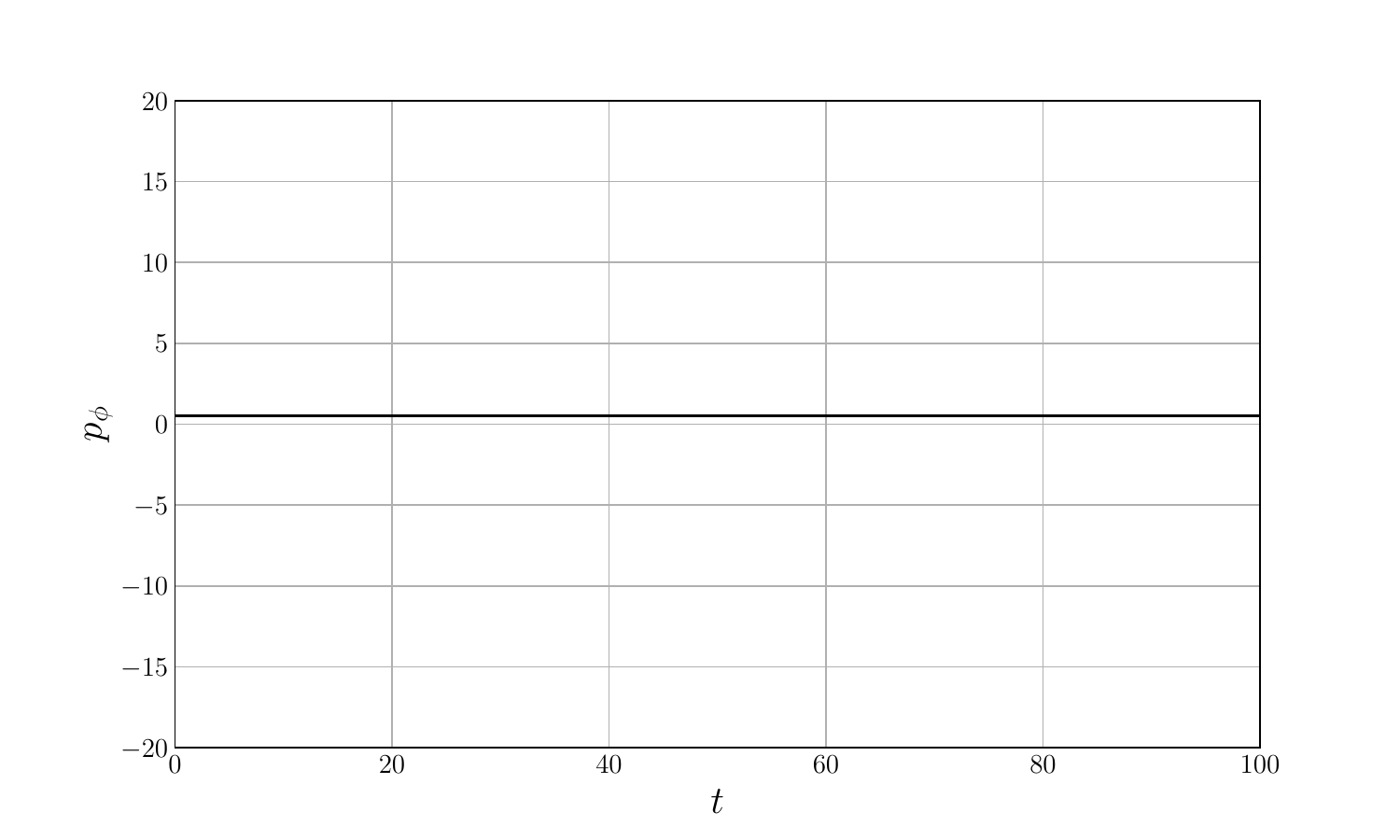}
\caption{Momentum $p_{\phi}(t)$.}
\label{fig:vcoin_pphit-}
\end{minipage}
\end{figure}

Figs. \ref{fig:vcoin_energy+} and \ref{fig:vcoin_energy-} exhibit good energy behavior, while Figs. \ref{fig:vcoin_pphit+} and \ref{fig:vcoin_pphit-} demonstrate the conservation of momentum $p_\phi$ well.

\subsection{The classical  Heisenberg system}
\paragraph{Continuous setting.} As another example, let us consider a classical Heisenberg system whose Lie algebra is similar to the Heisenberg algebra in quantum mechanics; see, e.g., \cite{Bl2003}. The configuration space is given by $Q=\mathbb{R}^3$, with local coordinates $q=(x,y,z) \in Q$, and the system has a Lagrangian on $TQ$ given by
\begin{equation}
L(q(t),\dot{q}(t))=\frac{1}{2}\left(\dot{x}^2(t)+\dot{y}^2(t)+\dot{z}^2(t)\right).
\end{equation}
The motion $q(t) \in Q, \; t \in [0, T] \subset \mathbb{R}^{+}$ of the system is subject to the nonholonomic constraint as 
$
\dot{q}=(\dot{x}, \dot{y}, \dot{z}) \in \Delta_{Q}(q) \subset T_{q}Q,
$
in which $\Delta_{Q}\subset TQ$ is the constraint distribution given by
\[
\Delta_{Q}(q)=\left\{\dot{q} \in T_{q}Q \mid \left< \omega(q), \dot{q} \right>=0 \right\},
\]
where $\omega(q)=dz-ydx+x dy$ is a differential form on $Q$. The nonholonomic constraint can be rewritten in local coordinates as
\begin{equation*}
\dot{z}(t)=y(t)\dot{x}(t)-x(t)\dot{y}(t),
\end{equation*}
whose annihilator $\Delta_{Q}^{\circ} \subset T^{\ast}Q$ is 
\[
\Delta_{Q}^{\circ}(q):=\left\{ \alpha \in T^{\ast}_{q}Q \mid \left<\alpha, \dot{q} \right>=0, \;\; \forall \dot{q} \in \Delta_{Q}(q)\right\},
\]
$\alpha \in \Delta_{Q}^{\circ}(q)$ is locally represented by using a Lagrange multiplier $\mu$ as $\alpha=\mu(dz-ydx+x dy)$.

Hence, the Lagrange--Dirac dynamical system in \eqref{LDAPeqn} is given by
\begin{equation*}
\begin{split}
&\dot{x}=v_{x},\quad \dot{y}=v_{y},\quad \dot{z}=v_{z},\quad p_{x}= v_{x}, \quad p_{y}= v_{y}, \quad p_{z}= v_{z}, \\
&\dot{p}_{x}=-\mu y,\quad \dot{p}_{y}=\mu x,\quad \dot{p}_{z}=\mu,\quad
\dot{z}=y\dot{x}-x\dot{y}.
\end{split}
\end{equation*}
Direct calculation shows that the energy $\frac{1}{2}\left(\dot{x}^2(t)+\dot{y}^2(t)+\dot{z}^2(t)\right)$ is a constant of motion. 
 
\paragraph{The $(+)$-discrete Lagrange--Dirac dynamical system.}
The discrete Lagrangian is given, in coordinates $(q_{k},v_{k}) \in Q \times Q$, by
\[
\begin{split}
L_{d}(q_{k},v_{k})&=h L\left(q_{k}, \frac{v_{k}-q_k}{h}\right)\\
&=\frac{h}{2}\left\{\left(\frac{(v_{x})_{k}-x_k}{h}\right)^{2}+
\left(\frac{(v_{y})_{k}-y_k}{h}\right)^{2} +\left(\frac{(v_{z})_{k}-z_k}{h}\right)^{2}\right\}.
\end{split}
\]
By using \eqref{plus_DiscLagDiracSys}, we get the equations of motion for the $(+)$-discrete Lagrange--Dirac dynamical system as
\[
\begin{split}
&(p_{x})_{k+1}=\frac{1}{h}\left\{(v_{x})_{k}-x_{k}\right\},\;\; (p_{y})_{k+1}=\frac{1}{h}\left\{(v_{y})_{k}-y_{k}\right\},\;\; (p_{z})_{k+1}=\frac{1}{h}\left\{(v_{z})_{k}-z_{k}\right\},\\
&(p_{x})_{k}-\frac{1}{h}\left\{(v_{x})_{k}-x_{k}\right\}= -\mu_{k}\,y_{k},\;\;
(p_{y})_{k}-\frac{1}{h}\left\{(v_{y})_{k}-y_{k}\right\}= \mu_{k}\, x_{k},\\
&(p_{z})_{k}-\frac{1}{h}\left\{(v_{z})_{k}-z_{k}\right\}= \mu_{k},\;\; x_{k+1}=(v_{x})_{k},\;\; y_{k+1}=(v_{y})_{k},\;\; z_{k+1}=(v_{z})_{k},\\
\end{split}
\]
together with $(q_{k},v_{k})\in \Delta_Q^{d+}$, namely, 
\[
\begin{split}
\frac{z_{k+1}-z_k}{h}=y_{k+1}\frac{x_{k+1}-x_k}{h}-x_{k+1}\frac{y_{k+1}-y_k}{h}.
\end{split}
\]
Thus, the $(+)$-discrete Lagrange--Dirac system are immediately obtained as follows:
 \begin{equation*}
 \begin{aligned}
& \frac{2x_k-x_{k-1}-x_{k+1}}{h}=-\mu_{k} y_k,\quad \frac{2y_{k}-y_{k-1}-y_{k+1}}{h}=\mu_{k} x_k,\quad \frac{2z_{k}-z_{k-1}-z_{k+1}}{h}=\mu_{k},\\
&\frac{z_{k+1}-z_k}{h}=y_{k+1}\frac{x_{k+1}-x_k}{h}-x_{k+1}\frac{y_{k+1}-y_k}{h}.
 \end{aligned}
 \end{equation*}
 
 \paragraph{The ($-$)-discrete Lagrange--Dirac dynamical system.}
The discrete Lagrangian is given in local coordinates $(v_{k+1},q_{k+1}) \in Q \times Q$ by
\[
\begin{split}
L_{d}&(v_{k+1},q_{k+1})=h L\left(v_{k+1},\frac{q_{k+1}-v_{k+1}}{h}\right)\\
&=\frac{h}{2}\left\{\left(\frac{x_{k+1}-(v_{x})_{k+1}}{h}\right)^{2}+\left(\frac{y_{k+1}-(v_{y})_{k+1}}{h}\right)^{2}+\left(\frac{z_{k+1}-(v_{z})_{k+1}}{h}\right)^{2}\right\}.
\end{split}
\]
By using \eqref{minus_DiscLagDiracSys}, the equations of motion for the $(-)$-discrete Lagrange--Dirac dynamical system are obtained as
\[
\begin{split}
&(p_{x})_{k}=\frac{1}{h}\left\{x_{k+1}-(v_{x})_{k+1}\right\},\;\; (p_{y})_{k}=\frac{1}{h}\left\{y_{k+1}-(v_{y})_{k+1}\right\},\;\; (p_{z})_{k}=\frac{1}{h}\left\{z_{k+1}-(v_{z})_{k+1}\right\},\\
&(p_{x})_{k}-\frac{1}{h}\left\{x_{k}-(v_{x})_{k}\right\}= -\mu_{k}\,y_{k},\;\;
(p_{y})_{k}-\frac{1}{h}\left\{y_{k}-(v_{y})_{k}\right\}= \mu_{k}\, x_{k},\\
&(p_{z})_{k}-\frac{1}{h}\left\{z_{k}-(v_{z})_{k}\right\}= \mu_{k},\;\;(v_{x})_{k+1}=x_{k},\;\; (v_{y})_{k+1}=y_{k},\;\; (v_{z})_{k+1}=z_{k},
\end{split}
\]
together with $(q_{k},v_{k})\in \Delta_Q^{d-}$, namely, 
\[
\begin{split}
\frac{z_{k+1}-z_k}{h}=y_{k}\frac{x_{k+1}-x_k}{h}-x_{k}\frac{y_{k+1}-y_k}{h}.
\end{split}
\]
Thus, the $(-)$-discrete Lagrange--Dirac system are immediately obtained as follows:
 \begin{equation*}
 \begin{aligned}
& \frac{2x_k-x_{k-1}-x_{k+1}}{h}=-\mu_{k} y_k,\quad \frac{2y_{k}-y_{k-1}-y_{k+1}}{h}=\mu_{k} x_k,\quad \frac{2z_{k}-z_{k-1}-z_{k+1}}{h}=\mu_{k},\\
&\frac{z_{k+1}-z_k}{h}=y_k\frac{x_{k+1}-x_k}{h}-x_k\frac{y_{k+1}-y_k}{h}.
 \end{aligned}
 \end{equation*}

Note that the obtained equations of motion for the $(\pm)$-discrete Lagrange--Dirac dynamical systems are identical to each other. It can be shown by direct calculation that the corresponding discrete  energy, the one obtained by applying the Legendre transforms to the discrete generalized energies $E_d^{\pm}$, 
\begin{equation*}
E_d(x_k,y_k,z_k,x_{k+1},y_{k+1},z_{k+1})=\frac{1}{h}\left((x_{k+1}-x_k)^2+(y_{k+1}-y_k)^2+(z_{k+1}-z_k)^2\right),
\end{equation*}
is  a constant of motion.

 \paragraph{Numerical tests.}
In the following numerical results, the step is set to be $h=0.01$ while the time interval is chosen as $[0,1000]$.
We use the same parameters as before except the initial condition. The initial condition is chosen to be
\begin{equation}\label{ic001}
 (x_{0},y_{0},z_{0})=(1.0,0.0,0.1),\quad (x_{1},y_{1},z_{1})=(1.05,0.1,0.0).
\end{equation}
We show the discrete path of the position  $(x(t),y(t),z(t))$ of the particle in Fig. \ref{fig:Heisenberg_position} and the  velocity $(v_x(t),v_y(t),v_z(t))$ in Fig. \ref{fig:Heisenberg_velocity}. We also present the  momentum $(p_x(t),p_y(t),p_z(t))$ in Fig. \ref{fig:Heisenberg_momentum} and the energy behaviors in Fig. \ref{fig:Heisenberg_enegy}, each of which illustrates that the momentum and the energy are well preserved, respectively.

\begin{figure}[htbp]
\centering
\;\;
\begin{minipage}{7cm}
\includegraphics[width=7cm]{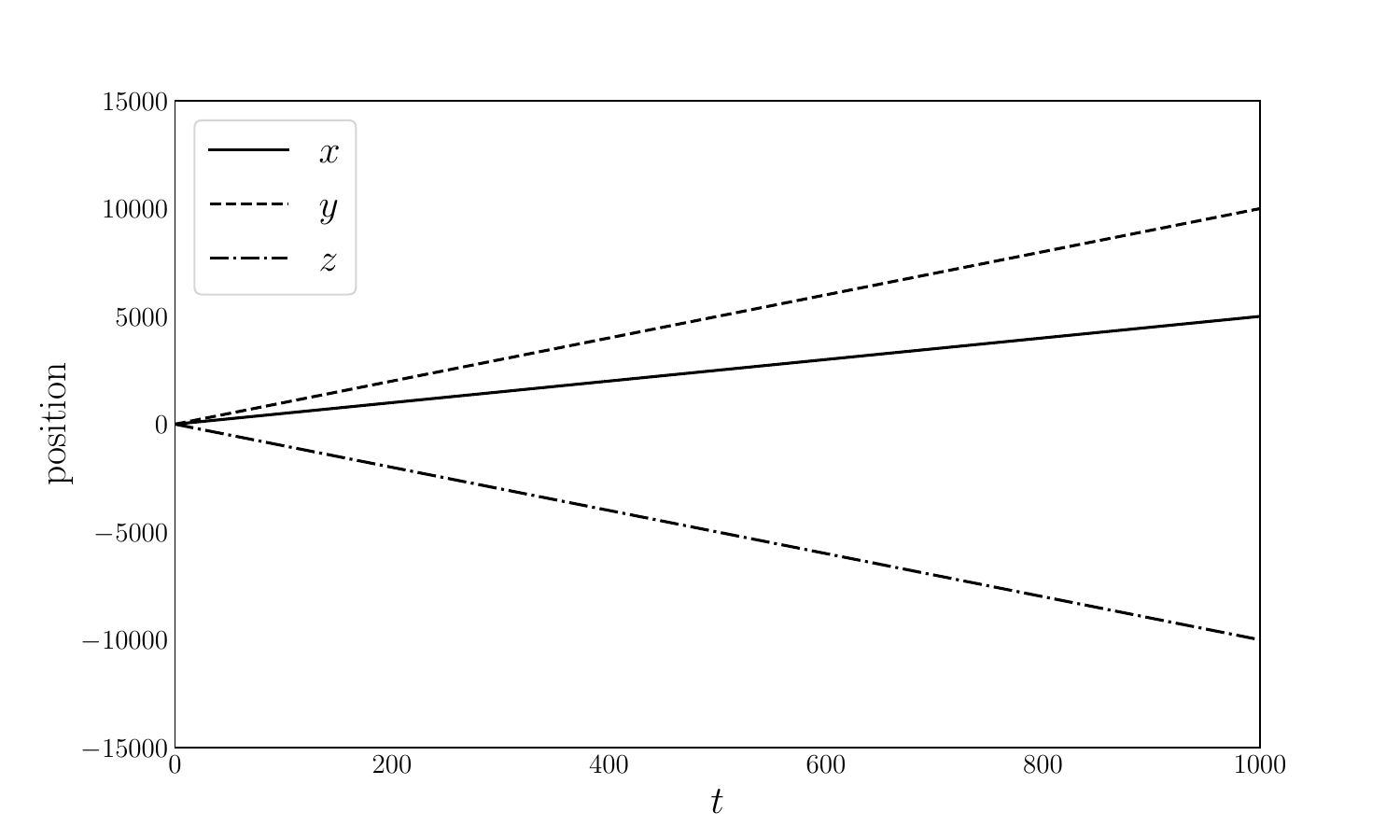}
\caption{Position $(x(t),y(t),z(t))$.}
\label{fig:Heisenberg_position}
\end{minipage}
\begin{minipage}{7cm}
\includegraphics[width=7cm]{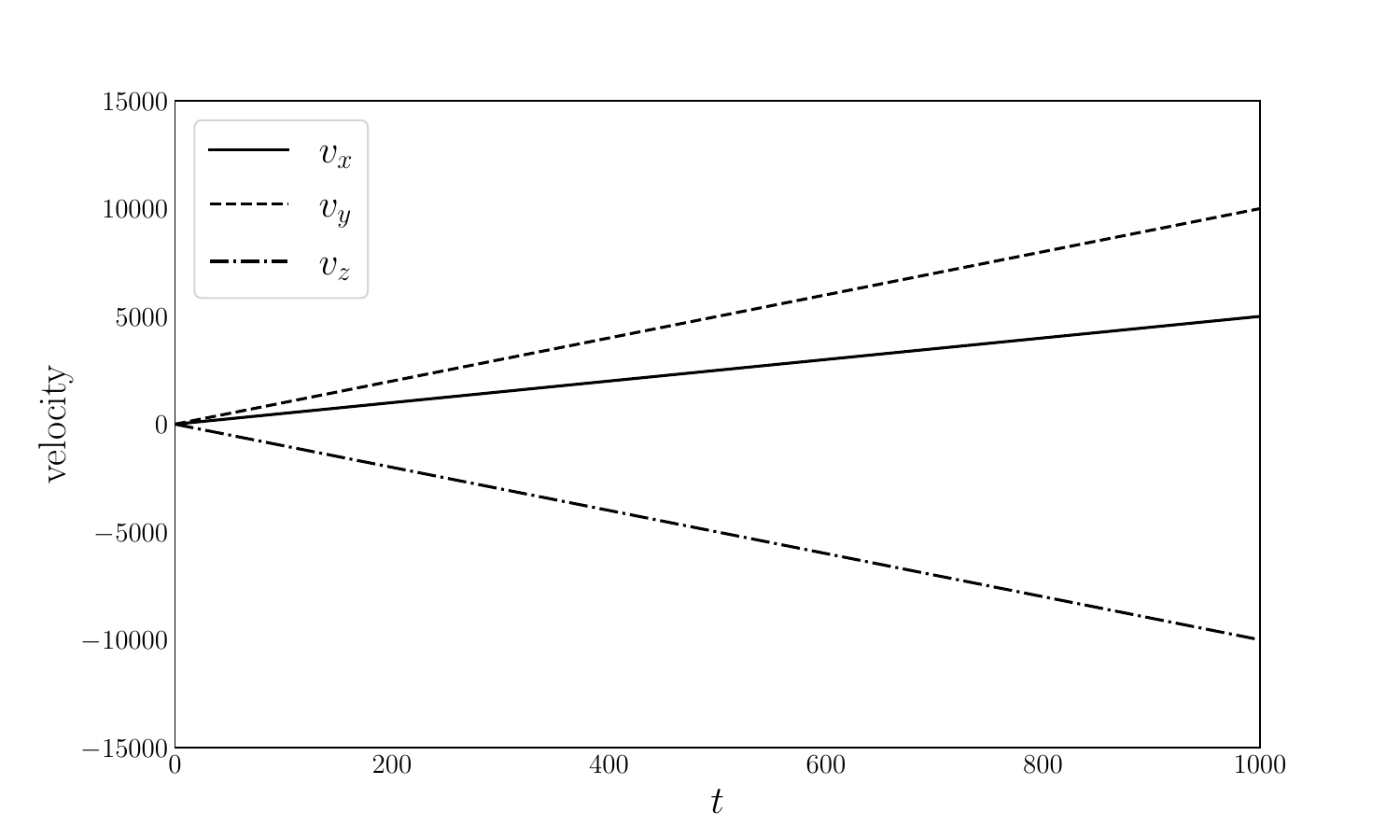}
\caption{Velocity $(v_x(t),v_y(t),v_z(t))$.}
\label{fig:Heisenberg_velocity}
\end{minipage}
%
\;\;
\begin{minipage}{7cm}
\includegraphics[width=7cm]{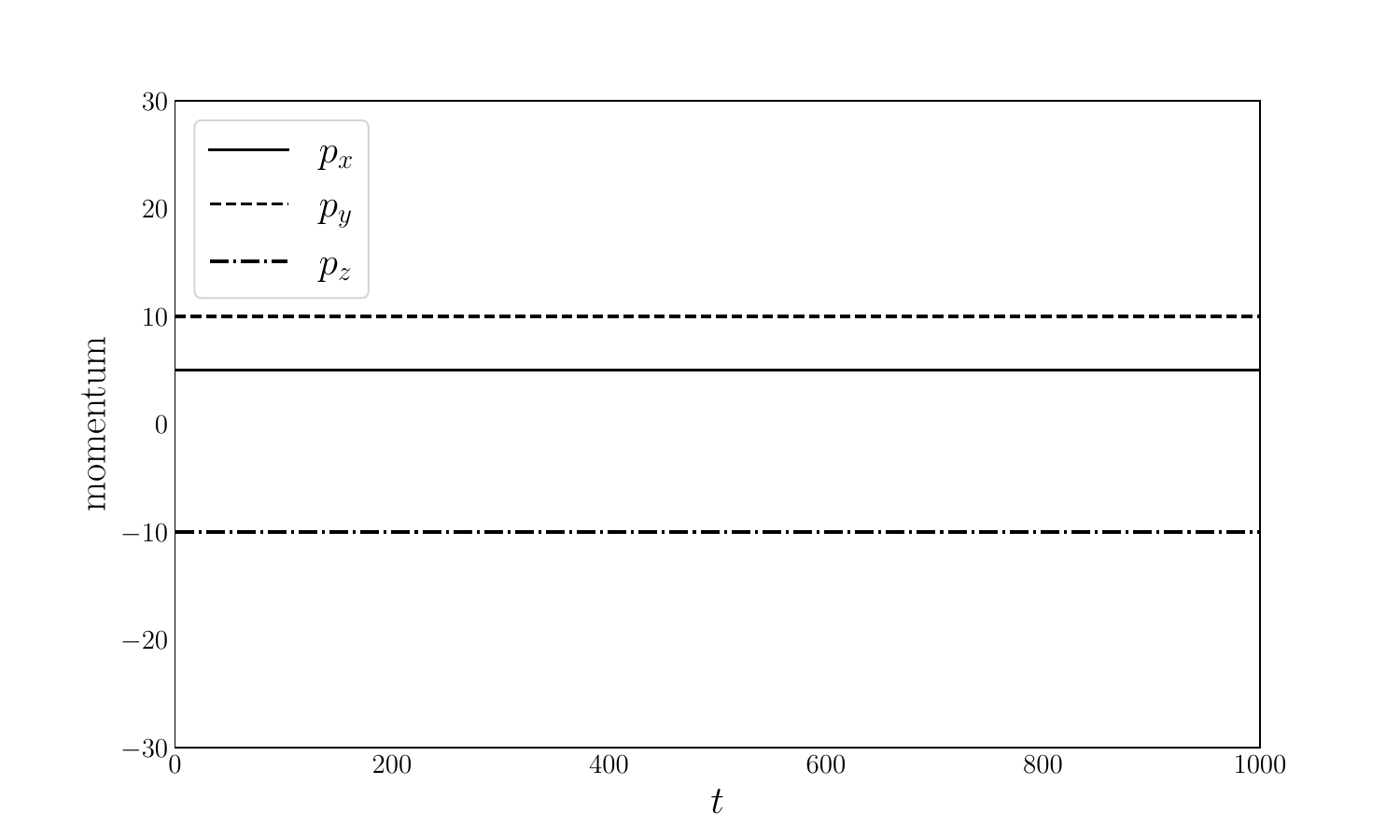}
\caption{Momentum $(p_x(t),p_y(t),p_z(t))$.}
\label{fig:Heisenberg_momentum}
\end{minipage}
\begin{minipage}{7cm}
\includegraphics[width=7cm]{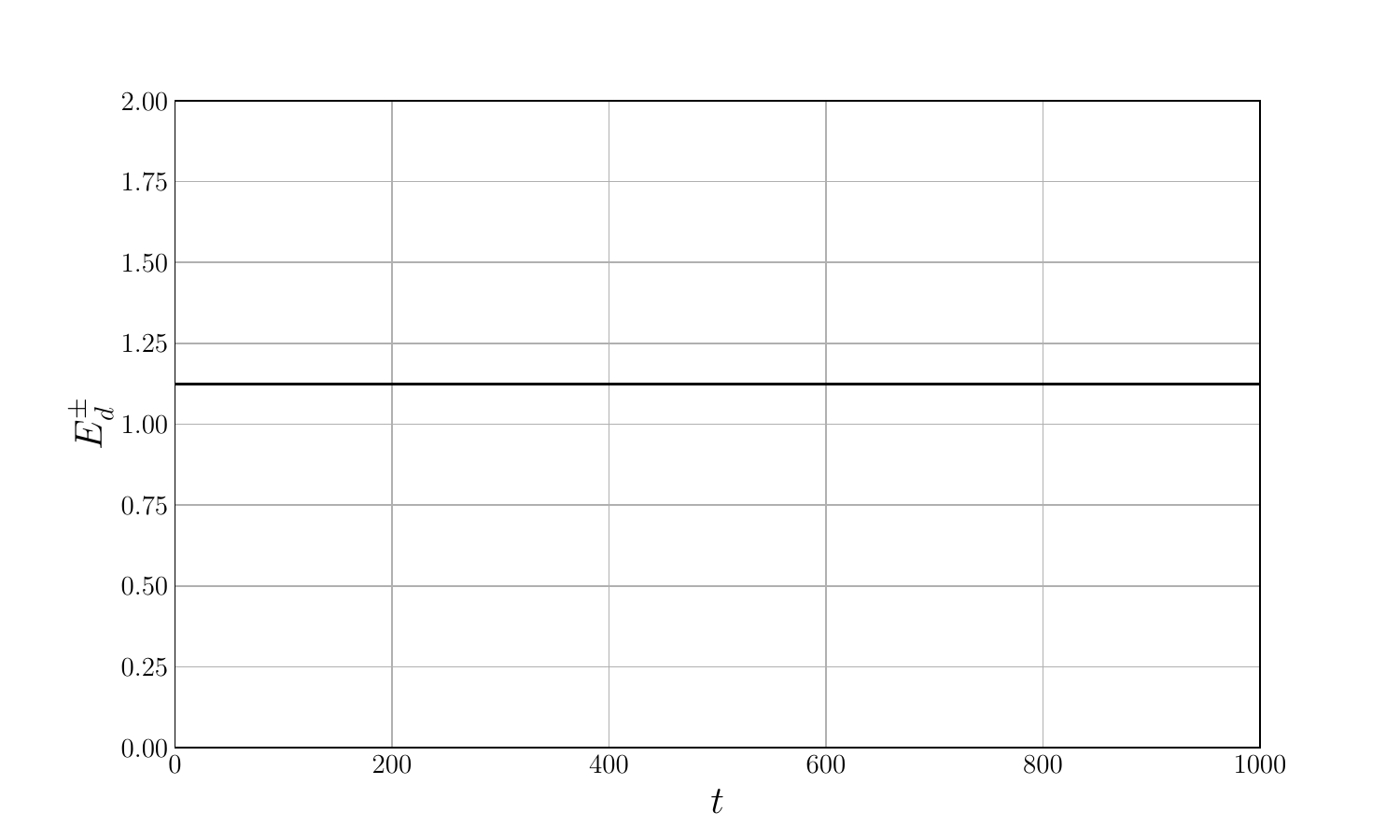}
\caption{Energy behaviors $E_d^{\pm}(t)$.}
\label{fig:Heisenberg_enegy}
\end{minipage}
\end{figure}

\section{Conclusions}
In this paper, we have proposed $(\pm)$-discrete Dirac structures $D_M^{d\pm}$ on a manifold $M$, which are induced from the discrete two-forms $\Omega^{d\pm}_{M}$ and the discrete constraint spaces $\Delta^{d\pm}_{M}$. These structures serve as discrete analogues of the Dirac structure $D_M$ on $M$. Specifically, we  have developed the $(\pm)$-discrete induced Dirac structure $D_{\Delta_Q}^{d\pm} \subset (T^\ast Q \times T^\ast Q) \oplus T^\ast T^\ast Q$ over the cotangent bundle $T^\ast Q$, corresponding to the continuous Dirac structure $D_{\Delta_Q} \subset TT^\ast Q \oplus T^\ast T^\ast Q$ described in \cite{YoMa2006a}. Here, the $(\pm)$-discrete canonical symplectic one-forms $\Theta^{d\pm}_{T^\ast Q}$ are introduced, and the $(\pm)$-discrete canonical symplectic structures $\Omega^{d\pm}_{T^\ast Q}$ are defined as $\Omega^{d\pm}_{T^\ast Q} = -\mathbf{d}\Theta^{d\pm}_{T^\ast Q}$.

The key idea involves using $(\pm)$-finite difference maps to obtain forward and backward finite difference approximations for $Q \times Q$, $T^\ast (Q \times Q)$, and $T^\ast Q \times T^\ast Q$, corresponding to $TQ$, $T^\ast TQ$, and $TT^\ast Q$, respectively. The base manifold $Q$ is consistently assigned from $Q \times Q$ using the $(\pm)$-Legendre transform $\mathbb{F}^\pm L_d: Q \times Q \to T^\ast Q$.

For the discrete Lagrange--Dirac dynamical systems, we have demonstrated the discrete analogue of the bundle structure between $T^\ast T^\ast Q$, $TT^\ast Q$, and $T^\ast TQ$, establishing a discrete diffeomorphism between $T^\ast T^\ast Q$, $T^\ast Q \times T^\ast Q$, and $T^\ast (Q \times Q)$. Then, we have derived the $(\pm)$-discrete Dirac differential of the discrete Lagrangian and have illustrated the $(\pm)$-discrete Lagrange--Dirac dynamical system using $(\pm)$-discrete evolution maps. Furthermore, we have shown that the discrete Lagrange--d'Alembert equations developed by \cite{CoMa2001, McPe2006} can be recovered from the $(\pm)$-discrete Lagrange--d'Alembert--Pontryagin equations. Finally, we have validated our theory of discrete Lagrange--Dirac dynamical systems through illustrative examples, including the vertical rolling disk on a plane and the classical Heisenberg system.
\bigskip

There are some interesting topics for future work that may be relevant with this paper as follows:
\begin{itemize}
\item Development of discrete Hamilton--Dirac dynamical systems: We do not know how the discrete Hamilton systems can be understood in the context of the $(\pm)$-discrete Dirac structures. In particular, we will explore how the symplectic Euler-A and Euler-B methods can be understood in the framework of discrete Hamilton--Dirac dynamical systems. 
\item Extension to the infinite dimensional Dirac dynamical systems: The exploration of this paper has been restricted to the finite dimensional cases, but there may be natural extensions to the infinite dimensional case of the $(\pm)$-discrete Lagrange--Dirac dynamical systems.   
\item Discretization of the Lie--Dirac reduction and the associated reduced discrete Lagrange--Dirac systems: We will develop the discrete analogue of the Euler--Poincar\'e--Suslov reduction and the Lie--Poisson--Dirac reduction (\cite{YoMa2009}).  
\item Applications to nonequilibrium thermodynamic systems: The variational discretization of simple closed nonequilibrium thermodynamic systems was explored in \cite{GBYo2018b}, while we do not understand how such a simple closed nonequilibrium thermodynamic system can be understood in the context of the $(\pm)$-discrete Lagrange--Dirac dynamical systems.  

\end{itemize}

\noindent \paragraph{Acknowledgements.} 
We are very grateful to Kenshin Okuwaki at Waseda University for his helpful support in numerical simulations. The first author's research is partially supported by JSPS Grant-in-Aid for Scientific Research (20K14365, 24K06852) and by Keio University (Fukuzawa Memorial Fund, KLL). The second author's research is partially supported by JST CREST (JPMJCR1914), JSPS Grant-in-Aid for Scientific Research (22K03443) and Waseda University Grants for Special Research Projects(SR 2022C-423, 2023C-089, SR 2024C-102). 

\end{document}